%% file: main.tex
\documentclass{article} % For LaTeX2e
\usepackage{placeins}
\usepackage{conference,times}
\usepackage[colorlinks]{hyperref}
\definecolor{MyCustomBlue}{HTML}{3389DA}
\definecolor{MyCustomPink}{HTML}{EE4F89}
\definecolor{CustomBlue}{HTML}{2643CF}

\hypersetup{
    citecolor=MyCustomBlue,
    linkcolor=MyCustomPink,
    urlcolor=CustomBlue
}
\newcommand{\appendixentry}[2]{%
  \noindent\hyperref[#1]{\ref*{#1}\quad #2}%
  \dotfill
  \hyperref[#1]{\pageref*{#1}}%
  \par
}

\newcommand{\appendixsubentry}[2]{%
  \noindent\hspace{2.2em}\hyperref[#1]{\ref*{#1}\quad #2}%
  \dotfill
  \hyperref[#1]{\pageref*{#1}}%
  \par
}
\input{math_commands.tex}

\usepackage{wrapfig}
\newcommand{\coretodo}[1]{\textbf{[TO FILL: #1]}}
\usepackage{hyperref}
\usepackage{url}
\usepackage{tikz}
\usepackage{capt-of}
\usepackage{microtype}
\usepackage{amssymb}
\usepackage{amsthm}
\usepackage[table]{xcolor}
\definecolor{coreblue}{RGB}{220,240,250}

\definecolor{bestred}{RGB}{220,0,0}
\newtheorem{theorem}{Theorem}

\newtheorem{corollary}[theorem]{Corollary}
\newtheorem{remark}[theorem]{Remark}

\usepackage{tcolorbox}
\definecolor{corebestbg}{HTML}{FFE6E6}
\definecolor{coresecondbg}{HTML}{FFF0D9}
\definecolor{corevenuebg}{HTML}{D4ECF8}

\newcommand{\corebest}[1]{%
  \begingroup
  \setlength{\fboxsep}{1pt}%
  \colorbox{corebestbg}{\strut\textbf{#1}}%
  \endgroup
}
\newcommand{\coresecond}[1]{%
  \begingroup
  \setlength{\fboxsep}{1pt}%
  \colorbox{coresecondbg}{\strut\underline{#1}}%
  \endgroup
}

\newtcbox{\corevenuebox}{
  on line,
  boxrule=0pt,
  colback=corevenuebg,
  colframe=corevenuebg,
  arc=1.5pt,
  left=2pt,
  right=2pt,
  top=0.5pt,
  bottom=0.5pt,
  boxsep=0pt
}
\newcommand{\corevenuetag}[1]{%
  \nobreak\hspace{0.35em}%
  \corevenuebox{{\scriptsize\textcolor{black!80}{#1}}}%
}
\definecolor{theorempropbg}{RGB}{242,249,255}
\definecolor{theorempropborder}{RGB}{90,150,220}

\newtcolorbox{theorembox}{
  colback=theorempropbg,
  colframe=theorempropborder,
  boxrule=1.2pt,
  arc=2.5mm,
  left=0.5mm,
  right=0.5mm,
  top=0.3mm,
  bottom=0.3mm,
  before skip=5pt,
  after skip=5pt
}

\definecolor{propositionpropbg}{RGB}{244,252,247}
\definecolor{propositionpropborder}{RGB}{72,155,105}

\newtcolorbox{propositionbox}{
  colback=propositionpropbg,
  colframe=propositionpropborder,
  boxrule=1.2pt,
  arc=2.5mm,
  left=0.5mm,
  right=0.5mm,
  top=0.3mm,
  bottom=0.3mm,
  before skip=5pt,
  after skip=5pt
}

\title{One Distribution, Two Channels: Coverage-Calibrated KV Retention and Compensation}

\author{
Shuxin Liu\thanks{Equal contribution.}
\quad
Qing Liu\footnotemark[1]
\quad
Yi Du
\quad
Ou Wu\thanks{Corresponding author}
\\
Hangzhou Institute for Advanced Study, \\
University of Chinese Academy of Sciences
\\
{\normalfont\small
\texttt{\{liushuxin25,liuqing25\}@mails.ucas.ac.cn}
\quad
\texttt{wuou@ucas.ac.cn}}
}

\usepackage{booktabs}
\usepackage{graphicx}
\usepackage{algorithm}
\usepackage{algpseudocode}
\usepackage{float}
\usepackage{enumitem}
\usepackage{amsthm}
\newtheorem{proposition}{Proposition}
\usepackage{multirow}
\usepackage{comment}

\iclrfinalcopy
\begin{document}
\maketitle
\begin{abstract}
Long-context decoding is increasingly constrained by key--value (KV) cache memory and bandwidth. Existing fixed-budget compression methods typically separate retention from compensation, while a retention ranking specifies neither discarded attention mass nor the direction of induced output error. We start from an exact factorization: eviction error equals evicted attention \emph{mass} times the \emph{directional gap} between the evicted centroid and retained output, highlighting the importance of set-level coverage in retention and mass-preserving memory writing. We introduce \textbf{CORE} (\underline{\textbf{CO}}verage Calibration and Evicted-Mass \underline{\textbf{RE}}distribution for KV Cache), which distills an offline allocation combining query utility and log-determinant coverage into a lightweight cache-aware indexer. At inference, one calibrated distribution drives both channels: its Top-$B$ ordering retains complementary KV states, while its excluded allocation mass and conditional weights parameterize latent-memory writes without a separate write-weight predictor or online log-determinant evaluation. Our analysis provides a four-term pre-compensation error certificate, characterizes non-additive coverage interactions, and establishes mass-independent
write stability with hierarchical bounds through recurrent and
query-adaptive normalization. Across three backbones, CORE exceeds the strongest RULER baseline by up to 3.78 points at 90\% compression; LongBench and repeated-eviction evaluations further demonstrate strong effectiveness and decoding efficiency. \href{https://anonymous.4open.science/r/CORE-4D1F}{Code}.
\end{abstract}

\section{Introduction}
\label{sec:intro}

Long-context inference uses KV caching to reuse historical states during decoding~\citep{vaswani2017attention,shazeer2019fast,pope2023efficiently}, yet increasing context lengths make the cache a primary GPU-memory and bandwidth bottleneck~\citep{liu2024kivi,hooper2024kvquant,brandon2024reducing,behnam2025rocketkv}. Fixed-budget eviction mitigates this cost by retaining a subset of states~\citep{zhang2023h2o,li2024snapkv}, while recent work augments learned retention with latent memory for evicted information~\citep{yang2026indexmem}. Yet retention and compensation are typically treated separately, while a retention ranking alone does not fully characterize the error induced by eviction. This leaves two coupled questions: \textbf{which states should remain, and how should evicted information be represented?} Fig.~\ref{fig:overview} motivates a unified view of retention and compensation through the structure of eviction error.

Most eviction pipelines answer the first question with a ranking. A ranking determines an ordering, but it does not quantify \emph{how much} attention is removed or \emph{where} the induced output deviation points.  For an evicted set $E$ and retained cache $M$, this deviation for any query $q$ obeys \textbf{the exact identity}
\begin{equation}
\Delta o(q) := o_{\mathrm{full}}(q)-o_M(q)
=\pi_E(q)\bigl(\bar v_E(q)-o_M(q)\bigr),
\label{eq:intro-identity}
\end{equation}
where $\pi_E(q)$ denotes evicted attention mass,
$\bar v_E(q)$ its attention-weighted centroid, and
$o_M(q)$ the retained output. The identity follows from the normalized additive form of softmax attention; related work studies eviction-induced directional mismatch~\citep{li2026momentkvclosingdirectionalgap}. We use it as a design principle: retention must reduce the mass and directional factors jointly, whereas compensation must represent both. \textbf{A calibrated allocation distribution provides this shared interface:}
excluded mass measures discarded allocation, while its conditional restriction over evicted states specifies composition.

\begin{figure}[t]
	\centering
\includegraphics[width=0.95\linewidth]{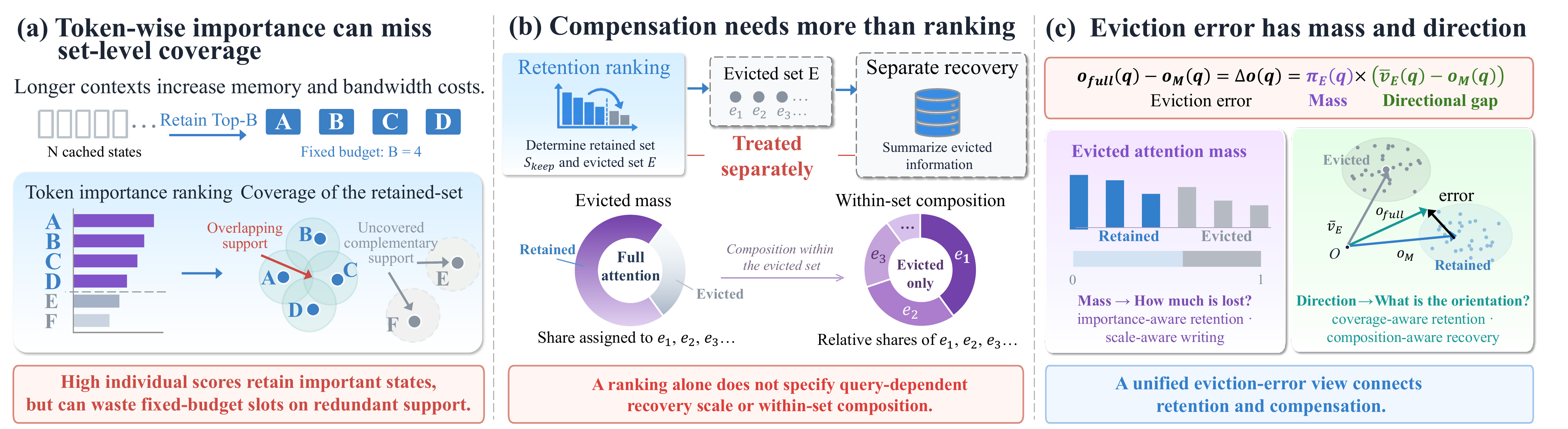}
\vspace{-0.15in}
	\caption{From separate retention and compensation to a unified eviction error perspective}
    \vspace{-0.2in}
    \label{fig:overview}
\end{figure}

The identity also exposes why token-wise selection without
accounting for coverage interactions is insufficient under a fixed budget. Importance scores emphasize mass and redundancy scores direction, but high individual scores do not ensure complementary coverage within the retained set. Redundant high-scoring states can therefore occupy multiple slots while adding little complementary information, making retention inherently a set-level coverage problem. Log-determinant objectives capture these interactions~\citep{kulesza2012determinantal}, but direct subset optimization repeatedly evaluates marginal gains~\citep{han2020map,hemmi2022lazy}; lightweight proxies approximate them cheaply but lose set dependence~\citep{park2026keydiff,cai2026r}, whereas cache-conditioned matrix criteria retain these interactions at online cost~\citep{yang2026rethinking}. Compensation faces a coupled challenge: a ranking carries neither scale nor composition required by Eq.~\ref{eq:intro-identity}, so separately parameterized memory can disagree with the retention policy that produced the eviction.

We introduce \textbf{CORE} (\underline{\textbf{CO}}verage Calibration and Evicted-Mass \underline{\textbf{RE}}distribution for KV Cache), which addresses the coupled challenges through one coverage-calibrated distribution. Offline, CORE combines query-conditioned utility with log-determinant coverage into a soft teacher allocation and distills it into a lightweight cache-aware indexer. At inference, this deployed distribution serves both channels: its Top-$B$ ordering determines retention, while its excluded mass and conditional allocation parameterize latent-memory writes without a separate compensation weighting rule or online matrix evaluation. Consistent with the scalar--direction identity, CORE reads from memory and gates the
result to compensate for eviction. Preserving excluded mass at the write level avoids sensitivity that
grows as this mass decreases. Our contributions are summarized as follows:
\begin{itemize}
[leftmargin=*,nosep]

\item
We start from an exact eviction-error identity that factorizes output
deviation into evicted mass and directional gap, exposing the joint
requirements of retention and compensation, showing normalized
aggregation readouts satisfy this exact factorization, and establishing
its converse characterization.

\item
We develop CORE, which distills offline set-level log-determinant coverage into a lightweight cache-aware indexer and uses one deployed distribution as two channels: Top-$B$ retention and evicted-mass redistribution, without an independent compensation weighting rule.

\item
Our theory establishes a four-term eviction certificate for
utility, coverage, allocation, and selection, characterizes non-additive
coverage interactions, and shows mass-preserving writes avoid inverse-mass
amplification, with stability via recurrent and query-adaptive normalization.

\item
Across long-context benchmarks and three backbones, CORE gains up to 3.78 points at 90\% compression while remaining robust to repeated decode-time compression with low online overhead.

\end{itemize}

\vspace{-0.05in}
\section{Related Work}
\vspace{-0.05in}
\label{sec:related}

\textbf{KV eviction.}
Fixed-budget KV management explores importance, recency, future-aware,
and context-conditioned retention signals
~\citep{zhang2023h2o,liu2023scissorhands,xiao2024efficient,
li2024snapkv,devoto2025expectedattention,ge2024model,
wu-etal-2025-scope,zhang-etal-2026-lazyeviction}.
Recent methods improve efficiency through adaptive policies,
retrieval, quantization, and learned retainers
~\citep{yang2024pyramidinfer,feng2026ada,zhou-etal-2025-dynamickv,
tang2024quest,ribar2024sparq,liu2024kivi,hooper2024kvquant,
huang2025locret,moschella2026learning,bui2026cache,
zhou2026lkv,yang2026indexmem}.
These methods provide diverse retention signals and budget policies. CORE focuses on distilling cache-conditioned coverage into an allocation determining evicted-state writing weights.

\textbf{Diversity and coverage in KV retention.}
Recent work augments token utility with diversity and redundancy reduction
to model retained-state interactions
~\citep{cai2026r,chang2026valueawarestochastickvcache,
kumari2024bumblebee,roy2026coveragedrivenkvcacheeviction,
tian2026every,liu2026mixing}.
Lightweight proxies trade fidelity for efficiency, while log-determinant
objectives preserve set-level coverage at the cost of marginal-gain
evaluation
~\citep{kulesza2012determinantal,han2020map,hemmi2022lazy}.
Information-theoretic formulations preserve cache capacity under fixed
budgets~\citep{yang2026rethinking}. Evaluation audits emphasize matching query visibility and physically enforced memory budgets~\citep{shi2026trustthemass}.
CORE transfers set-level coverage to deployment through
cache-conditioned distillation.

\textbf{Eviction compensation.}
Memory-backed methods reduce eviction loss via recallable pages,
clusters, and associative memories
~\citep{chen2024arkvale,liu2025clusterkv,
swain2026tensorcacheevictionconditionedassociative}; residual-cache
methods reconstruct omitted attention under fixed KV budgets
~\citep{zhan2026reskv}. CaM merges evicted values into retained
entries~\citep{zhang2024cam}.
Eq.~\ref{eq:intro-identity}
provides a common lens for comparison.  MOMENTKV uses evicted-set statistics to reduce directional mismatch,
whereas IndexMem separately parameterizes learned retention and
latent memory~\citep{li2026momentkvclosingdirectionalgap,yang2026indexmem}.
CORE couples retention and compensation through one allocation
distribution.

\vspace{-0.05in}
\section{Method}
\vspace{-0.05in}
\label{sec:method}

\begin{figure*}[t]
  \centering
  \includegraphics[width=0.94\textwidth]{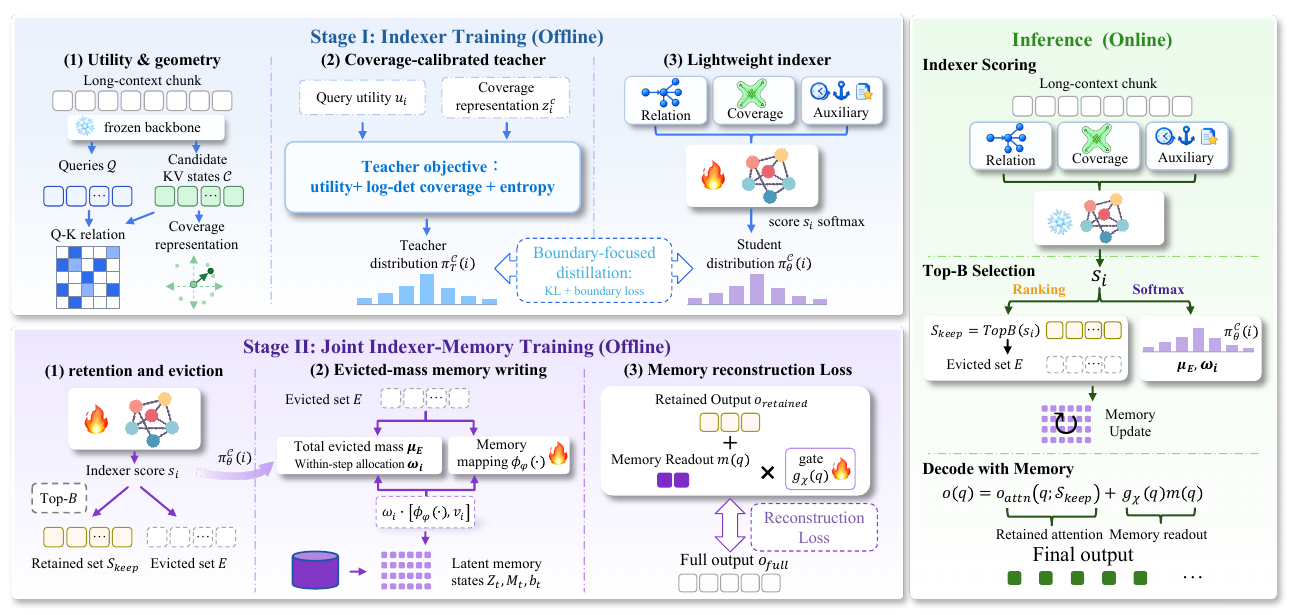}
  \vspace{-0.17in}
  \caption{
CORE couples KV retention and compensation via a coverage-calibrated distribution.
}
  \label{fig:core-overview}
  \vspace{-0.25in}
\end{figure*}

CORE follows three consequences of the eviction-error identity:
\emph{(i) mass-only retention reduces to importance-oriented scoring, while the directional term makes selection set-dependent; (ii) the excluded allocation determines write mass and composition; and (iii) scalar--direction residuals motivate gated correction.}
These consequences define the principles governing retention and compensation, so CORE distills offline log-determinant coverage into one lightweight allocation whose Top-$B$ and excluded restrictions drive retention and memory writing.
See Fig.~\ref{fig:core-overview} for an overview.

\vspace{-0.05in}
\subsection{Coverage-Calibrated Soft Allocation and Retention Distillation}
\label{sec:retention}
\vspace{-0.05in}
At Transformer layer $\ell$, let $\mathcal C=\{1,\ldots,N\}$ denote the
candidate KV positions, with $(k_i^{(m)},v_i^{(m)})$ the key--value states
for position $i$ and head $m$. Given a query set
$\mathcal Q$ and retention budget $B<N$, CORE learns a
coverage-calibrated allocation over $\mathcal C$; sorting this allocation
supplies the retained set.

\textbf{Query utility and key-space representation.}
We first derive an attention-based utility signal from the frozen backbone
to quantify the query-dependent importance of candidate KV state $i$:
\vspace{-0.05in}
\begin{equation}
a_{t,h,i}
=
\frac{(q_t^{(h)})^\top k_i^{(g(h))}}
{\sqrt{d_{\mathrm{head}}}},
\qquad
u_i = \max\nolimits_{t\in\mathcal Q:\,i\le t,\;m}
\operatorname{TopMean}_{h:g(h)=m} a_{t,h,i},
\label{eq:query-utility}
\vspace{-0.1in}
\end{equation}
where $q_t^{(h)}$ denotes the query state at position $t$ with head $h$,
and $g(h)$ maps query heads to shared KV heads.
$\operatorname{TopMean}$ aggregates the largest $K_{\mathrm{head}}$
responses within each KV group to preserve strong query--key matches
while reducing head-specific noise. During prefill, $\mathcal Q$ contains
prompt queries; during decode-time compression, it contains queries
generated in the current compression interval.

Query utility measures relevance to queries but does not
encode interactions among candidates. We therefore construct a compact
key-space representation in which set-level coverage is estimated:
\vspace{-0.05in}
\begin{equation}
\bar{k}_i
=
\operatorname{Concat}_{m=1}^{H_{\mathrm{kv}}}
\left[
\frac{k_i^{(m)}}{\|k_i^{(m)}\|_2+\epsilon}
\right],
\qquad
z_i
=
\operatorname{Normalize}
\left(P_k^{(\ell)}\bar{k}_i\right)
\in\mathbb{R}^{d_z},
\label{eq:coverage-representation}
\end{equation}
where $P_k^{(\ell)}$ is a fixed, layer-specific random projection into a
compact key space, and
$z_i^c=z_i-\frac{1}{N}\sum_j z_j$ centers each representation relative to
the current candidate set.

\textbf{Coverage-calibrated teacher.}
Using the query utility $u_i$ and cache-centered representation $z_i^c$,
we define a soft allocation that jointly captures query relevance and
set-level directional coverage:
\begin{equation}
\begin{aligned}
w^\star
=
\arg\max_{w\in\Delta_N}
\quad
&
\sum\nolimits_{i=1}^{N} w_i u_i
+
\beta
\log\det
\left(
I+\alpha\sum \nolimits_i w_i z_i^c(z_i^c)^\top
\right)
-
\tau_{\mathrm{ent}}
\sum\nolimits_{i=1}^{N} w_i\log w_i ,
\end{aligned}
\label{eq:teacher-objective}
\end{equation}
where $\Delta_N=\{w\in\mathbb{R}_+^N:\sum_i w_i=1\}$.
The objective unifies relevance, cache-conditioned coverage, and
entropy in one allocation. CORE approximates its optimizer
with $R$ fixed-point updates:
{\small
\begin{equation}
\begin{aligned}
\Sigma^{(r)}
&=
I+\alpha\sum\nolimits_i w_i^{(r)}z_i^c(z_i^c)^\top,
&
d_i^{(r)}
&=
\alpha(z_i^c)^\top(\Sigma^{(r)})^{-1}z_i^c,
&
w_i^{(r+1)}
&=
\operatorname{softmax}_i
\left(
\frac{u_i+\beta d_i^{(r)}}{\tau_{\mathrm{ent}}}
\right).
\end{aligned}
\label{eq:teacher-update}
\end{equation}}
Starting from a uniform allocation, CORE sets
$\pi_T^{\mathcal C}(i)=w_i^{(R)}$. Since $\Sigma^{(r)}$ captures
directions weighted by allocation,
$d_i^{(r)}$ provides a cache-conditioned coverage marginal that discounts
redundancy and favors under-represented directions. Derivation and
finite-step analysis see Appendix~\ref{app:teacher-solver}.

\textbf{Feature-based indexer.}
Evaluating the coverage teacher online would require repeated
cache-dependent matrix operations. CORE distills its set-level
structure into a lightweight cache-aware indexer. For each candidate $i$,
we construct
$\xi_i=[\xi_i^{\mathrm{rel}};\xi_i^{\mathrm{cov}};\xi_i^{\mathrm{aux}}]$,
where the three groups encode query relevance, cache geometry, and
deployment state. A lightweight scorer $f_\theta$ produces
\begin{equation}
s_i=f_\theta(\xi_i),
\qquad
\pi_\theta^{\mathcal C}(i)
=
\frac{\exp(s_i/\tau_M)}
{\sum_{j\in\mathcal C}\exp(s_j/\tau_M)},
\label{eq:student-distribution}
\end{equation}
For KL distillation, protected sinks are masked and teacher and student
allocations are renormalized over the remaining candidates; notation is
unchanged in the loss below.

\emph{Relation features.}
Cache-baseline normalization of query--key responses yields a relevance signal:
\begin{equation}
r_{t,h,i}
=
\log
\left(
|\mathcal C_t|
\frac{\exp(a_{t,h,i})}
{\sum_{j\in\mathcal C_t}\exp(a_{t,h,j})}
+\epsilon
\right),
\qquad
R_{t,m,i}
=
\operatorname{TopMean}_{h:g(h)=m}r_{t,h,i}.
\label{eq:relation-signal}
\end{equation}
Statistics of $R_{t,m,i}$ across queries and KV groups form
$\xi_i^{\mathrm{rel}}$ on that cache-relative scale.

\emph{Coverage features.}
To capture cache geometry without reconstructing the teacher matrix online,
CORE uses a fixed spherical prototype bank
$\mathcal P=\{p_1,\ldots,p_{M_p}\}$ over cache-centered directions.
With $\widehat z_i^c=z_i^c/\max(\|z_i^c\|_2,\epsilon_{\mathrm{geo}})$,
for reliable directions,
\begin{equation}
\xi_i^{\mathrm{cov}}
=
\left[
\log(1+n_{i,1}),
\;
1-(\widehat z_i^c)^\top p_{i,1},
\;
(\widehat z_i^c)^\top p_{i,1}
-
(\widehat z_i^c)^\top p_{i,2},
\;
g_i^{\mathrm{dev}}
\right],
\label{eq:coverage-features}
\end{equation}
where $p_{i,1}$ and $p_{i,2}$ are the two nearest prototypes and
$n_{i,1}$ is the occupancy of the nearest prototype. Together these
features summarize local crowding, directional residual, assignment
margin, and global cache-relative deviation, providing the compact
interface through which cache-conditioned coverage is distilled. Finally, $\xi_i^{\mathrm{aux}}$ records token age, sink or special-token
status, value energy, and prefill/decode phase. Full feature definitions
and indexer details appear in
Appendix~\ref{app:feature-details}.

\textbf{Boundary-focused distillation.}
Under this convention, CORE matches teacher and student allocations by KL
and adds boundary supervision at the discrete Top-$B$ cutoff:
{\small
\begin{equation}
\begin{aligned}
\mathcal{L}_{\mathrm{idx}}
&=
D_{\mathrm{KL}}
\left(
\pi_T^{\mathcal C}
\,\middle\|\,
\pi_\theta^{\mathcal C}
\right)
+
\lambda_{\mathrm{bd}}\mathcal{L}_{\mathrm{bd}},
&
\mathcal{L}_{\mathrm{bd}}=
\frac{1}{|P_{\mathrm{bd}}||N_{\mathrm{bd}}|}
\sum\nolimits_{i\in P_{\mathrm{bd}}}
\sum\nolimits_{j\in N_{\mathrm{bd}}}
\left[
m_{\mathrm{bd}}
-
(\widetilde{s}_i-\widetilde{s}_j)
\right]_+ .
\end{aligned}
\label{eq:indexer-loss}
\end{equation}}
Here $\widetilde{s}_i=s_i/\tau_M$, and
$P_{\mathrm{bd}},N_{\mathrm{bd}}$ are teacher-ranked windows
bracketing the Top-$B$ cutoff.
The boundary term penalizes local retain--evict inversions
and complements KL matching.
Prefill averages boundary losses over compression ratios;
single-budget decode supervision uses the ratio sampled
for the current optimizer step.
Details and retention-fidelity analysis appear in
Appendices~\ref{app:boundary-loss},~\ref{app:joint-fidelity}.

\subsection{Evicted-Mass Redistribution}
\label{sec:emr}

The compensation channel follows the mass--direction motivation of
Eq.~\ref{eq:intro-identity}. CORE couples retention and compensation
through one allocation: Top-$B$ selects the retained set
$\mathcal S_{\mathrm{keep}}=\operatorname{TopB}
(\{s_i\}_{i\in\mathcal C})$,
while the allocation restricted to the evicted set
$E=\mathcal C\setminus\mathcal S_{\mathrm{keep}}$
determines memory write. Accordingly,
\begin{equation}
\begin{aligned}
\mu_E
&=
\sum \nolimits_{i\in E}\pi_\theta^{\mathcal C}(i),
&
\omega_i
=
\pi_\theta^{\mathcal C}(i)/{\mu_E},
\quad i\in E .
\end{aligned}
\label{eq:evicted-mass}
\end{equation}
$\mu_E$ and $\omega_i$ specify excluded allocation mass and
conditional composition under the query-informed student distribution. Since
$\mu_E\omega_i=\pi_\theta^{\mathcal C}(i)$, the memory write preserves
excluded allocation:
$\mu_E\bar M_E=\sum_{i\in E}\pi_\theta^{\mathcal C}(i)
\phi_\varphi(k_i)v_i^\top$.
Appendix~\ref{app:proof-allocation-decomposition}
shows the mass--composition factorization is unique.

A lightweight map $\phi_\varphi$ provides memory addressing. At each
compression step $t$, the evicted summary is accumulated with its excluded
mass:
{\small
\begin{equation}
\begin{aligned}
\bar M_E
&=
\sum \nolimits_{i\in E}
\omega_i\phi_\varphi(k_i)v_i^\top,
\quad
\bar b_E
=
\sum \nolimits_{i\in E}
\omega_i
\left[
\phi_\varphi(k_i)\odot\phi_\varphi(k_i)
\right],
\\
Z_t
&=
\gamma_MZ_{t-1}+\mu_{E_t},
\quad
M_t=
\frac{
\gamma_MZ_{t-1}M_{t-1}
+\mu_{E_t}\bar M_{E_t}
}{
Z_t+\epsilon
},
\quad
b_t
=
\frac{
\gamma_MZ_{t-1}b_{t-1}
+\mu_{E_t}\bar b_{E_t}
}{
Z_t+\epsilon
}.
\end{aligned}
\label{eq:memory-update}
\end{equation}}
Here $\mu_{E_t}$ sets relative
event weight, $\gamma_M$ discounts prior events,
and $Z_t$ accumulates mass,
while $(M_t,b_t)$ retain normalized evicted composition.

For a subsequent query $q$, CORE forms a normalized query-dependent
memory readout:
\begin{equation}
m(q)
=
\frac{
\phi_\varphi(q)^\top M_t
}{
\sqrt{
\left(\phi_\varphi(q)\odot\phi_\varphi(q)\right)^\top b_t+\epsilon
}
},
\qquad
o(q)
=
o_{\mathrm{attn}}
\left(q;\mathcal S_{\mathrm{keep}}\right)
+
g_\chi(q)m(q),
\label{eq:memory-readout}
\end{equation}
where $g_\chi(q)\in[0,1]$ is a learned scalar gate. Thus the shared
allocation determines within-event composition and recurrent event weight,
while $g_\chi(q)m(q)$ reconstructs the residual.

The memory modules are trained on the residual between
full and retained attention outputs:
\begin{equation}
\mathcal{L}_{\mathrm{mem}}
=
\mathbb{E}_{q}
\left[
\left\|
o_{\mathrm{full}}(q)
-
o_{\mathrm{attn}}
\left(q;\mathcal S_{\mathrm{keep}}\right)
-
g_\chi(q)m(q)
\right\|_2^2
\right].
\label{eq:memory-loss}
\vspace{-0.02in}
\end{equation}

With offline full-attention supervision, $\phi_\varphi$ and $g_\chi$
learn to reconstruct Top-$B$ retention residuals.

\subsection{Training, Inference, and Complexity}
\label{sec:procedures}

\begin{wrapfigure}{r}{0.55\columnwidth}
\vspace{-0.35in}

\begin{minipage}{\linewidth}
\begin{algorithm}[H]
\caption{Offline CORE Training}
\label{alg:offline-training}

\begingroup
\footnotesize
\linespread{0.95}\selectfont
\algrenewcommand\algorithmicindent{0.8em}
\begin{algorithmic}[1]

\Require KV states $\mathcal C$, queries $\mathcal Q$,
compression ratios $\mathcal R$

\Statex \textbf{Stage I: Allocation distillation}
\State Compute $\pi_T^{\mathcal C}=w^{(R)}$ using
Eq.~\ref{eq:teacher-update}
\State Compute $s_i=f_\theta(\xi_i)$ and $\pi_\theta^{\mathcal C}$
using Eq.~\ref{eq:student-distribution};
minimize $\mathcal L_{\mathrm{idx}}$

\Statex \textbf{Stage II: Joint retention and compensation training}
\State Sample $r\sim\operatorname{Uniform}(\mathcal R)$
per optimizer step
\For{each prefix of length $N$}
\State Set $B=\lfloor(1-r)N\rfloor$; retain Top-$B$
\State Compute $\mathcal L_{\mathrm{mem}}$ from evicted states
\EndFor
\State Optimize $\theta$ with $\mathcal L_{\mathrm{idx}}$
and $(\phi_\varphi,g_\chi)$ with $\mathcal L_{\mathrm{mem}}$

\end{algorithmic}

\endgroup

\end{algorithm}
\end{minipage}

\vspace{-0.15in}
\end{wrapfigure}

Stage I distills the coverage-calibrated teacher
$\pi_T^{\mathcal C}$ into an indexer predicting
$\pi_\theta^{\mathcal C}$.
This allocation couples both channels: its ranking determines
Top-$B$ retention, while its excluded mass and conditional
weights determine memory writes.
Stage II continues indexer distillation with $\mathcal L_{\mathrm{idx}}$
and trains memory maps and gates with $\mathcal L_{\mathrm{mem}}$
to reconstruct attention residuals induced by that retention policy.
Algorithm~\ref{alg:offline-training} summarizes this two-stage procedure.

\begin{wrapfigure}{r}{1.27\linewidth}
\vspace{-0.15in}
\centering

\begin{minipage}{\linewidth}

\begin{algorithm}[H]
\caption{Online Retention and Compensation}
\label{alg:online-core}

\begingroup
\footnotesize
\linespread{0.95}\selectfont
\algrenewcommand\algorithmicindent{0.8em}
\begin{algorithmic}[1]

\Require Candidate KV states $\mathcal C$, observed queries
$\mathcal Q_{\mathrm{obs}}$, subsequent query $q$,
budget $B$, memory states $(M,b,Z)$

\State Predict allocation
$\pi_\theta^{\mathcal C}(i)$ for $i\in\mathcal C$
using $\mathcal Q_{\mathrm{obs}}$

\State Select
$\mathcal S_{\mathrm{keep}}
\gets
\operatorname{TopB}
(\{\pi_\theta^{\mathcal C}(i)\})$
under the protected-sink convention,
$E\gets\mathcal C\setminus\mathcal S_{\mathrm{keep}}$

\State Compute
$\mu_E,\{\omega_i\}_{i\in E}$
from $\pi_\theta^{\mathcal C}$
and update $(M,b,Z)$

\State Compute compensated attention for $q$
using Eq.~\ref{eq:memory-readout}

\end{algorithmic}
\endgroup

\end{algorithm}

\end{minipage}

\vspace{-0.2in}
\end{wrapfigure}

\textbf{Deployment consistency.}
The teacher uses chunk-level query context only for constructing offline
targets. At deployment, $\mathcal Q_{\mathrm{obs}}$ contains prompt queries
during prefill and only causally available queries generated within
the current decoding interval thereafter. In
Stage II, each held-out pseudo-decode query is supervised by memory written
from the preceding prefix, so current and future KV states are excluded.

At inference in Algorithm~\ref{alg:online-core}: CORE predicts
the allocation, performs Top-$B$ retention, updates the evicted-mass
memory, and applies compensated attention through
Eq.~\ref{eq:memory-readout}. One teacher iteration costs
$\mathcal O(Nd_z^2+d_z^3)$, with total cost
$\mathcal O(R(Nd_z^2+d_z^3))$ over $R$ iterations. Streaming Q--K
statistics require
$\mathcal O(|\mathcal Q|Nd_{\mathrm{head}})$, while scoring and
memory updates remain linear in $N$ before Top-$B$ selection.

\section{Theoretical Analysis}
\label{sec:theory}

% \subsection{Scope of the identity.}
% \label{sec:scope}

\textbf{Scope of the identity.}
Eq.~\ref{eq:intro-identity} is not specific to softmax attention. It
holds for \emph{normalized aggregation readouts} $o(S)=N(S)/Z(S)$, where
$N(S)$ and $Z(S)>0$ are additive over disjoint subsets. This class
includes softmax and sparse attention, retrieval caches, and gated
aggregation.
\begin{theorembox}
\begin{theorem}[Mass--direction characterization]
\label{thm:characterization}
Every normalized aggregation readout admits the exact decomposition
$o(M\cup E)-o(M)=\pi_E\bigl(o(E)-o(M)\bigr)$
for disjoint nonempty $M,E$; under condition (A3) in
Appendix~\ref{app:scope-characterization}, the converse holds.
The representation yields
$\pi_E=\frac{Z(E)}{Z(M)+Z(E)}$ as the relative mass of the evicted subset
and $o(E)$ as its normalized readout.
\end{theorem}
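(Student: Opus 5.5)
The forward direction is a direct algebraic computation from additivity, and I would do it first. For disjoint nonempty $M,E$, additivity gives $N(M\cup E)=N(M)+N(E)$ and $Z(M\cup E)=Z(M)+Z(E)$. Write $N(S)=Z(S)\,o(S)$ for $S\in\{M,E\}$, which is legitimate because $Z(S)>0$. Then
\[
o(M\cup E)=\frac{Z(M)o(M)+Z(E)o(E)}{Z(M)+Z(E)}=(1-\pi_E)\,o(M)+\pi_E\,o(E),
\]
with $\pi_E=Z(E)/(Z(M)+Z(E))\in(0,1)$. Subtracting $o(M)$ yields $o(M\cup E)-o(M)=\pi_E\bigl(o(E)-o(M)\bigr)$ exactly. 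This also identifies $\pi_E$ as the relative mass of $E$ and $o(E)$ as its normalized readout.

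To connect with Eq.~\ref{eq:intro-identity}, I would instantiate softmax attention with $N(S)=\sum_{i\in S}e^{a_i}v_i$ and $Z(S)=\sum_{i\in S}e^{a_i}$. Here $\pi_E$ becomes the evicted attention mass and $o(E)=\bar v_E$. Sparse attention, retrieval caches and gated aggregation follow in the same way, since each has additive positive weights.

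For the converse I would read (A3) as the natural hypothesis. Suppose a readout $o$ on finite subsets satisfies, for all disjoint nonempty $M,E$, $o(M\cup E)=(1-\lambda(M,E))o(M)+\lambda(M,E)o(E)$ with $\lambda\in(0,1)$. Suppose further that (A3) supplies a consistency (cocycle) condition: the odds ratio $\rho(M,E)=\lambda/(1-\lambda)$ satisfies $\rho(M,E\cup F)=\rho(M,E)+\rho(M,F)$ for disjoint $E,F$, or equivalently it factors as $\rho(M,E)=Z(E)/Z(M)$. The construction would fix a reference singleton or element and define $Z(\{i\})$ from pairwise odds ratios. I would then set $Z(S)=\sum_{i\in S}Z(\{i\})$ and $N(S)=Z(S)o(S)$, and check additivity of $N$ by induction on $|S|$, using the two-set decomposition repeatedly. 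Positivity of $Z$ comes from $\lambda\in(0,1)$.

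The main obstacle is this converse. The difficulty is showing that the pairwise weights are path-independent, so that $Z$ is well defined and $Z(E)/Z(M)$ reproduces $\rho(M,E)$ for all sets and not just singletons. I would try first to derive the multiplicative cocycle $\rho(A,B)\rho(B,C)=\rho(A,C)$ from associativity. This means expanding $o(A\cup B\cup C)$ in two orders and comparing coefficients of $o(A),o(B),o(C)$, assuming (A3) includes affine independence of the readouts so that the coefficients are identifiable. If (A3) instead directly posits the weight structure, the converse reduces to the verification of additivity above.
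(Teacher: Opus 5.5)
Your forward direction is correct and matches the paper's argument: write $N(S)=Z(S)o(S)$, use additivity, and subtract $o(M)$. The converse has a genuine gap.

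In the paper, (A3) is not a cocycle or weight-structure hypothesis. It says that the readouts of any three pairwise disjoint nonempty sets are affinely independent, and that singleton readouts are pairwise distinct. So your first branch (``(A3) supplies $\rho(M,E)=Z(E)/Z(M)$'') assumes the conclusion. Everything therefore rests on your second branch, which you leave open.

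Moreover, the target you set there, the multiplicative cocycle $\rho(A,B)\rho(B,C)=\rho(A,C)$, cannot close the argument. Every $\rho(M,E)=Y(E)/Y(M)$ with an \emph{arbitrary} positive set function $Y$ satisfies it. It gives path-independent pairwise weights but says nothing about additivity. Your next step, setting $Z(S)=\sum_{i\in S}Z(\{i\})$ and checking additivity of $N$ by induction, is then circular. The induction step $o(S\cup\{a\})=\bigl(Z(S)o(S)+w_av_a\bigr)/\bigl(Z(S)+w_a\bigr)$ needs $\lambda(S,\{a\})=w_a/(Z(S)+w_a)$ for non-singleton $S$. That is exactly the unproved claim you flag as the main obstacle. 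Relatedly, additivity of $\rho(M,\cdot)$ in its second argument is not equivalent to the factorization $Z(E)/Z(M)$, since it leaves the $M$-dependence unconstrained.

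The missing ingredient is an \emph{additive} recursion obtained by the same coefficient matching. The paper uses only singleton insertions $\mu(M,a)=\lambda(M,\{a\})$.
\begin{enumerate}
\item Insert $a$ then $b$ into $M$, and $b$ then $a$. Matching the coefficients of $o(M),v_a,v_b$ is legitimate by (A3), since $M,\{a\},\{b\}$ are pairwise disjoint. This gives $A'=(1-B')A$ and $B'=(1-A')B$, hence the odds recursion $R(M\cup\{b\},a)=R(M,a)/(1+R(M,b))$.
\item Set $w_{c_0}=1$ and $w_a=R(\{c_0\},a)$. The base case $R(\{c\},a)=w_a/w_c$ follows from expanding $o(\{c_0,c,a\})$ two ways.
\item The recursion then propagates $R(M,a)=w_a/W(M)$ by induction on $|M|$: if it holds for $a$ and $b$ at $M$, the recursion gives $w_a/(W(M)+w_b)$ at $M\cup\{b\}$.
\item A second induction on $|S|$ yields $o(S)=\sum_{i\in S}w_iv_i/W(S)$. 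Distinct singleton readouts give uniqueness of the weights up to scale.
\end{enumerate}

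Your own two-order expansion of $o(A\cup B\cup C)$ does contain this information. Compare the ratio of the coefficients of $o(B)$ and $o(A)$ across the two orders, rather than only the $o(C)$-to-$o(A)$ ratio behind the multiplicative cocycle. This gives $\rho(A,B\cup C)=\rho(A,B)\bigl(1+\rho(B,C)\bigr)$, the set-level counterpart of the paper's recursion and the relation that actually encodes additivity. You never isolate it.

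Finally, the singleton route uses (A1) only for $|E|=1$ and needs only $g(\{i\})=o(\{i\})$. It therefore proves a stronger converse than yours, which assumes $g(E)=o(E)$ for every $E$.
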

\end{theorembox}
\vspace{-0.05in}
The mass--direction decomposition is universal for normalized aggregation and,
under (A3), characterizes this readout class.
Proofs and boundary cases are in Appendix~\ref{app:scope}.

\textbf{One Certificate, Two Channels.}
Fix a head, cache, and eviction-time allocations, with deployed set
$E\subsetneq\mathcal C$.
For any specified query distribution $\mathcal Q$, assume
$\|v_i\|_2\le B_v$, $\beta\ge0$, $\tau_{\mathrm{ent}}>0$, and write
$\mu_E=\pi_\theta^{\mathcal C}(E)$,
$\bar p=\mathbb E_{q\sim\mathcal Q}[p_{\mathrm{attn}}(q)]$ and
$\pi_u=\sigma(u/\tau_{\mathrm{ent}})$.
Here $p_{\mathrm{attn}}(q)$ denotes full-cache attention.
Let $\delta_{\mathrm{cov}}=\|\pi_T^{\mathcal C}-\pi_u\|_1$
and $d=d(w^{(R-1)})$.
\vspace{-0.03in}
\begin{theorembox}
\begin{theorem}[Expected eviction certificate]
\label{thm:certificate}
The expected pre-compensation eviction error satisfies
\vspace{-0.08in}
\begin{equation}
\mathbb E_{q\sim\mathcal Q}\|\Delta o(q)\|_2
\le
2B_v\min\!\left\{
1,\,\mu_E+\frac12\left[
\|\pi_\theta^{\mathcal C}-\pi_T^{\mathcal C}\|_1
+\delta_{\mathrm{cov}}+\|\pi_u-\bar p\|_1
\right]\right\}.
\label{eq:eviction-certificate}
\vspace{-0.08in}
\end{equation}
Moreover, $\delta_{\mathrm{cov}}\le
\min\{2,\beta\operatorname{osc}(d)/(2\tau_{\mathrm{ent}})\}$,
where $\operatorname{osc}(d)=\max_i d_i-\min_i d_i$.
\end{theorem}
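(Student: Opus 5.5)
The plan is to start from the exact identity of Theorem~\ref{thm:characterization} and then pass from the true evicted attention mass to the deployed allocation mass through a chain of total-variation comparisons.

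\textbf{Pointwise bound from the identity.} For each query, the identity gives $\Delta o(q)=\pi_E(q)(\bar v_E(q)-o_M(q))$, where $\pi_E(q)=p_{\mathrm{attn}}(q)(E)$. Both $\bar v_E(q)$ and $o_M(q)$ are convex combinations of values with $\|v_i\|_2\le B_v$, so $\|\Delta o(q)\|_2\le 2B_v\,\pi_E(q)$. Since also $\pi_E(q)\le 1$, taking expectations gives
\[
\mathbb E_q\|\Delta o(q)\|_2\le 2B_v\min\{1,\bar p(E)\}.
\]
Here I use that $E\mapsto p(E)$ is linear in $p$, so $\mathbb E_q\pi_E(q)=\bar p(E)$.

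\textbf{Telescoping from $\bar p(E)$ to $\mu_E$.} For any two distributions $p,p'$ on $\mathcal C$ and any $E\subsetneq\mathcal C$, $|p(E)-p'(E)|\le\frac12\|p-p'\|_1$. I will chain
\[
\bar p(E)\le \pi_\theta^{\mathcal C}(E)+\bigl|\pi_\theta^{\mathcal C}(E)-\pi_T^{\mathcal C}(E)\bigr|+\bigl|\pi_T^{\mathcal C}(E)-\pi_u(E)\bigr|+\bigl|\pi_u(E)-\bar p(E)\bigr|
\]
and bound each difference by half the corresponding $\ell_1$ distance. This yields the four-term certificate, and the outer $\min\{1,\cdot\}$ is retained from the previous step. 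One caveat is that protected-sink masking must be treated consistently: I interpret all distributions on the same $\mathcal C$ without renormalization.

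\textbf{Bound on $\delta_{\mathrm{cov}}$.} By Eq.~\ref{eq:teacher-update}, $\pi_T^{\mathcal C}=\operatorname{softmax}((u+\beta d)/\tau_{\mathrm{ent}})$ with $d=d(w^{(R-1)})$, while $\pi_u=\operatorname{softmax}(u/\tau_{\mathrm{ent}})$. The bound $\delta_{\mathrm{cov}}\le 2$ is trivial. For the other bound, I view $\pi_T^{\mathcal C}$ as the exponential tilt of $\pi_u$ by $h=\beta d/\tau_{\mathrm{ent}}$, which is shift invariant, so I may center $h$ to have range $\operatorname{osc}(h)=\beta\operatorname{osc}(d)/\tau_{\mathrm{ent}}$.

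To control the tilt, I interpolate $p_s\propto\pi_u e^{sh}$ for $s\in[0,1]$. The derivative is $\frac{d}{ds}p_s(i)=p_s(i)(h_i-\mathbb E_{p_s}h)$, so
\[
\Bigl\|\tfrac{d}{ds}p_s\Bigr\|_1=\mathbb E_{p_s}|h-\mathbb E_{p_s}h|.
\]
The mean absolute deviation of a variable with range $\operatorname{osc}(h)$ is at most $\operatorname{osc}(h)/2$, with the maximum attained by a two-point half/half distribution. Integrating over $s$ gives $\|\pi_T^{\mathcal C}-\pi_u\|_1\le\operatorname{osc}(h)/2$, which is exactly the claimed bound.

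\textbf{Main obstacle.} I expect the hardest part to be this last step, specifically obtaining the sharp constant $1/2$ rather than the looser bound $\operatorname{osc}(h)$. If the interpolation argument fails, I would fall back on Pinsker's inequality with the KL bound $D_{\mathrm{KL}}\le\operatorname{osc}(h)^2/8$ from Hoeffding's lemma. That route gives $\ell_1\le\operatorname{osc}(h)/2$ as well, so either approach works.
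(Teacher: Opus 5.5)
Your proposal is correct and follows the paper's proof essentially step for step. You use the same ingredients: the pointwise bound $\|\Delta o(q)\|_2\le 2B_v\pi_E(q)$ averaged to $\bar p(E)$, a total-variation chain along $\bar p\to\pi_u\to\pi_T^{\mathcal C}\to\pi_\theta^{\mathcal C}$, and the interpolation $s(t)=\sigma(x+th)$ with $\|s'(t)\|_1=\mathbb E_{s(t)}|h-\mathbb E_{s(t)}h|\le\frac12\operatorname{osc}(h)$, integrated over $t\in[0,1]$ and capped by the simplex diameter. The only minor difference is in the mean-absolute-deviation step: the paper bounds it via $\sqrt{\operatorname{Var}}$ with centering at the range midpoint, whereas you argue directly. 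Your Hoeffding--Pinsker fallback is an equally valid route to the same constant.
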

\end{theorembox}
\vspace{-0.03in}
This bound uses $2B_v$ for direction and separates excluded
mass, allocation mismatch, coverage-induced shift, and utility mismatch.
Unknown $\bar p$ prevents direct online evaluation; directional benefits
are assessed empirically (Sec.~\ref{sec:mechanistic-analysis}).
Under shared constraints, Top-$B$ minimizes excluded allocation mass;
$\pi_i=\mu_E\omega_i$ uniquely factorizes excluded weights for $\mu_E>0$.
Teacher--student allocation error bounds retention regret
and excluded-mass discrepancy
(Appendix~\ref{app:joint-fidelity}).
Proofs and extensions appear in Appendices~\ref{app:proof-certificate}, \ref{app:decode-certificate}, and
\ref{app:proof-cache-conditioned-coverage}.

\textbf{Mass-Preserving Memory Stability.}
Shared allocation preserves excluded mass alongside conditional
evicted-token distributions.
Let $p,q\in\Delta_N$ induce evicted sets $E_p,E_q$. Define
$\eta(p,q)=\|p-q\|_1+
\min\{p(E_p\triangle E_q),q(E_p\triangle E_q)\}$
and $R(r,E)=\sum_{i\in E}r_iA_i$ for memory atoms
$\|A_i\|\leq C$.
\vspace{-0.1in}
\begin{propositionbox}
\begin{proposition}[Evicted-mass write stability]
\label{prop:memory-stability}
For $\mu_{\min}=\min\{p(E_p),q(E_q)\}>0$, mass-preserving
writes satisfy
$\|R(p,E_p)-R(q,E_q)\|\leq C\,\eta(p,q)$;
conditional-only writes satisfy
\vspace{-0.08in}
\begin{equation}
\left\|
\frac{R(p,E_p)}{p(E_p)}
-
\frac{R(q,E_q)}{q(E_q)}
\right\|
\leq
\frac{2C\,\eta(p,q)}{\mu_{\min}}.
\label{eq:conditional-write-stability}
\end{equation}
\end{proposition}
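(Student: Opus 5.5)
The plan is a direct telescoping argument: split the difference of writes over the common and symmetric-difference parts of the evicted sets, then reduce the conditional-only case to the mass-preserving one with an add-and-subtract step. Throughout, write $\Delta=E_p\triangle E_q$, $I=E_p\cap E_q$, and $|p-q|(S)=\sum_{i\in S}|p_i-q_i|$.

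\textbf{Mass-preserving writes.} Decompose
\[
R(p,E_p)-R(q,E_q)=\sum_{i\in I}(p_i-q_i)A_i+\sum_{i\in E_p\setminus E_q}p_iA_i-\sum_{i\in E_q\setminus E_p}q_iA_i .
\]
The triangle inequality and $\|A_i\|\le C$ give
\[
\|R(p,E_p)-R(q,E_q)\|\le C\bigl[\,|p-q|(I)+p(E_p\setminus E_q)+q(E_q\setminus E_p)\,\bigr].
\]
The key step is to bound the symmetric-difference masses by $\min\{p(\Delta),q(\Delta)\}$ plus an $\ell_1$ remainder.
\begin{itemize}
\item On $E_q\setminus E_p$, use $q_i\le p_i+|p_i-q_i|$. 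This gives $p(E_p\setminus E_q)+q(E_q\setminus E_p)\le p(\Delta)+|p-q|(\Delta)$.
\item Symmetrically, replacing $p_i$ by $q_i+|p_i-q_i|$ on $E_p\setminus E_q$ gives the same sum $\le q(\Delta)+|p-q|(\Delta)$.
\end{itemize}
Taking the smaller of the two bounds, and using $|p-q|(I)+|p-q|(\Delta)\le\|p-q\|_1$, yields $\|R(p,E_p)-R(q,E_q)\|\le C\,\eta(p,q)$. This requires no lower bound on mass.

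\textbf{Conditional-only writes.} Write $\mu_p=p(E_p)$ and $\mu_q=q(E_q)$, and add and subtract $R(q,E_q)/\mu_p$:
\[
\frac{R(p,E_p)}{\mu_p}-\frac{R(q,E_q)}{\mu_q}=\frac{R(p,E_p)-R(q,E_q)}{\mu_p}+R(q,E_q)\Bigl(\frac1{\mu_p}-\frac1{\mu_q}\Bigr).
\]
The first term is at most $C\eta/\mu_p$ by the mass-preserving bound. For the second term, $\|R(q,E_q)\|\le C\mu_q$, so it is at most $C|\mu_p-\mu_q|/\mu_p$.

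The remaining ingredient is $|\mu_p-\mu_q|\le\eta(p,q)$. This follows by applying the first part with scalar atoms $A_i\equiv1$ and $C=1$, since $R(p,E_p)=\mu_p$ in that case. Summing the two terms gives $2C\eta/\mu_p$. By the symmetric argument, adding and subtracting $R(p,E_p)/\mu_q$ instead, we also get $2C\eta/\mu_q$. Both are at most $2C\eta/\mu_{\min}$, which is Eq.~\ref{eq:conditional-write-stability}.

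\textbf{Main obstacle.} The only delicate step is obtaining the $\min\{p(\Delta),q(\Delta)\}$ term rather than the cruder $p(\Delta)+q(\Delta)$. The device is to charge each symmetric-difference piece to a single distribution, using the pointwise $\ell_1$ slack to convert between $p_i$ and $q_i$, and to account for that slack in $\|p-q\|_1$ without double counting on $I$. Everything else is the triangle inequality.
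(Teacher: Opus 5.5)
Your proof is correct and follows the paper's argument. The mass-preserving bound splits the difference into an allocation-perturbation part and a set-membership part charged to a single distribution; the paper routes this through $R(q,E_p)$ or $R(p,E_q)$, which is the same bookkeeping as your $I$/$\Delta$ split. The conditional bound uses the identical add-and-subtract of $R(q,E_q)/\mu_p$ together with $\|R(q,E_q)\|\le C\mu_q$.

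The one difference is how you obtain $|\mu_p-\mu_q|\le\eta(p,q)$. You apply the first part with scalar atoms $A_i\equiv1$, whereas the paper invokes its Top-$B$ excluded-mass bound $|\mu_p-\mu_q|\le\operatorname{TV}(p,q)$. Your bound is looser but still sufficient, and it does not require $E_p,E_q$ to be Top-$B$ complements.
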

\end{propositionbox}
\vspace{-0.03in}
Proposition~\ref{prop:memory-stability} isolates the write-level mechanism:
preserving allocation scale removes the inverse-mass amplification caused
by conditional normalization.
Appendix~\ref{app:recurrent-memory-characterization} propagates this
stability through recurrent and query-adaptive normalization, explicitly
separating the additional accumulated-mass and readout-denominator factors.

\begin{table*}[t]
\centering
\caption{RULER results across KV compression ratios.
0\% denotes full KV cache.
Within each model and compressed setting, red/bold and
orange/underlined indicate best and second-best results.}
\label{tab:main-results}

\scriptsize
\setlength{\tabcolsep}{3pt}
\setlength{\aboverulesep}{1.5pt}
\setlength{\belowrulesep}{1.5pt}
\renewcommand{\arraystretch}{1.0}

% Individual number highlights with matched box heights and no row expansion.
\renewcommand{\corebest}[1]{%
  \begingroup
  \setlength{\fboxsep}{0.8pt}%
  \smash{\colorbox{corebestbg}{%
    \vphantom{\underline{\textbf{0.00}}}\textbf{#1}%
  }}%
  \endgroup
}
\renewcommand{\coresecond}[1]{%
  \begingroup
  \setlength{\fboxsep}{0.8pt}%
  \smash{\colorbox{coresecondbg}{%
    \vphantom{\underline{\textbf{0.00}}}\underline{#1}%
  }}%
  \endgroup
}

% Preserve the venue badge, but let the table strut determine row height.
\renewcommand{\corevenuetag}[1]{%
  \nobreak\hspace{0.35em}%
  \smash{\corevenuebox{{\scriptsize\textcolor{black!80}{#1}}}}%
}

% Account for the cmidrule and its spacing between the two header rows.
\newcommand{\coretablehead}[1]{%
  \multirow{2}{*}%
    [-\dimexpr(\aboverulesep+\belowrulesep+\cmidrulewidth)/2\relax]%
    {\textbf{#1}}%
}

\resizebox{\textwidth}{!}{%
\begin{tabular}{@{}c|c|ccccc|ccccc@{}}
\toprule
\coretablehead{Model}
& \coretablehead{Method}
& \multicolumn{5}{c|}{\textbf{RULER-4K Avg.}}
& \multicolumn{5}{c@{}}{\textbf{RULER-16K Avg.}} \\
\cmidrule(lr){3-7}
\cmidrule(l){8-12}
& & \textbf{0\%} & \textbf{25\%} & \textbf{50\%}
  & \textbf{75\%} & \textbf{90\%}
  & \textbf{0\%} & \textbf{25\%} & \textbf{50\%}
  & \textbf{75\%} & \textbf{90\%} \\
\midrule

% ==================== Mistral-7B ====================
\multirow{9}{*}{\rotatebox[origin=c]{90}{\textbf{Mistral-7B}}}
& H$_2$O\corevenuetag{NeurIPS'23}
& 93.29 & 68.65 & 49.13 & 34.36 & 21.15
& 80.95 & 52.49 & 37.11 & 24.55 & 15.74 \\

& SnapKV\corevenuetag{NeurIPS'24}
& 93.29 & 58.64 & 42.71 & 33.16 & 20.91
& 80.95 & 44.61 & 32.06 & 24.57 & 19.04 \\

& PyramidKV\corevenuetag{COLM'25}
& 93.29 & 62.14 & 41.68 & 30.92 & 20.92
& 80.95 & 49.04 & 34.41 & 23.50 & 17.82 \\

& Expected Attention\corevenuetag{arXiv'25}
& 93.29 & 72.60 & 48.57 & 28.59 & 13.74
& 80.95 & 59.06 & 43.28 & 28.94 & 15.79 \\

& KeyDiff\corevenuetag{NeurIPS'25}
& 93.29 & \coresecond{87.09} & 81.58
& \coresecond{70.49} & \coresecond{52.17}
& 80.95 & 75.26 & \coresecond{70.79} & 65.80 & 50.40 \\

& CAPKV\corevenuetag{ICML'26}
& 93.29 & 86.35 & 70.55 & 55.80 & 36.27
& 80.95 & \corebest{75.46} & 57.63 & 46.19 & 36.64 \\

& IndexMem\corevenuetag{ICML'26}
& 93.29 & 84.21 & \coresecond{81.74} & 68.90 & 43.27
& 80.95 & 73.49 & 67.39
& \coresecond{66.34} & \coresecond{53.98} \\

& MOMENTKV\corevenuetag{COLM'26}
& 93.29 & 68.50 & 49.80 & 33.48 & 21.17
& 80.95 & 56.59 & 38.80 & 26.91 & 20.39 \\

& \textbf{CORE (Ours)}
& 93.29
& \corebest{87.74}
& \corebest{84.91}
& \corebest{73.69}
& \corebest{54.63}
& 80.95
& \coresecond{75.37}
& \corebest{70.98}
& \corebest{68.98}
& \corebest{56.58} \\
\midrule

% ==================== Llama-3.1-8B ====================
\multirow{9}{*}{\rotatebox[origin=c]{90}{\textbf{Llama-3.1-8B}}}
& H$_2$O\corevenuetag{NeurIPS'23}
& 95.66 & 76.36 & 57.97 & 39.98 & 22.50
& 93.38 & 73.26 & 52.08 & 35.87 & 19.68 \\

& SnapKV\corevenuetag{NeurIPS'24}
& 95.66 & 81.62 & 69.10 & 50.41 & 24.60
& 93.38 & 82.32 & 69.25 & 48.56 & 26.99 \\

& PyramidKV\corevenuetag{COLM'25}
& 95.66 & 82.53 & 72.45 & 50.54 & 24.61
& 93.38 & 82.74 & 68.03 & 48.99 & 27.66 \\

& Expected Attention\corevenuetag{arXiv'25}
& 95.66 & \corebest{95.27} & 92.23 & 75.84 & 30.34
& 93.38 & \coresecond{92.84} & 85.93 & 66.31 & 25.25 \\

& KeyDiff\corevenuetag{NeurIPS'25}
& 95.66 & 91.59 & 85.38 & 72.95 & \coresecond{61.17}
& 93.38 & 88.42 & 82.55
& \coresecond{74.75} & \coresecond{66.68} \\

& CAPKV\corevenuetag{ICML'26}
& 95.66 & 92.73 & 84.29 & 67.08 & 50.53
& 93.38 & 91.20 & 84.26 & 67.68 & 50.06 \\

& IndexMem\corevenuetag{ICML'26}
& 95.66 & 93.43 & \coresecond{93.02}
& \coresecond{77.24} & 54.12
& 93.38 & 91.97 & \corebest{87.34} & 72.19 & 57.95 \\

& MOMENTKV\corevenuetag{COLM'26}
& 95.66 & 87.06 & 76.40 & 54.45 & 23.07
& 93.38 & 86.85 & 75.09 & 55.67 & 31.18 \\

& \textbf{CORE (Ours)}
& 95.66
& \coresecond{95.06}
& \corebest{93.77}
& \corebest{80.60}
& \corebest{62.44}
& 93.38
& \corebest{93.62}
& \coresecond{87.12}
& \corebest{74.98}
& \corebest{68.48} \\
\midrule

% ==================== Qwen3-14B ====================
\multirow{9}{*}{\rotatebox[origin=c]{90}{\textbf{Qwen3-14B}}}
& H$_2$O\corevenuetag{NeurIPS'23}
& 96.12 & 77.57 & 61.50 & 41.76 & 23.95
& 94.17 & 74.69 & 55.58 & 38.38 & 27.02 \\

& SnapKV\corevenuetag{NeurIPS'24}
& 96.12 & 86.82 & 61.61 & 41.42 & 29.23
& 94.17 & 85.33 & 67.79 & 44.29 & 32.43 \\

& PyramidKV\corevenuetag{COLM'25}
& 96.12 & 79.07 & 43.48 & 39.69 & 29.24
& 94.17 & 79.84 & 46.01 & 34.78 & 25.97 \\

& Expected Attention\corevenuetag{arXiv'25}
& 96.12
& \coresecond{96.07}
& \coresecond{95.88}
& \coresecond{86.80}
& \coresecond{58.80}
& 94.17
& \coresecond{94.21}
& \coresecond{93.90}
& \coresecond{83.08}
& 53.01 \\

& KeyDiff\corevenuetag{NeurIPS'25}
& 96.12 & 89.95 & 80.49 & 70.26 & 53.94
& 94.17 & 83.57 & 75.82 & 70.15 & \coresecond{62.24} \\

& CAPKV\corevenuetag{ICML'26}
& 96.12 & 89.94 & 83.41 & 69.58 & 45.35
& 94.17 & 89.32 & 83.68 & 73.11 & 53.79 \\

& IndexMem\corevenuetag{ICML'26}
& 96.12 & 95.60 & 93.59 & 84.73 & 52.49
& 94.17 & 91.98 & 86.39 & 77.14 & 52.48 \\

& MOMENTKV\corevenuetag{COLM'26}
& 96.12 & 90.22 & 69.32 & 43.04 & 27.63
& 94.17 & 88.89 & 73.67 & 54.40 & 33.90 \\

& \textbf{CORE (Ours)}
& 96.12
& \corebest{96.33}
& \corebest{95.96}
& \corebest{89.59}
& \corebest{60.40}
& 94.17
& \corebest{94.29}
& \corebest{93.96}
& \corebest{85.48}
& \corebest{66.02} \\
\bottomrule
\end{tabular}%
}
\vspace{-0.2in}
\end{table*}

\section{Experiments}
\label{sec:experiments}

We evaluate CORE’s long-context performance, decode-time compression, and position-sensitive retrieval, then analyze its retention and compensation mechanisms, ablate key components, compare with trainable KV eviction methods, and assess inference efficiency.

\subsection{Experimental Setup}
\label{sec:exp-setup}

\textbf{Models and benchmarks.}
We evaluate CORE on three backbones spanning 7B--14B parameters:
Mistral-7B~\citep{jiang2023mistral7b}, Llama-3.1-8B~\citep{grattafiori2024llama},
and Qwen3-14B~\citep{yang2025qwen3}. Long-context evaluation uses
RULER~\citep{hsieh2024ruler} and LongBench~\citep{bai2024longbench},
using each benchmark's official task definitions, metrics, and evaluation
procedures.

\textbf{Inference schedule and cache budget.}
We define the KV compression ratio $r\in[0,1]$ as the fraction of cached KV
entries evicted: 0\% denotes the full-KV reference, and increasing $r$
reduces the retained cache. All methods retain the first four sink tokens,
which count toward the shared KV budget. Each sequence is compressed once after prefill to the target budget; during decoding, the same fixed
KV budget is enforced every 128 generated tokens unless another interval is specified.

\textbf{Baselines.}
We compare CORE with H2O~\citep{zhang2023h2o}, SnapKV~\citep{li2024snapkv},
PyramidKV~\citep{cai2025pyramidkv}, ExpectedAttention~\citep{devoto2025expectedattention},
KeyDiff~\citep{park2026keydiff}, CAPKV~\citep{yang2026rethinking},
IndexMem~\citep{yang2026indexmem}, and MOMENTKV~\citep{li2026momentkvclosingdirectionalgap}. For IndexMem, we use the base architecture without
IndexCache-style cross-layer score reuse.
We evaluate all methods using a unified evaluation pipeline
based on the KVPress repository~\citep{devoto2025expectedattention},
with consistent benchmark settings and KV-cache budgets.
Unless otherwise specified, all experiments are conducted
on a single NVIDIA A100 GPU with 80\,GB of memory.

\textbf{Training Setup.} We train CORE on LongAlpaca-12k~\citep{chen2024longlora}
with a frozen backbone in two stages:
Stage I distills teacher allocations into the indexer using
KL divergence and boundary-focused pairwise supervision.
Stage II performs Top-$B$ retention, continues indexer distillation,
and trains per-layer latent-memory modules from evicted KV states.
Optimization uses AdamW with a warmup--stable--decay schedule
and peak learning rate $5\times10^{-4}$ over 5,400 total steps.

\subsection{Main Results}
\label{sec:main-results}

\textbf{RULER: Long-context performance.}
Table~\ref{tab:main-results} reports the mean over 13 RULER
subtasks at 4K and 16K context lengths.
CORE is strongest under aggressive compression, leading all 12
settings at 75\% and 90\% across three backbones and both lengths.
At 25\% and 50\% compression, it remains within 0.22 points
of the best baseline in every setting.
At 90\% compression, CORE improves over the strongest baseline
by 2.25 points on average, peaking at 3.78 points on
Qwen3-14B at 16K.

\textbf{LongBench: Performance across heterogeneous tasks.}
We test CORE's generalization beyond synthetic retrieval and reasoning
on LongBench~\citep{bai2024longbench} with Llama-3.1-8B.
Across six representative tasks (Fig.~\ref{fig:longbench-main}),
CORE maintains consistently strong, balanced performance,
with clearer advantages under aggressive KV-cache compression.
At 75\% compression, CORE's 16-task average is highest among
compressed methods, 2.30 points above the strongest baseline
(Appendix Table~\ref{tab:longbench-per-task}).

\begin{figure}[t]
\centering
\includegraphics[width=\linewidth]{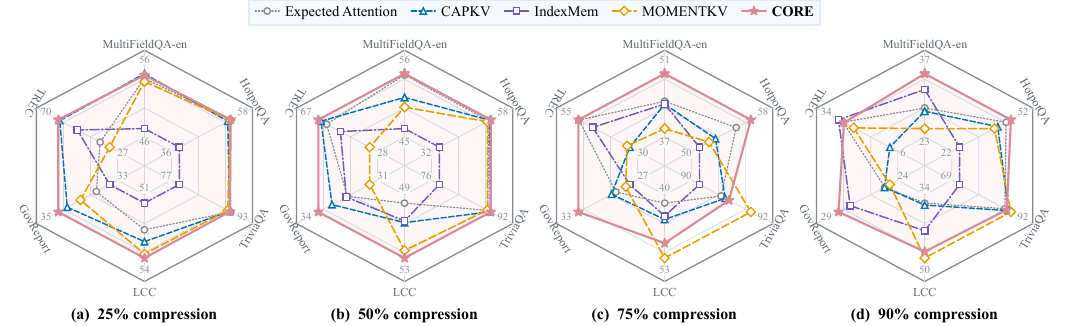}
\vspace{-0.3in}
\caption{
LongBench performance across six representative tasks and four compression ratios.
}
\label{fig:longbench-main}
\vspace{-0.15in}
\end{figure}

\subsection{Compression Robustness}
\label{sec:decode-compression}

\textbf{Needle retrieval across lengths and positions.}
Using Llama-3.1-8B at 50\% compression, we evaluate 2K--16K contexts
and needle depths of 10\%, 25\%, 50\%, 75\%, and 90\%,
with matched retained-KV budgets.
Fig.~\ref{fig:niah-heatmap} shows CORE maintaining high accuracy
across most length--depth pairs.
IndexMem exhibits recurring dips at 90\% depth, while CORE
without memory writing shows localized failures at intermediate
lengths and later positions in longer contexts.
Full CORE reduces these failure regions, supporting memory
compensation's contribution to more consistent retrieval.

\begin{wraptable}{r}{0.51\textwidth}

\centering
\scriptsize
\setlength{\tabcolsep}{2pt}
\renewcommand{\arraystretch}{0.85}
\vspace{-0.3in}
\caption{Decode-time compression results.}
\label{tab:decode-compression}

\begin{tabular}{lcccccccc}
\toprule
Method
& \multicolumn{4}{c}{AIME25}
& \multicolumn{4}{c}{Math500} \\
\cmidrule(lr){2-5}
\cmidrule(lr){6-9}
& 64 & 128 & 256 & 512
& 64 & 128 & 256 & 512 \\
\midrule
CAPKV
& 26.67 & 40.00 & 43.33 & 40.00
& 32.20 & 30.00 & 43.40 & 39.60 \\
IndexMem
& 23.33 & 33.33 & 43.33 & 46.67
& 35.20 & 40.60 & 47.40 & 50.20 \\
MOMENTKV
& 20.00 & 30.00 & 36.67 & 43.33
& 28.40 & 37.60 & 45.20 & 49.00 \\
\textbf{CORE}
& \textbf{33.33} & \textbf{43.33}
& \textbf{50.00} & \textbf{53.33}
& \textbf{37.60} & \textbf{43.00}
& \textbf{49.80} & \textbf{52.60} \\
\bottomrule
\end{tabular}

\vspace{-0.1in}
\end{wraptable}

\textbf{Decode-time compression.} We evaluate Qwen3-14B on AIME25 and Math500 with 75\% initial
eviction, matched fixed KV budgets, and recompression every
$\{64,128,256,512\}$ generated tokens.
CORE leads at every interval, reaching 53.33\% and 52.60\%
accuracy at interval 512.
Under the most frequent recompression schedule (interval 64),
CORE achieves 33.33\% and 37.60\%, exceeding MOMENTKV by
13.33 and 9.20 points.
It also achieves higher accuracy than IndexMem across all intervals
(Table~\ref{tab:decode-compression}),
demonstrating a consistent accuracy advantage.
These schedules vary both eviction frequency and intermediate
cache occupancy.

\begin{figure*}[t]
\centering
\includegraphics[width=\textwidth]{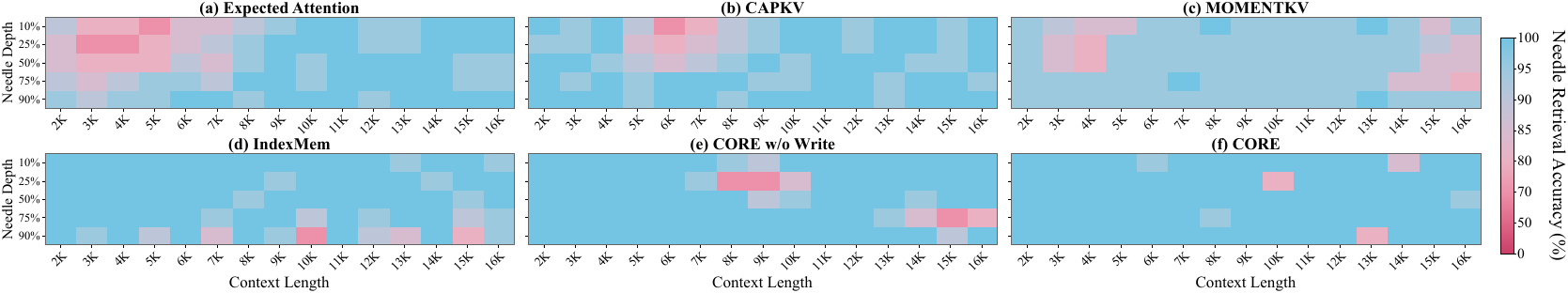}
\vspace{-0.3in}
\caption{Position-sensitive needle retrieval across context lengths
and evidence depths.}
\label{fig:niah-heatmap}
\vspace{-0.2in}
\end{figure*}

\subsection{Retention and Compensation Analysis}
\label{sec:mechanistic-analysis}

\textbf{Mass--direction retention.}
We evaluate Llama-3.1-8B on RULER-16K using frozen
modules and shared full-cache reference QKV states
(Appendix~\ref{app:diagnostic-protocol}).
With memory correction disabled,
Fig.~\ref{fig:mechanistic-analysis}~(a) reports mean
evicted attention mass $\bar p(E)$ and normalized directional
factor $\widetilde{\mathcal D}_E$ (lower is better).
Relative to the selector trained without the coverage objective,
CORE reduces both factors across the four budgets.
At high compression ratios, the masses become similar while CORE
maintains a smaller directional factor, indicating a contribution
beyond attention-mass retention.

\textbf{Two-channel allocation fidelity.}
CORE yields lower Top-$B$ teacher-mass regret and excluded-allocation
discrepancy than no-coverage-cues and KL-only students across all budgets
(Fig.~\ref{fig:mechanistic-analysis}~(b);
Appendix~\ref{app:joint-fidelity}).
Under the fixed-budget utility--coverage objective, Student
Top-$B$ remains close in objective value to Teacher Top-$B$ and greedy
selection (Appendix Table~\ref{tab:teacher-greedy}), showing
distillation largely preserves the teacher's coverage-informed set quality
and the value of set-level optimization.

\textbf{Mass-preserving write stability.}
We perturb
allocations to measure linear-write sensitivity.
Fig.~\ref{fig:mechanistic-analysis}~(c) shows mass-preserving
medians near 0.4 across excluded-mass bins,
while conditional-only sensitivity rises as $\mu_{\min}$ decreases;
90th-percentile trends are similar.
This accords with the mass-independent write bound and
conditional normalization's inverse-mass factor
(Proposition~\ref{prop:memory-stability}).
Appendix~\ref{app:compensation-diagnostics} evaluates
perturbation propagation after recurrent normalization.
\begin{figure*}[t]
\centering
\includegraphics[width=\textwidth]{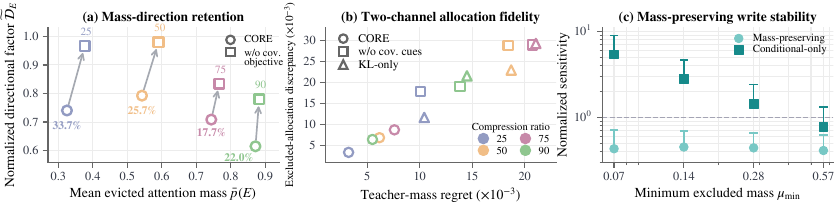}
\vspace{-0.35in}
\caption{Mechanistic analysis of CORE's retention, allocation fidelity, and write stability.}
\label{fig:mechanistic-analysis}
\vspace{-0.1in}
\end{figure*}

\begin{figure}[t]
\vspace{-0.0in}
    \centering
    \includegraphics[width=\linewidth]{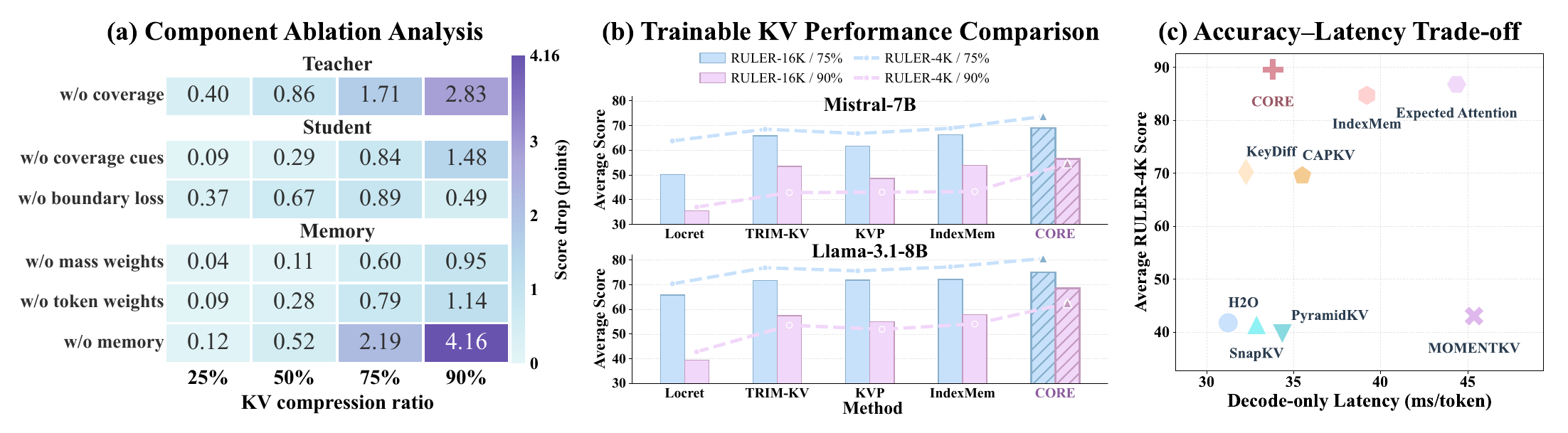}
\vspace{-0.3in}
    \caption{
       CORE Evaluation across Compression Performance, Component Analysis, and Efficiency.
    }
\vspace{-0.2in}
\label{fig:core_evaluation}
\end{figure}

\subsection{Ablations, Trainable Baselines, and Efficiency}
\label{sec:ablation}
\label{sec:efficiency}

\textbf{Coverage and memory improve compression robustness.}
Fig.~\ref{fig:core_evaluation}~(a) ablates CORE on
Llama-3.1-8B using RULER-16K.
Removing coverage supervision, coverage cues, or boundary loss
reduces performance.
Memory removal causes the largest drops at 75\%/90\% compression
(2.19 and 4.16 points), while removing mass or token weighting
degrades accuracy.
Mass- and token-weighting ablations follow matched two-stage
training, with Stage~II updating
the indexer and memory modules.

\textbf{CORE outperforms trainable eviction baselines.}
With all methods trained on LongAlpaca-12k,
CORE outperforms Locret~\citep{huang2025locret},
TRIM-KV~\citep{bui2026cache}, KVP~\citep{moschella2026learning},
and IndexMem~\citep{yang2026indexmem} across both backbones,
4K/16K contexts, and 75\%/90\% compression
(Fig.\ref{fig:core_evaluation}~(b)).
At 16K and 90\% compression, CORE exceeds IndexMem by 2.60 points
on Mistral-7B and 10.53 on Llama-3.1-8B, demonstrating its
advantage under tight KV budgets.

\textbf{CORE offers a favorable accuracy--latency trade-off.}
At 75\% compression on Qwen3-14B, CORE leads RULER-4K at
${\sim}34$\,ms/token, outperforming IndexMem and Expected Attention
(Fig.\ref{fig:core_evaluation}~(c)).
Decode latency is token-averaged suite-wide;
Appendix~\ref{app:efficiency-details} details efficiency and memory costs.

% \begin{wraptable}{r}{0.35\textwidth}
% \vspace{-2em}
% \centering
% \scriptsize
% \setlength{\tabcolsep}{3pt}
% \renewcommand{\arraystretch}{0.85}
% \caption{Allocation replacement.}
% \label{tab:allocation-ablation}

% \resizebox{\linewidth}{!}{%
% \begin{tabular}{@{}lcccc@{}}
% \toprule
% \multirow{2}{*}{\textbf{Variant}}
% & \multicolumn{2}{c}{\textbf{RULER-4K}}
% & \multicolumn{2}{c}{\textbf{RULER-16K}} \\
% \cmidrule(lr){2-3}
% \cmidrule(l){4-5}
% & 75\% & 90\% & 75\% & 90\% \\
% \midrule
% \textbf{CORE}
% & \textbf{80.60} & \textbf{62.44}
% & \textbf{74.98} & \textbf{68.48} \\
% $\text{CORE}_{\text{EA}}$
% & 78.22 & 50.39 & 70.65 & 56.87 \\
% $\text{CORE}_{\text{KeyDiff}}$
% & 76.78 & 61.53 & 74.87 & 67.16 \\
% \bottomrule
% \end{tabular}%
% }
% \vspace{-1.8em}
% \end{wraptable}

% \textbf{Allocation replacement.}
% On Llama-3.1-8B, $\text{CORE}_{\text{EA}}$ and $\text{CORE}_{\text{KeyDiff}}$ substitute normalized EA and KeyDiff scores for $\pi_\theta^{\mathcal C}$ in Top-$B$ retention and evicted-mass redistribution.
% We retrain memory maps and gates, matching data, initialization, optimization, and KV budgets.
% At 75\% and 90\% eviction, both outperform their baselines but trail CORE, supporting learned-allocation benefits beyond the compensation architecture.

\begin{wraptable}{r}{0.42\textwidth}
\vspace{-2em}
\centering
\scriptsize
\setlength{\tabcolsep}{3pt}
\renewcommand{\arraystretch}{0.8}
\caption{Write allocation comparison.}
\label{tab:shared-independent-allocation}
\resizebox{\linewidth}{!}{%
\begin{tabular}{@{}lcccc@{}}
\toprule
\multirow{2}{*}{\textbf{Write allocation}}
& \multicolumn{2}{c}{\textbf{RULER-4K}}
& \multicolumn{2}{c}{\textbf{RULER-16K}} \\
\cmidrule(lr){2-3}
\cmidrule(l){4-5}
& 75\% & 90\% & 75\% & 90\% \\
\midrule
CORE
& \textbf{80.60} & \textbf{62.44}
& \textbf{74.98} & \textbf{68.48} \\
$\text{CORE}_{\text{ind}}$
& 79.37 & 61.82
& 73.02 & 67.43 \\
\bottomrule
\end{tabular}%
}
\vspace{-1em}
\end{wraptable}

\textbf{Shared versus independent write allocations.}
On Llama-3.1-8B, we compare CORE's shared allocation with
$\mathrm{CORE}_{\mathrm{ind}}$, which learns a separate write
scorer via memory reconstruction.
CORE outperforms
$\mathrm{CORE}_{\mathrm{ind}}$ by 0.62--1.96 points across
4K/16K contexts and 75\%/90\% compression
(Table~\ref{tab:shared-independent-allocation}),
demonstrating the effectiveness of sharing the calibrated
allocation between retention and compensation.

\section{Conclusion}

CORE unifies retention and compensation through an exact factorization
of KV eviction error into discarded attention mass and directional gap.
It distills offline log-determinant coverage into a lightweight indexer
whose allocation drives Top-$B$ retention and parameterizes latent-memory
writes through excluded mass and conditional weights.
Our theory provides a four-term pre-compensation certificate,
mass-independent write stability, and hierarchical bounds through
recurrent and query-adaptive normalization.
Across long-context and repeated-eviction settings, CORE improves
accuracy--KV-budget trade-offs.
Future work could use excluded allocation mass to redistribute
a fixed KV budget across layers during decoding.

\section*{AI Use Statement}

We used generative AI tools primarily for language polishing and improving
the clarity, fluency, and readability of the manuscript. The tools were also
used for minor assistance with translation, wording refinement, terminology
consistency, and organization of technical descriptions. They occasionally
assisted in checking mathematical statements and proof arguments, as well as
clarifying the presentation of experimental settings, results, and related
technical content. These uses were limited to supporting manuscript preparation
and presentation.

\section*{Ethics Statement}
\label{app:ethics}

This work studies inference efficiency using existing pretrained
models and publicly available datasets and benchmarks. KV-cache
compression may discard information needed by later queries,
potentially causing retrieval failures or incorrect outputs.
These effects matter particularly in applications that require
reliable use of long-context evidence. Deployment should therefore
evaluate the chosen compression budgets on representative workloads
and monitor output quality, especially under repeated eviction
and changing query demands.

\section*{Reproducibility Statement}
\label{app:reproducibility}

Section~\ref{sec:exp-setup} describes the models, benchmarks,
baselines, hardware, and cache budgets.
Appendix~\ref{app:reporting-protocol} specifies implementation
details, training data, optimization settings, random seeds,
generation and scoring procedures, and hyperparameter selection.
Appendix~\ref{app:diagnostic-protocol} details protocols and metrics
for mechanistic analyses.
Appendix~\ref{app:additional-experiments} provides per-task results,
controlled comparisons, hyperparameter sensitivity, training-source
and seed sensitivity, and latency and memory measurements.

% \section*{Limitations}
% Although CORE performs well across three backbones, projected key geometry remains an indirect proxy for the information required by future queries.
% The stability bounds condition on a fixed query and do not cover
% autoregressive error propagation across future queries.
% While CORE achieves lower decoding latency, its feature extraction and initial cache preparation incur additional prefill cost; end-to-end speedups therefore depend on context length and how much subsequent generation amortizes this overhead.

% \subsubsection*{Author Contributions}
% Author contributions are omitted from this anonymized manuscript.

% \subsubsection*{Acknowledgments}
% Acknowledgments are omitted from this anonymized manuscript.

\bibliography{conference}
\bibliographystyle{conference}

\clearpage
\appendix

{
\hypersetup{linkcolor=black}

\section*{\normalfont\large\bfseries Organization of the Appendix}

\normalsize
\setlength{\baselineskip}{14pt}
\setlength{\parskip}{5pt}

% A. Additional Method Details
\appendixentry{app:method-details}
  {Additional Method Details}

\appendixsubentry{app:method-positioning}
  {Positioning Relative to Closely Related KV-Cache Methods}

\appendixsubentry{app:teacher-solver}
  {Coverage Teacher and Fixed-Budget Approximation}

\appendixsubentry{app:feature-details}
  {Cache-Aware Indexer Specification}

\appendixsubentry{app:boundary-loss}
  {Budget-Conditioned Boundary Supervision}

% B. Theoretical Details
\appendixentry{app:theory-proofs}
  {Theoretical Details for Sec.~\ref*{sec:theory}}

\appendixsubentry{app:scope}
  {Scope and Characterization of the Identity}

\appendixsubentry{app:eviction-certificates}
  {Eviction Identity and Certificates}

\appendixsubentry{app:allocation-retention-coupling}
  {Allocation, Retention Fidelity, and Two-Channel Coupling}

\appendixsubentry{app:proof-cache-conditioned-coverage}
  {Coverage Interaction and Set Awareness}

\appendixsubentry{app:mass-preserving-recurrent-memory}
  {Mass-Preserving and Recurrent Memory}

% C. Theory-Aligned Empirical Diagnostics
\appendixentry{app:diagnostic-protocol}
  {Theory-Aligned Empirical Diagnostics for Sec.~\ref*{sec:mechanistic-analysis}}

\appendixsubentry{app:retention-fidelity}
  {Retention and Allocation Fidelity}

\appendixsubentry{app:compensation-diagnostics}
  {Memory Reconstruction and Stability}

% D. Implementation and Evaluation Protocol
\appendixentry{app:reporting-protocol}
  {Implementation and Evaluation Protocol}

\appendixsubentry{app:implementation-details}
  {Implementation Details}

\appendixsubentry{app:d.2}
  {Training Details}

\appendixsubentry{app:d.3}
  {Evaluation Protocol}

\appendixsubentry{app:validation-selection}
  {Development Set and Hyperparameter Selection}

% E. Additional Experimental Results
\appendixentry{app:additional-experiments}
  {Additional Experimental Results}

\appendixsubentry{app:detailed-results}
  {Per-task Results on LongBench and RULER at 75\% Compression}

\appendixsubentry{app:ultra-long}
  {Ultra-Long-Context Evaluation}

\appendixsubentry{app:hyperparameter-sensitivity}
  {Hyperparameter Sensitivity}

\appendixsubentry{app:shared-alternative-allocation}
  {Shared and alternative allocation strategies}

\appendixsubentry{app:training-sensitivity}
  {Sensitivity to Training Data and Randomness}

\appendixsubentry{app:efficiency-details}
  {Full Efficiency Scaling and Parameter Cost}

  \appendixentry{app:statements}
  {Limitations}
}
\par\vspace{0.8em}
\noindent
Appendix~\ref{app:method-details} positions CORE among related
methods and details the coverage teacher, cache-aware indexer,
and budget-conditioned distillation.
Appendix~\ref{app:theory-proofs} provides proofs and analyses of the
eviction identity, error certificates, allocation fidelity,
coverage interactions, and recurrent-memory stability.
Appendix~\ref{app:diagnostic-protocol} presents empirical diagnostics
of retention, allocation fidelity, and memory reconstruction.
Appendix~\ref{app:reporting-protocol} documents implementation,
training, evaluation, and hyperparameter selection.
Appendix~\ref{app:additional-experiments} reports
per-task and ultra-long-context results, allocation comparisons,
sensitivity analyses, and detailed efficiency measurements.
Appendix~\ref{app:statements} discusses CORE's limitations.

\clearpage

\appendix

\section{Additional Method Details}
\label{app:method-details}

\subsection{Positioning Relative to Closely Related KV-Cache Methods}
\label{app:method-positioning}

Table~\ref{tab:method-positioning} summarizes the distinction from the related methods. The mass--direction decomposition provides a common analytical
view of KV eviction and is also used by MOMENTKV.
CORE builds on this perspective by using one coverage-calibrated
allocation to connect retention and compensation.

\begin{table}[htp]
\centering
\vspace{-0.2in}
\caption{
Conceptual positioning of closely related fixed-budget KV-cache methods.
}
\label{tab:method-positioning}
\small
\setlength{\tabcolsep}{3.2pt}
\renewcommand{\arraystretch}{0.8}
\resizebox{\columnwidth}{!}{
\begin{tabular}{lllll}
\toprule
\textbf{Method}
& \textbf{Retention principle}
& \textbf{Evicted-state representation}
& \textbf{Compensation}
& \textbf{Retention--compensation link} \\
\midrule

CAPKV
& Information-capacity selection
& None
& None
& Retention only \\

MOMENTKV
& Moment-aware eviction
& Low-order moments
& First-order approximation
& Shared moment statistics \\

IndexMem
& Learned importance indexer
& Latent memory
& Gated residual readout
& Separate modules \\

\textbf{CORE}
& Coverage-calibrated allocation
& Allocation-weighted latent memory
& Gated residual reconstruction
& Shared allocation \\

\bottomrule
\end{tabular}}
\vspace{-0.1in}
\end{table}

CORE shares allocation weights across the two channels:
Top-$B$ determines retained support, while
$\mu_E\omega_i=\pi_\theta^{\mathcal C}(i)$ determines
evicted-state writes.
The memory map and gate remain learned components,
but no separate write-weight predictor is required.

\subsection{Coverage Teacher and Fixed-Budget Approximation}
\label{app:teacher-solver}

The teacher transfers non-additive coverage information into a distribution that can be distilled into a lightweight indexer. We distinguish the continuous optimum, the finite distillation
target, and its induced Top-$B$ set, and establish the guarantees
relating them.

\subsubsection{Coverage Marginals and Teacher Optimality}
\label{app:teacher-coverage}
\label{app:teacher-concavity}

For the cache-centered projected keys $z_i^c=z_i-\frac{1}{N}\sum_j z_j$, define
\begin{equation}
\Sigma(w)=I+\alpha\sum\nolimits_{i=1}^{N}w_i z_i^c(z_i^c)^\top,
\qquad
F(w)=\log\det\Sigma(w).
\label{eq:app-teacher-coverage}
\end{equation}
Jacobi's identity gives the marginal coverage signal
$d_i(w)=\frac{\partial F(w)}{\partial w_i}
=\alpha(z_i^c)^\top\Sigma(w)^{-1}z_i^c.$
The marginal depends on the current allocation-weighted cache
geometry: for equal vector norms, directions less represented
in $\Sigma(w)$ receive larger scores. It therefore captures
cache-conditioned coverage rather than an intrinsic token score.

\begin{proposition}[Teacher objective well-posedness]
\label{prop:teacher-optimality}
For $\alpha>0$, $\beta\geq0$, and $\tau_{\mathrm{ent}}>0$, the objective
in Eq.~\ref{eq:teacher-objective} is strictly concave on $\Delta_N$.
It has a unique maximizer $w^\star\in\operatorname{ri}(\Delta_N)$
satisfying
\begin{equation}
w^\star=\operatorname{softmax}\!\left(\frac{u+\beta d(w^\star)}{\tau_{\mathrm{ent}}}\right).
\label{eq:app-teacher-fixed-point}
\end{equation}
\end{proposition}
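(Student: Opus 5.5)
The plan is to split the objective $J(w)=\sum_i w_iu_i+\beta F(w)+\tau_{\mathrm{ent}}H(w)$, with $H(w)=-\sum_i w_i\log w_i$, into three pieces and treat each separately. The linear term is affine, hence concave. For the log-determinant term, $w\mapsto\Sigma(w)$ is affine with $\Sigma(w)\succeq I\succ0$ on $\Delta_N$, and $\log\det$ is concave on positive definite matrices. Composition with an affine map therefore makes $F$ concave; since $\beta\ge0$, so is $\beta F$. To make this explicit, I would compute the Hessian
\[
\partial_{ij}F(w)=-\alpha^2\bigl((z_i^c)^\top\Sigma^{-1}z_j^c\bigr)^2,
\]
so that $x^\top\nabla^2F\,x=-\alpha^2\operatorname{tr}\bigl(\Sigma^{-1}Z_x\Sigma^{-1}Z_x\bigr)\le0$ with $Z_x=\sum_i x_iz_i^c(z_i^c)^\top$. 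The entropy term is strictly concave on $\mathbb R_+^N$, because its Hessian is $-\operatorname{diag}(1/w_i)$ in the interior and it extends continuously to the boundary. Since $\tau_{\mathrm{ent}}>0$, the sum $J$ is strictly concave on $\Delta_N$.

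Existence then follows because $J$ is continuous on the compact set $\Delta_N$; this uses the convention $0\log0=0$ and $\Sigma(w)\succeq I$, which keeps $F$ finite and smooth on a neighbourhood of $\Delta_N$. Strict concavity gives uniqueness.

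The step needing some care is showing $w^\star\in\operatorname{ri}(\Delta_N)$. The approach is the standard entropy-barrier argument. Suppose $w^\star_k=0$ for some $k$, and pick $j$ with $w^\star_j>0$. Move along $w^\star+\varepsilon(e_k-e_j)$. The directional derivative of the linear and log-det parts stays bounded, since $d_i(w)\le\alpha\|z_i^c\|^2$ uniformly because $\Sigma^{-1}\preceq I$. The entropy contributes $-\log\varepsilon\to+\infty$ as $\varepsilon\downarrow0$. Hence $J$ strictly increases for small $\varepsilon$, which contradicts optimality. So every coordinate of $w^\star$ is positive.

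At an interior point, the KKT conditions with only the equality constraint $\sum_iw_i=1$ read
\[
u_i+\beta d_i(w^\star)-\tau_{\mathrm{ent}}(\log w^\star_i+1)=\lambda\quad\text{for all }i,
\]
where $d_i=\partial F/\partial w_i$ comes from Jacobi's identity as stated above. Solving gives $w^\star_i\propto\exp\bigl((u_i+\beta d_i(w^\star))/\tau_{\mathrm{ent}}\bigr)$, and normalization yields Eq.~\ref{eq:app-teacher-fixed-point}. Conversely, by concavity any interior point satisfying this stationarity is the maximizer. It follows that the fixed point is unique as a solution of the self-consistency equation as well.
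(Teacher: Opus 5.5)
Your proposal is correct and follows the paper's proof step for step: the affine, log-determinant (concave via $\Sigma(w)\succeq I$) and strictly concave entropy pieces; existence by compactness and uniqueness by strict concavity; interiority because the entropy gain dominates the bounded first-order change in utility and coverage; and the Lagrange stationarity condition, exponentiated and normalized into the softmax fixed point. Beyond the paper, you give the explicit Hessian $-\alpha^2\bigl((z_i^c)^\top\Sigma^{-1}z_j^c\bigr)^2$ and note that any solution of the fixed-point equation is interior and stationary, hence equal to $w^\star$; the paper only uses this fact implicitly later, when it identifies limits of the teacher iteration with the optimizer.
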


\begin{proof}
The utility term is affine. Since $\Sigma(w)\succeq I\succ0$, the
log-determinant term is concave in $w$, while entropy is strictly concave
for $\tau_{\mathrm{ent}}>0$. Continuity on the compact simplex gives
existence, and strict concavity gives uniqueness. The maximizer is interior:
moving mass $\epsilon$ from a positive coordinate to a zero coordinate
yields an entropy gain of order $-\epsilon\log\epsilon$, dominating the
$O(\epsilon)$ change in utility and coverage. At the interior optimum,
stationarity under the simplex constraint requires
\begin{equation}
u_i+\beta d_i(w^\star)-\tau_{\mathrm{ent}}(\log w_i^\star+1)-\nu=0.
\label{eq:app-teacher-stationarity-condition}
\end{equation}
Exponentiating and normalizing gives Eq.~\ref{eq:app-teacher-fixed-point}.
\end{proof}

\paragraph{Coverage-weight monotonicity.}
Write $J_\beta(w)=G(w)+\beta F(w)$, where
$G(w)=\langle u,w\rangle
-\tau_{\mathrm{ent}}\sum_i w_i\log w_i$.
For $0\leq\beta_1<\beta_2$, adding the optimality inequalities
for $w_j=w_{\beta_j}^\star$ gives
$(\beta_2-\beta_1)\bigl[F(w_2)-F(w_1)\bigr]\geq0.$
Thus the exact optimum's log-determinant coverage is
nondecreasing in $\beta$, without implying a monotonic
reduction in the output-space directional factor $\mathcal D_E$.

\subsubsection{Finite-Step Distillation Target}
\label{app:teacher-update-derivation}
\label{app:teacher-stopping}

Starting from $w_i^{(0)}=1/N$, CORE applies
Eq.~\ref{eq:teacher-update} for $R$ steps and distills
$\pi_T^{\mathcal C}=w^{(R)}$.
Holding $d^{(r)}=d(w^{(r)})$ fixed, this update uniquely solves the entropy-regularized linear surrogate
\begin{equation}
w^{(r+1)}
=\arg\max_{w\in\Delta_N}
\left\{
\sum\nolimits_{i=1}^{N}w_i\bigl(u_i+\beta d_i^{(r)}\bigr)
-\tau_{\mathrm{ent}}\sum\nolimits_{i=1}^{N}w_i\log w_i
\right\}.
\label{eq:app-teacher-linearized}
\end{equation}
If the full sequence of iterates converges, its limit satisfies Eq.~\ref{eq:app-teacher-fixed-point} and hence equals the unique optimizer. Strict concavity does not guarantee global convergence or monotonic objective improvement for this iteration. Distillation uses the finite target $\pi_T^{\mathcal C}$, while online inference uses the trained indexer.

\begin{proposition}[Finite-target certificate]
\label{prop:finite-target-certificate}
Let $J$ be the objective in
Eq.~\ref{eq:teacher-objective}; define the fixed-point map
\[
\mathcal T(w)
=
\operatorname{softmax}
\left(
\frac{u+\beta d(w)}{\tau_{\mathrm{ent}}}
\right),
\qquad
e(w)
=
D_{\mathrm{KL}}\!\left(w\,\middle\|\,\mathcal T(w)\right).
\]
For every $w\in\operatorname{ri}(\Delta_N)$,
$0
\leq
J(w^\star)-J(w)
\leq
\tau_{\mathrm{ent}} e(w),
D_{\mathrm{KL}}(w\|w^\star)
\leq
e(w).$
In particular, let
$e_R=e(w^{(R)}),
\gamma_{B,R}
=
w_{(B)}^{(R)}-w_{(B+1)}^{(R)},$
with coordinates sorted in nonincreasing order. Then
\[
\gamma_{B,R}>\sqrt{2e_R}
\quad\Longrightarrow\quad
\operatorname{TopB}(w^{(R)})
=
\operatorname{TopB}(w^\star).
\]
With four protected sinks and $4<B<N$, the same criterion
applies to protected retained sets using the cutoff gap at
rank $B-4$ among non-sink coordinates of $w^{(R)}$,
without renormalization.
\end{proposition}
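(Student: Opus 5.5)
The proof compares $J$ with its linearization at $w$ and uses the strong concavity that the entropy term gives relative to KL. Write $J(v)=C(v)+\tau_{\mathrm{ent}}H(v)$. Here $C(v)=\langle u,v\rangle+\beta F(v)$ is concave and $H(v)=-\sum_i v_i\log v_i$. By Proposition~\ref{prop:teacher-optimality}, $F$ is concave and differentiable on $\operatorname{ri}(\Delta_N)$, with gradient $d(w)$ given by Jacobi's identity. Fix $w\in\operatorname{ri}(\Delta_N)$ and define the linearized objective
\[
L_w(v)=\langle u+\beta d(w),v\rangle+\tau_{\mathrm{ent}}H(v).
\]
Concavity of $F$ gives $F(v)\le F(w)+\langle d(w),v-w\rangle$. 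Hence $J(v)-J(w)\le L_w(v)-L_w(w)$ for every $v\in\Delta_N$, with equality at $v=w$.

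\textbf{Step 1: the suboptimality bound.} The Gibbs variational identity for entropy-regularized linear objectives, as in Eq.~\ref{eq:app-teacher-linearized}, gives
\[
L_w(\mathcal T(w))-L_w(v)=\tau_{\mathrm{ent}}D_{\mathrm{KL}}(v\|\mathcal T(w)).
\]
This holds because $\mathcal T(w)$ is exactly the softmax maximizer of $L_w$. Applying it at $v=w$ and $v=w^\star$ yields
\[
J(w^\star)-J(w)\le L_w(w^\star)-L_w(w)=\tau_{\mathrm{ent}}\bigl[e(w)-D_{\mathrm{KL}}(w^\star\|\mathcal T(w))\bigr]\le\tau_{\mathrm{ent}}e(w).
\]
Nonnegativity of $J(w^\star)-J(w)$ holds because $w^\star$ is the maximizer.

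\textbf{Step 2: the KL bound.} Negative entropy has Bregman divergence equal to KL on the simplex. Since $C$ is concave, $J$ is $\tau_{\mathrm{ent}}$-strongly concave in the KL sense:
\[
J(w)\le J(w^\star)+\langle\nabla J(w^\star),w-w^\star\rangle-\tau_{\mathrm{ent}}D_{\mathrm{KL}}(w\|w^\star).
\]
At the interior optimum, the stationarity condition Eq.~\ref{eq:app-teacher-stationarity-condition} makes every coordinate of $\nabla J(w^\star)$ equal to the same constant. Because $\sum_i(w_i-w_i^\star)=0$, the inner product vanishes. Combining with Step 1 gives
\[
\tau_{\mathrm{ent}}D_{\mathrm{KL}}(w\|w^\star)\le J(w^\star)-J(w)\le\tau_{\mathrm{ent}}e(w).
\]
Dividing by $\tau_{\mathrm{ent}}>0$ gives the second claim.

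\textbf{Step 3: Top-$B$ stability.} Apply Step 2 at $w=w^{(R)}$, which lies in the relative interior because it is a softmax. Pinsker's inequality then gives $\|w^{(R)}-w^\star\|_1\le\sqrt{2e_R}$. Both vectors sum to one, so the positive and negative parts of the difference each have mass at most half the $\ell_1$ norm. Hence
\[
\|w^{(R)}-w^\star\|_\infty\le\tfrac12\|w^{(R)}-w^\star\|_1\le\sqrt{e_R/2}.
\]
Now take any $i$ in $\operatorname{TopB}(w^{(R)})$ and any $j$ outside it. Then $w^\star_i-w^\star_j\ge\gamma_{B,R}-2\sqrt{e_R/2}=\gamma_{B,R}-\sqrt{2e_R}>0$. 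So every retained coordinate strictly dominates every excluded one under $w^\star$. The Top-$B$ set of $w^\star$ is therefore unambiguous, with no ties at the cutoff, and coincides with $\operatorname{TopB}(w^{(R)})$.

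For the protected-sink version, the four sinks are retained by convention. The remaining $B-4$ slots are filled by ranking non-sink coordinates. The same coordinatewise $\ell_\infty$ bound holds on this sub-vector without renormalizing, so the identical separation argument applies at rank $B-4$.

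The step I expect to be most delicate is Step 2. Its crux is that the first-order term vanishes, which uses interiority of $w^\star$ (from Proposition~\ref{prop:teacher-optimality}) so that the gradient is a constant vector. It also needs care with $H$ being non-smooth at the boundary; this is harmless because both points are interior.
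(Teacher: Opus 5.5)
Your proof is correct and follows essentially the same route as the paper. Both bound $J$ above by its entropy-regularized linearization at $w$, whose maximizer is $\mathcal T(w)$ with gap $\tau_{\mathrm{ent}}D_{\mathrm{KL}}(\cdot\|\mathcal T(w))$, bound $J(w^\star)-J(w)$ below by $\tau_{\mathrm{ent}}D_{\mathrm{KL}}(w\|w^\star)$ via concavity of $F$ and stationarity of $w^\star$, and finish with Pinsker and a pairwise-separation argument; you are only more explicit, keeping the extra $-\tau_{\mathrm{ent}}D_{\mathrm{KL}}(w^\star\|\mathcal T(w))$ term and spelling out the $\ell_\infty\le\tfrac12\ell_1$ step that the paper leaves implicit.
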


\begin{proof}
Write $H(z)=-\sum_i z_i\log z_i$. For fixed $w$, define the entropy-regularized linearization
\[
Q_w(z)
=
\langle u,z\rangle
+
\beta
\left[
F(w)+\langle d(w),z-w\rangle
\right]
+
\tau_{\mathrm{ent}}H(z).
\]
Concavity of $F$ gives $J(z)\leq Q_w(z)$, with equality at $z=w$,
and $\mathcal T(w)$ is the unique maximizer of $Q_w$. Therefore,
$J(w^\star)-J(w)
\leq
Q_w(\mathcal T(w))-Q_w(w)
=
\tau_{\mathrm{ent}}
D_{\mathrm{KL}}\!\left(w\,\middle\|\,\mathcal T(w)\right).$
Concavity of $F$ together with the stationarity of $w^\star$ also gives
$J(w^\star)-J(w)
\geq
\tau_{\mathrm{ent}}D_{\mathrm{KL}}(w\|w^\star).$
Combining the two inequalities yields
$D_{\mathrm{KL}}(w\|w^\star)\leq e(w)$.
Pinsker's inequality gives
$\|w^{(R)}-w^\star\|_1\leq\sqrt{2e_R}$.
A cutoff gap exceeding this bound prevents any
retained--excluded pair from reversing order.
\end{proof}

\paragraph{Finite-step numerical check.}

\begin{wraptable}{r}{0.45\textwidth}
\vspace{-0.33in}
\centering\scriptsize
\setlength{\tabcolsep}{1.5pt}
\renewcommand{\arraystretch}{0.8}
\caption{Teacher residuals and Top-$B$ selection agreement
(relative to $R=16$).}
\label{tab:teacher-certificates}
\begin{tabular}{lrrr}
\toprule
Steps & Median $e_R$ & P90 $e_R$ & Overlap vs.\ 16\\
\midrule
$1$ & $7.643\times10^{-3}$ & $2.407\times10^{-2}$ & 92.80\%\\
$R_0=8$ & $2.176\times10^{-7}$ & $5.564\times10^{-5}$ & 99.74\%\\
$2R_0=16$ & $4.025\times10^{-12}$ & $1.496\times10^{-7}$ & ref.\\
\bottomrule
\end{tabular}
\vspace{-0.17in}
\end{wraptable}

We evaluate 104 RULER-16K sequences (eight per task)
at four layers spanning model depth.
Table~\ref{tab:teacher-certificates} reports median and
P90 residuals $e_R$ over valid sequence--layer states.
Selection overlap $|S_R(B)\cap S_{16}(B)|/B$, where
$S_R(B)=\operatorname{TopB}(w^{(R)})$, is averaged across
these states and compression ratios $\{0.25,0.50,0.75,0.90\}$
under the common selection protocol.
Results at $R_0=8$ indicate small residuals and stable selections.
Overlap with the $R=16$ reference alone does not certify exact
Top-$B$ recovery; Proposition~\ref{prop:finite-target-certificate}
provides a sufficient condition.

\subsubsection{Connection to Fixed-Budget Selection}
\label{app:fixed-budget-relation}

The continuous objective is connected to set selection by restricting it to uniform allocations on size-$B$ subsets. For $S\subseteq\mathcal C$ with $|S|=B$, let
$w_i^S=\frac{1}{B}\mathbb I[i\in S].$
Then
\begin{equation}
J(w^S)=J_B(S)+\tau_{\mathrm{ent}}\log B,
\label{eq:app-budget-restriction}
\end{equation}
\begin{equation}
J_B(S)=\frac{1}{B}\sum\nolimits_{i\in S}u_i
+\beta\log\det\!\left(I+\frac{\alpha}{B}\sum\nolimits_{i\in S}z_i^c(z_i^c)^\top\right).
\label{eq:app-fixed-budget-objective}
\end{equation}
The entropy contribution is constant at fixed $B$, so this restriction preserves the ordering of the non-additive set objective $J_B$.
The finite teacher ranking has a different, modular interpretation. For $R\geq1$, monotonicity of softmax gives
\begin{equation}
\operatorname{TopB}(\pi_T^{\mathcal C})
\in\arg\max\nolimits_{\substack{S\subseteq\mathcal C\\|S|=B}}
\sum\nolimits_{i\in S}\left[u_i+\beta d_i(w^{(R-1)})\right].
\label{eq:app-modular-topb}
\end{equation}
A fixed tie-breaking rule selects among equivalent maximizers. The ranking therefore optimizes a surrogate whose coefficients incorporate the final cache-conditioned coverage marginals. It need not maximize $J_B(S)$: the marginals are evaluated at a common allocation rather than recomputed after each discrete selection. Appendix~\ref{app:proof-cache-conditioned-coverage} characterizes the coverage interactions omitted by this scalarization. The student further amortizes this teacher ranking through distillation.

\subsection{Cache-Aware Indexer Specification}
\label{app:feature-details}

Each candidate receives a 14-dimensional input
$\xi_i=[\xi_i^{\mathrm{rel}};\xi_i^{\mathrm{cov}};
\xi_i^{\mathrm{aux}}]$:
six relation features summarize observed query demand,
four coverage features describe cache geometry, and four
auxiliary features encode position, value, and execution state.

\subsubsection{Query--Key Relation Features}
\label{app:relation-features}

Student features use the cache-relative responses $r_{t,h,i}$
and group aggregates $R_{t,m,i}$ in Eq.~\ref{eq:relation-signal}.
Ignoring the stabilizer, $r_{t,h,i}>0$ indicates attention above
the row's uniform baseline.

The relation vector is $\xi_i^{\mathrm{rel}}=[f_1(i),\ldots,f_6(i)]$, with
\begin{equation}
\begin{aligned}
f_1(i)&=\max\nolimits_{t,m}R_{t,m,i},
&
f_2(i)&=\operatorname{TopMean}_{t,m}R_{t,m,i},
\\
f_3(i)&=\operatorname{Mean}_{(t,m):R_{t,m,i}>0}R_{t,m,i},
&
f_4(i)&=\operatorname{TopMean}_{t\in\mathcal Q_{\mathrm{recent}},m}R_{t,m,i},
\\
f_5(i)&=\frac{|\{t:\max_m R_{t,m,i}>0\}|}{|\mathcal Q|},
&
f_6(i)&=\frac{|\{m:\operatorname{TopMean}_{t\in\mathcal Q}R_{t,m,i}>0\}|}{H_{\mathrm{kv}}}.
\end{aligned}
\label{eq:app-relation-feature-vector}
\end{equation}
These features capture peak response, the strongest responses, mean positive response, recent-query demand, query-level support, and KV-group support. We set $f_3(i)=0$ when its positive set is empty. Here $\operatorname{TopMean}$ denotes the mean of the largest entries under
the aggregation size specified for each use. We use $K=4$ by
default, while $f_6$ uses the top 25\% of causally visible queries. Causally invisible query--candidate pairs are masked; they do not contribute positive support.
The statistics can be accumulated over query rows without
storing the full query--candidate matrix; computing the
underlying QK responses remains part of online feature extraction.

\subsubsection{Coverage and Redundancy Features}
\label{app:coverage-features}

Teacher and student share the projected and cache-centered geometry
in Eq.~\ref{eq:coverage-representation}.
The entries of the layer-specific matrix
$P_k^{(\ell)}\in\mathbb R^{d_z\times H_{\mathrm{kv}}d_{\mathrm{head}}}$
are sampled from $\mathcal N(0,1/d_z)$ once and held fixed.

\textbf{Local directional descriptors.}
Let
$\widehat z_i^c=\frac{z_i^c}{\max(\|z_i^c\|_2,\epsilon_{\mathrm{geo}})},
\epsilon_{\mathrm{geo}}=10^{-6}.$
Reliable unit directional descriptors require
$\|z_i^c\|_2>\epsilon_{\mathrm{geo}}$.
We sample $M_p\geq2$ fixed unit prototypes as
\begin{equation}
\widetilde p_m\sim\mathcal N(0,I_{d_z}),
\qquad
p_m=\widetilde p_m/\|\widetilde p_m\|_2.
\label{eq:app-random-prototypes}
\end{equation}
The random seed is fixed, and the prototypes are unchanged during training and inference.

For reliable candidates, $m_1(i)$ and $m_2(i)$ index the largest and second-largest similarities $(\widehat z_i^c)^\top p_m$, respectively. Define the occupancy of each prototype using reliable candidates only:
{\small
\begin{equation}
n_m=\left|\left\{j:\|z_j^c\|_2>\epsilon_{\mathrm{geo}},\ m_1(j)=m\right\}\right|.
\label{eq:app-prototype-count}
\end{equation}
The local descriptors are
\begin{equation}
\left(o_i,d_i^{\mathrm{proto}},\Delta_i^{\mathrm{proto}}\right)
=
\begin{cases}
\left(
\log(1+n_{m_1(i)}),
1-(\widehat z_i^c)^\top p_{m_1(i)},
(\widehat z_i^c)^\top[p_{m_1(i)}-p_{m_2(i)}]
\right),
& \|z_i^c\|_2>\epsilon_{\mathrm{geo}},
\\
(0,1,0),
& \|z_i^c\|_2\leq\epsilon_{\mathrm{geo}}.
\end{cases}
\label{eq:app-prototype-descriptors}
\end{equation}}
They measure directional crowding, distance from the assigned
prototype, and assignment margin.

\textbf{Global cache-relative deviation.}
To complement local occupancy, apply a fixed randomized Hadamard rotation
\begin{equation}
\widetilde z_i=Rz_i,
\qquad
R=\frac{1}{\sqrt{d_z}}HD,
\label{eq:app-hadamard-rotation}
\end{equation}
where $D$ is a random-sign diagonal matrix and $HH^\top=d_z I$. The rotation preserves Euclidean geometry, with centered coordinates $\widetilde z_i-\frac{1}{N}\sum_j\widetilde z_j=Rz_i^c$. The subsequent coordinate-wise standardization is defined in this fixed rotated basis.
For coordinate $r$, compute
\begin{equation}
\mu_r=\frac{1}{N}\sum\nolimits_j\widetilde z_{j,r},
\qquad
\sigma_r^2=\frac{1}{N}\sum\nolimits_j\widetilde z_{j,r}^{\,2}-\mu_r^2,
\qquad
\sigma_{r,\mathrm{eff}}^2=(1-\rho_{\mathrm{sh}})\sigma_r^2
+\frac{\rho_{\mathrm{sh}}}{d_z}\sum\nolimits_{a=1}^{d_z}\sigma_a^2,
\label{eq:app-variance-shrinkage}
\end{equation}
where $\rho_{\mathrm{sh}}\in[0,1]$. The global descriptor is
\begin{equation}
g_i^{\mathrm{dev}}
=\log\!\left[
1+\frac{1}{d_z}\sum\nolimits_{r=1}^{d_z}
\frac{(\widetilde z_{i,r}-\mu_r)^2}{\sigma_{r,\mathrm{eff}}^2+\epsilon}
\right].
\label{eq:app-global-deviation}
\end{equation}
The complete coverage vector is
$\xi_i^{\mathrm{cov}}=[o_i,d_i^{\mathrm{proto}},\Delta_i^{\mathrm{proto}},g_i^{\mathrm{dev}}].$
These descriptors provide compact cache-relative cues. They do not reconstruct $\Sigma(w)^{-1}$ or analytically recover the teacher marginal; their relationship to the coverage-calibrated allocation is learned through distillation.

\subsubsection{Auxiliary Cache-State Features}
\label{app:auxiliary-features}

Four auxiliary descriptors (normalized age, sink/special-token status, value energy, execution phase):
\begin{equation}
\begin{aligned}
a_i^{\mathrm{age}}
&=\frac{\log(1+t_{\mathrm{cur}}-\operatorname{pos}_i)}{\log(1+t_{\mathrm{cur}})},\quad
a_i^{\mathrm{sink}}
=\mathbb I[\operatorname{pos}_i\leq N_{\mathrm{sink}}]
\vee\mathbb I[\operatorname{tok}_i\in V_{\mathrm{special}}],
\\
a_i^{\mathrm{value}}
&=\log\!\left(1+\frac{1}{H_{\mathrm{kv}}}\sum\nolimits_{m=1}^{H_{\mathrm{kv}}}
\frac{\|v_i^{(m)}\|_2^2}{d_{\mathrm{head}}}\right),
\quad
a_i^{\mathrm{phase}}
=\mathbb I[\mathrm{decode}].
\end{aligned}
\label{eq:app-auxiliary-descriptors}
\end{equation}
Thus
$\xi_i^{\mathrm{aux}}
=[a_i^{\mathrm{age}},a_i^{\mathrm{sink}},a_i^{\mathrm{value}},a_i^{\mathrm{phase}}].$
The sink indicator supplies a feature; any mandatory protection is enforced separately by the retention policy and counted within the retained-KV budget.

\subsection{Budget-Conditioned Boundary Supervision}
\label{app:boundary-loss}

Protected positions are excluded from ranking supervision.
Below, $N$ denotes the number of eligible candidates and $B$
the remaining retention budget.
Let $\operatorname{rank}_T(i)$ rank these candidates by
decreasing teacher probability, with fixed tie-breaking.
For boundary width $\Delta_B$, define
\begin{align}
P_{\mathrm{bd}}
&=\{i:\max(1,B-\Delta_B+1)\leq\operatorname{rank}_T(i)\leq B\},
\label{eq:app-positive-boundary}\\
N_{\mathrm{bd}}
&=\{j:B+1\leq\operatorname{rank}_T(j)\leq\min(N,B+\Delta_B)\}.
\label{eq:app-negative-boundary}
\end{align}
These windows contain candidates immediately above and below
the teacher cutoff.
For nonempty windows, the pairwise loss is in Eq.~\ref{eq:indexer-loss}.
It penalizes inversions and correctly ordered pairs with
separation below $m_{\mathrm{bd}}$, but local ordering alone
does not guarantee complete teacher--student Top-$B$ agreement.
One indexer supports multiple budgets using the same teacher
allocation and ranking; only the boundary windows vary.
Prefill supervision averages the boundary loss over all
configured budgets, while single-budget decode supervision
uses a uniformly sampled budget
(Appendix~\ref{app:d.2}).

\section{Theoretical Details for Sec.~\ref{sec:theory}}
\label{app:theory-proofs}

\subsection{Scope and Characterization of the Identity}
\label{app:scope}

We prove Theorem~\ref{thm:characterization} by separating the
algebraic factorization from its converse.
Throughout, $\mathcal I$ is finite with $|\mathcal I|\geq3$,
and $o:2^{\mathcal I}\setminus\{\emptyset\}\to\mathbb R^d$
with $d\geq2$.

% ----------------------------------------------------------------------
\subsubsection{Setup and Sufficiency}
\label{app:scope-setup}

A normalized aggregation readout has positive weights $w_i>0$
and vectors $v_i\in\mathbb R^d$ such that, for every nonempty
$S\subseteq\mathcal I$,
\begin{equation}
o(S)=\frac{N(S)}{Z(S)},\qquad
N(S)=\sum\nolimits_{i\in S}w_iv_i,\qquad
Z(S)=\sum\nolimits_{i\in S}w_i.
\label{eq:app-nar}
\end{equation}
For disjoint nonempty $M,E$, additivity gives
\[
o(M\cup E)=(1-\pi_E)o(M)+\pi_Eo(E),\qquad
\pi_E=\frac{Z(E)}{Z(M)+Z(E)}\in(0,1).
\]
Subtracting $o(M)$ proves the factorization in
Theorem~\ref{thm:characterization}, without a non-degeneracy
assumption.

% ----------------------------------------------------------------------
\subsubsection{Necessity: Proof of Theorem~\ref{thm:characterization}}
\label{app:scope-characterization}

\begin{theorem}[Characterization under non-degeneracy]
\label{thm:characterization-formal}
Assume the following conditions:
\begin{enumerate}[label=(A\arabic*),leftmargin=2.4em,itemsep=0pt]
\item for all disjoint nonempty $M,E$ there exist
$\lambda(M,E)\in(0,1)$ and $g(E)\in\mathbb R^d$, the latter depending
only on $E$, with
$
o(M\cup E)
=
(1-\lambda(M,E))o(M)+\lambda(M,E)g(E);
$
\item $g(\{i\})=o(\{i\})$ for all $i$;
\item the readouts of any three pairwise disjoint nonempty sets are
affinely independent, and singleton readouts are pairwise distinct.
\end{enumerate}
Under these conditions there are positive weights $w_i$, unique up to a
common positive scale, for which, with $v_i:=o(\{i\})$ and
$W(S):=\sum_{i\in S}w_i$,
$
o(S)
=
\frac{\sum_{i\in S}w_iv_i}{W(S)},
$
for every nonempty $S$. Conversely, every readout in
Eq.~\ref{eq:app-nar} satisfies (A1)--(A2) with
$g(E)=o(E)$ and
$\lambda(M,E)=\frac{W(E)}{W(M)+W(E)}$.
\end{theorem}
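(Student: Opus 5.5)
The converse direction is immediate from Appendix~\ref{app:scope-setup}: for a readout of the form in Eq.~\ref{eq:app-nar}, the additive computation there gives (A1) with $g(E)=o(E)$ and $\lambda(M,E)=W(E)/(W(M)+W(E))$. Condition (A2) is trivial. The work is the forward direction, which I would organize in three steps: identify $g$ with $o$, extract pairwise weight ratios, and show these ratios are consistent and reproduce $o$ on every set.

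\textbf{Step 1: $g(E)=o(E)$ and symmetric mixing.} Take disjoint nonempty $M,E$ and write $S=M\cup E$. Applying (A1) with the roles of $M$ and $E$ swapped gives
\[
o(S)=(1-\lambda(M,E))o(M)+\lambda(M,E)g(E)=(1-\lambda(E,M))o(E)+\lambda(E,M)g(M).
\]
For singletons, (A2) turns this into $o(\{i,j\})$ lying on the segment between $v_i$ and $v_j$, with coefficients $\lambda(\{j\},\{i\})=1-\lambda(\{i\},\{j\})$. The pairwise distinctness in (A3) makes these coefficients unique.

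To get $g(E)=o(E)$ for general $E$, I would induct on $|E|$. Pick $i\notin E\cup M$; this is possible when $|\mathcal I|\ge3$, and when $M\cup E=\mathcal I$ I would use a separate argument. Then expand $o(\{i\}\cup E)$ in two ways and compare.

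\textbf{Step 2: defining the weights.} Fix a reference index $1$ and set $w_1=1$. For each $j$, set $w_j/w_1=\lambda(\{1\},\{j\})/(1-\lambda(\{1\},\{j\}))$, which is positive because $\lambda\in(0,1)$. This definition reproduces $o$ on every pair containing $1$.

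\textbf{Step 3 (main obstacle): consistency and extension to all sets.} The key case is a triple $\{i,j,k\}$. Expanding $o(\{i,j,k\})$ via (A1) as $\{i\}\cup\{j,k\}$, as $\{j\}\cup\{i,k\}$, and as $\{k\}\cup\{i,j\}$ produces three affine combinations of $v_i,v_j,v_k$ that must agree. Affine independence of the three singleton readouts from (A3) lets me equate the barycentric coordinates. That equality forces the cocycle relation
\[
\frac{w_k}{w_i}=\frac{w_j}{w_i}\cdot\frac{w_k}{w_j},
\]
so the pairwise ratios are consistent, and it shows $o(\{i,j,k\})$ is the $w$-weighted mean.

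For general $S$, I would induct on $|S|$. Write $S=\{i\}\cup E$ and $S=\{j\}\cup E'$. Both decompositions give affine combinations of points that, by induction, are weighted means. Comparing them in barycentric coordinates pins $\lambda(\{i\},E)=W(E)/W(S)$. This again uses affine independence, now applied to the three disjoint sets $\{i\},\{j\},E\setminus\{j\}$ so that their readouts can be compared.

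The subtle point is the degenerate last step where no spare index exists. There I would exploit the three-set decomposition $S=\{i\}\cup\{j\}\cup(S\setminus\{i,j\})$ and the nonempty-complement assumptions. Uniqueness up to scale follows because the ratios are fixed by the pair readouts, using the distinct-singleton clause of (A3).
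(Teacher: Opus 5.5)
\textbf{Genuine gap: Step~1 targets a claim that (A1)--(A3) do not imply.} The equality $g(E)=o(E)$ can fail when $|\mathcal I\setminus E|=1$. In that case the only admissible $M$ is the single complementary index $a$. Then (A1) only places $g(E)$ on the ray from $v_a$ through $o(\mathcal I)$, beyond $o(\mathcal I)$. Concretely, take $\mathcal I=\{1,2,3\}$ and let $o$ be the unweighted mean of affinely independent $v_1,v_2,v_3$. Set $g(\{2,3\}):=2o(\mathcal I)-v_1$ and $\lambda(\{1\},\{2,3\}):=\tfrac12$. All of (A1)--(A3) still hold, since no other $M$ is disjoint from $\{2,3\}$. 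Yet $g(\{2,3\})=-\tfrac13v_1+\tfrac23v_2+\tfrac23v_3\neq o(\{2,3\})$.

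So the ``separate argument'' you plan for $M\cup E=\mathcal I$ cannot exist, and the theorem has no nonempty-complement hypothesis to fall back on. This is not cosmetic. Your Step~3 is phrased through $\lambda(\{i\},E)$, i.e.\ $o(S)=(1-\lambda)v_i+\lambda g(E)$ with the large set $E=S\setminus\{i\}$ in the $g$-slot. At the final step $S=\mathcal I$ you are therefore using exactly the undetermined $g$. Your triple step has the same defect when $\mathcal I=\{i,j,k\}$ if you expand with the pair in the $g$-slot.

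\textbf{Repair.} Drop Step~1 and always put the singleton in the inserted slot: $o(S)=(1-\lambda(S\setminus\{i\},\{i\}))\,o(S\setminus\{i\})+\lambda(S\setminus\{i\},\{i\})\,v_i$. This needs only (A2).
\begin{itemize}
\item \emph{Triples.} Insert each singleton into the complementary pair. The resulting barycentric ratios are the pair odds $\rho_{ab}=\lambda(\{a\},\{b\})/(1-\lambda(\{a\},\{b\}))$, which gives $\rho_{ik}=\rho_{ij}\rho_{jk}$. This is the real content of the cocycle; the identity in the $w$'s you wrote is tautological once $w$ is defined.
\item \emph{Induction for $|S|\ge4$.} Pick $i\neq j$ in $S$ and set $T=S\setminus\{i,j\}$. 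Substitute the inductive formulas for $o(T\cup\{j\})$ and $o(T\cup\{i\})$. Then match coordinates in the affinely independent triple $v_i,v_j,o(T)$, which are readouts of the disjoint sets $\{i\},\{j\},T$. This forces the inserted coefficients to be $w_i/W(S)$ and $w_j/W(S)$. It yields the weighted mean uniformly, including $S=\mathcal I$, with no spare index.
\end{itemize}

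\textbf{Comparison with the paper.} Once repaired, your route is a streamlined version of the paper's. The paper runs the same two-order singleton insertion over a general base set $M$ to derive the odds recursion $R(M\cup\{b\},a)=R(M,a)/(1+R(M,b))$. It then inducts on odds to get $R(M,a)=w_a/W(M)$, and only afterwards on readouts. You instead feed the inductive hypothesis directly into the exchange. Both use only singleton insertions. That is why the paper never needs $g(E)=o(E)$, and it explicitly remarks that $g(E)$ can be undetermined when $|\mathcal I\setminus E|=1$.
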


\begin{proof}
Write
$\mu(M,a):=\lambda(M,\{a\})$ and
$v_a:=o(\{a\})$.

\emph{Exchange identity and odds recursion.}
Under (A1)--(A2),
$
o(M\cup\{a\})
=
(1-\mu(M,a))o(M)+\mu(M,a)v_a.
$
Compute $o(M\cup\{a,b\})$ by inserting $a$ then $b$ and vice versa.
Write $A=\mu(M,a)$ and $B=\mu(M,b)$, with
$A'=\mu(M\cup\{b\},a)$ and $B'=\mu(M\cup\{a\},b)$.
By (A3), matching coefficients of the affinely independent triple
$o(M),v_a,v_b$ gives $A'=(1-B')A$ and $B'=(1-A')B$.
Thus $A'=A(1-B)/(1-AB)$.
Defining the insertion odds
$R(M,a):=\mu(M,a)/(1-\mu(M,a))$ yields
\begin{equation}
\vspace{-0.05in}
R(M\cup\{b\},a)
=
\frac{R(M,a)}{1+R(M,b)}.
\label{eq:app-odds}
\vspace{-0.05in}
\end{equation}

\emph{Pair weights.}
Fix $c_0$, set $w_{c_0}:=1$, and
$
w_a:=R(\{c_0\},a).
$
For distinct $c,a\neq c_0$, expanding $o(\{c_0,c,a\})$ along
$M=\{c_0,c\}$ using Eq.~\ref{eq:app-odds} gives
\vspace{-0.05in}
\begin{equation}
o(\{c_0,c,a\})
=
\frac{
v_{c_0}+w_cv_c+w_av_a
}{
1+w_c+w_a
}.
\label{eq:app-triple}
\vspace{-0.05in}
\end{equation}
Expanding instead along $M=\{c,a\}$ with $c_0$ inserted last and matching coefficients of $v_{c_0}$, $v_c$, and $v_a$ against Eq.~\ref{eq:app-triple} gives $\mu(\{c\},a)=\frac{w_a}{w_c+w_a}$, so $R(\{c\},a)=w_a/w_c$.
For pairs containing $c_0$, the same ratio follows from the
definition of $w_a$ and the complementary coefficients in
$o(\{c_0,a\})$.

\emph{Ratio form.}
The odds recursion extends the pairwise ratios to arbitrary subsets.
Induction on $|M|$ in Eq.~\ref{eq:app-odds} gives $R(M,a)=w_a/W(M)$, and therefore $\mu(M,a)=w_a/(W(M)+w_a)$. A second induction on $|S|$ through the insertion rule then gives $o(S)=\frac{\sum_{i\in S}w_i v_i}{W(S)}$ for every nonempty $S$.

\emph{Uniqueness and canonical factors.}
Singleton readouts are distinct, so each two-item readout uniquely
identifies its convex mixing coefficient. The pair-weight construction
therefore fixes every ratio $w_a/w_{c_0}$, proving uniqueness up to a
common positive scale. Appendix~\ref{app:scope-setup} then gives
the canonical factors $g(E)=o(E)$ and
$\lambda(M,E)=W(E)/(W(M)+W(E))$.
\end{proof}

\textbf{Scope of identification.}
Assumption (A3) enables affine coefficient matching in the converse;
the sufficient factorization requires no non-degeneracy.
The converse identifies the normalized representation, not every
auxiliary $g$ allowed by (A1). In particular, $g(E)$ need not be
determined when $|\mathcal I\setminus E|=1$.

\subsubsection{Boundary Cases}
\label{app:scope-boundary}

\begin{remark}[Boundary cases]
The characterization also delineates three nearby regimes in which the
interpretation changes:

\emph{(i) Signed weights.}
Retaining (A2)--(A3), allow
$\lambda\in\mathbb R\setminus\{0,1\}$ in (A1).
The same coefficient matching and odds recursion yield
$o(S)=\frac{N(S)}{W(S)}$ with nonzero signed weights;
$R(M,a)\neq-1$ ensures nonzero subset sums inductively.
The factorization applies to signed normalized linear attention,
but $\pi_E$ may leave $[0,1]$, and neither the probability
interpretation nor the bound $\|\Delta o\|_2\leq2B_v\pi_E$
is guaranteed.

\emph{(ii) Unnormalized additive readouts.}
If $o(S)=N(S)$ with additive $N$, then
$o(M\cup E)-o(M)=N(E)$; no subset renormalization is involved.

\emph{(iii) Non-subset compression.}
Compression that constructs new representations, such as soft prompts
or compressed context tokens, lies outside the subset operation in (A1).
\end{remark}

\subsection{Eviction Identity and Certificates}
\label{app:eviction-certificates}

% ----------------------------------------------------------------------
\subsubsection{Per-Query Identity and Bound}
\label{app:proof-eviction-identity}

For query $q$, softmax attention is normalized aggregation
readout with weights $w_i=e^{s_i(q)}$.
Appendix~\ref{app:scope-setup} gives Eq.~\ref{eq:intro-identity}.
Since $\bar v_E(q)$ and $o_M(q)$ are convex combinations of values
with $\|v_i\|_2\leq B_v$,
$\|\Delta o(q)\|_2
=\pi_E(q)\|\bar v_E(q)-o_M(q)\|_2
\leq 2B_v\pi_E(q).$

% ----------------------------------------------------------------------

% ----------------------------------------------------------------------
\subsubsection{Expected Eviction Certificate}
\label{app:proof-certificate}

Condition on the cache and eviction-time allocations.
The case $E=\varnothing$ is trivial;
otherwise Eq.~\ref{eq:intro-identity} and $\|v_i\|_2\le B_v$ give
$\mathbb E_q\|\Delta o(q)\|_2\le2B_v\bar p(E).$
Since $\pi_\theta^{\mathcal C}(E)=\mu_E$, total variation gives
\[
\bar p(E)\le
\min\!\left\{1,\,\mu_E+
\tfrac12\|\bar p-\pi_\theta^{\mathcal C}\|_1\right\}.
\]
The triangle inequality along
$\bar p\to\pi_u\to\pi_T^{\mathcal C}\to\pi_\theta^{\mathcal C}$
proves Eq.~\ref{eq:eviction-certificate}.
Utility mismatch includes query-aggregation and head effects;
it need not be small.

\textbf{Coverage shift.}
For $\beta\ge0$ and $\tau_{\mathrm{ent}}>0$, the finite teacher satisfies
$\pi_T^{\mathcal C}=\sigma((u+\beta d)/\tau_{\mathrm{ent}})$,
with $d=d(w^{(R-1)})$; convergence is unnecessary.
Set $x=u/\tau_{\mathrm{ent}}$, $h=\beta d/\tau_{\mathrm{ent}}$,
and $s(t)=\sigma(x+th)$.
Then $s_i'(t)=s_i(t)(h_i-\mathbb E_{s(t)}h)$ and
$\|s'(t)\|_1
=\mathbb E_{s(t)}|h-\mathbb E_{s(t)}h|
\le\sqrt{\operatorname{Var}_{s(t)}(h)}
\le\tfrac12\operatorname{osc}(h).$
The last inequality follows by centering at the range midpoint.
Integrating over $t\in[0,1]$ and using the simplex diameter yields
\begin{equation}
\delta_{\mathrm{cov}}
\le\min\!\left\{2,\,
\frac{\beta\operatorname{osc}(d)}{2\tau_{\mathrm{ent}}}\right\}.
\label{eq:app-coverage-shift}
\end{equation}
Pinsker's inequality further gives
$\|\pi_T^{\mathcal C}-\pi_\theta^{\mathcal C}\|_1
\le\sqrt{2D_{\mathrm{KL}}(
\pi_T^{\mathcal C}\|\pi_\theta^{\mathcal C})}$.

\textbf{Diagnostic interpretation.}
Define
$\mathcal D_E=\frac{\mathbb E_q\|\Delta o(q)\|_2}{\bar p(E)}$;
set $\mathcal D_E=0$ if $\bar p(E)=0$. Then
\begin{equation}
\mathbb E_q\|\Delta o(q)\|_2=\mathcal D_E\bar p(E),
\qquad 0\le\mathcal D_E\le2B_v.
\label{eq:app-directional-certificate}
\end{equation}
This measured direction factor is not used in the bound.
For any finite query collection, the same proof holds with empirical
averages, without independence assumptions; this does not establish
generalization to unobserved queries.

\textbf{Post-compensation geometry.}
Let $c(q)=g_\chi(q)m(q)$ denote the learned correction.
For nonzero residual and correction, define their alignment
and relative magnitude as
\begin{equation}
\rho(q)=
\frac{\langle\Delta o(q),c(q)\rangle}
{\|\Delta o(q)\|_2\|c(q)\|_2},
\qquad
r(q)=\frac{\|c(q)\|_2}{\|\Delta o(q)\|_2}.
\label{eq:diagnostic-correction-components}
\end{equation}
Expanding the squared residual gives the relative
squared-error reduction:
\begin{equation}
G_{\mathrm{comp}}(q)
=
1-
\frac{\|\Delta o(q)-c(q)\|_2^2}
{\|\Delta o(q)\|_2^2}
=
2r(q)\rho(q)-r(q)^2.
\label{eq:compensation-gain}
\end{equation}
Compensation reduces error iff $2r(q)\rho(q)>r(q)^2$.

\subsubsection{Decode-Time, Repeated-Eviction, and Multi-Head Extensions}
\label{app:decode-certificate}
\begin{corollary}[Current-query residual evaluation]
\label{cor:certified-eviction}
Given the current query's full-cache logits and values,
Eq.~\ref{eq:intro-identity} exactly determines $\pi_E$ and
$\Delta o$. A policy can enforce an error threshold
$\varepsilon$ or rank candidate eviction sets without additional
learned parameters, though evaluation still incurs computation.

This exact evaluation covers only queries present at eviction.
Theorem~\ref{thm:certificate} bounds population expected error; the impossibility result of
\citet{xie2026errorcertificates} precludes a consistent deterministic
estimate of a scheme's own unseen-query error from retained
information alone.
\end{corollary}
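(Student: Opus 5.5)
The corollary has two parts. The first part (exact current-query evaluation and a threshold policy) follows directly from the per-query identity. The second part is a scope delimitation that rests on an external result. I would proceed as follows.

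\emph{Step 1: exact determination.} Fix the current query $q$. Assume its full-cache logits $s_i(q)$ and values $v_i$ are available at eviction time. As in Appendix~\ref{app:proof-eviction-identity}, softmax attention is a normalized aggregation readout with weights $w_i=e^{s_i(q)}$. For any candidate evicted set $E\subsetneq\mathcal C$ with $M=\mathcal C\setminus E$, the sufficiency part of Theorem~\ref{thm:characterization} gives closed forms:
\[
\pi_E(q)=\frac{Z(E)}{Z(M)+Z(E)},\qquad
\bar v_E(q)=\frac{N(E)}{Z(E)},\qquad
o_M(q)=\frac{N(M)}{Z(M)}.
\]
Hence $\Delta o(q)=\pi_E(q)\bigl(\bar v_E(q)-o_M(q)\bigr)$ is an explicit function of the logits and values alone. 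No learned component, such as $\pi_\theta^{\mathcal C}$, $\phi_\varphi$ or $g_\chi$, enters this expression. If $E=\varnothing$, the error is trivially zero.

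\emph{Step 2: policy consequences.} Since $\|\Delta o(q)\|_2$ is computable for every candidate $E$, a policy can do two things.
\begin{enumerate}
\item It can accept $E$ only if $\|\Delta o(q)\|_2\le\varepsilon$, which enforces the threshold.
\item It can order candidate sets by this exact quantity, which gives the ranking.
\end{enumerate}
Both uses are deterministic given the inputs, so they require no additional parameters. For the cost remark, I would count the work per candidate set. The sums $Z(\cdot)$ and $N(\cdot)$ take $\mathcal O(|\mathcal C|d)$ once the logits are available, and the logits themselves cost $\mathcal O(|\mathcal C|d_{\mathrm{head}})$. This shows that evaluation is parameter-free but not free.

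\emph{Step 3: scope.} For the limitation, I would observe that the identity requires $s_i(q)$ for every $i\in E$. For a query $q'$ arriving after eviction, the keys $k_i$ with $i\in E$ are no longer available. Therefore $Z(E)$ and $N(E)$ for $q'$ cannot be formed from retained information. Theorem~\ref{thm:certificate} instead controls $\mathbb E_{q\sim\mathcal Q}\|\Delta o(q)\|_2$ through $\bar p$, which is a population quantity that is not observable online. For the final clause, I would invoke \citet{xie2026errorcertificates} as stated rather than reprove it.

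The main obstacle is this last step, because the impossibility claim is imported. The only thing I would verify is that our setting matches its hypotheses: a deterministic estimator that uses only the retained cache, with unseen queries ranging over a class rich enough that two full caches can agree on their retained part yet differ in unseen-query error. To show such a class exists, I would construct two caches with identical $M$ and $E$-keys that differ in value direction. The identity shows they yield different $\Delta o(q')$ while presenting the same retained information.
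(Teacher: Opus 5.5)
Your proposal is correct and follows the paper's own implicit route. Softmax is treated as a normalized aggregation readout with weights $e^{s_i(q)}$, so Appendix~\ref{app:scope-setup} gives $\pi_E$ and $\Delta o$ in closed form from the current logits and values, and the unseen-query limitation is simply imported from \citet{xie2026errorcertificates}, exactly as the paper does; your two-cache check is valid when ``retained information'' means the retained KV cache, but if it also includes CORE's latent memory, changing the evicted values changes the writes $\phi_\varphi(k_i)v_i^\top$, so indistinguishability would need a further step (e.g., a perturbation chosen in the kernel of the write map).
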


\begin{corollary}[Fixed-query accumulation under successive eviction]
\label{cor:additive}
Fix a query $q$ and key--value states, and let $\mathcal C_0\supseteq\cdots\supseteq\mathcal C_T$ be nonempty nested subsets. If $\|o(q;\mathcal C_{t-1})-o(q;\mathcal C_t)\|_2\le c_t(q)$ for each $t$, then
\(
\|o(q;\mathcal C_0)-o(q;\mathcal C_T)\|_2\le\sum_{t=1}^T c_t(q).
\)
\end{corollary}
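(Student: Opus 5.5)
The proof is a telescoping argument followed by the triangle inequality; nothing specific to attention is needed beyond the readouts being well defined on nonempty sets. Fix the query $q$ and the key--value states. The nested chain $\mathcal C_0\supseteq\cdots\supseteq\mathcal C_T$ of nonempty subsets makes every intermediate readout $o(q;\mathcal C_t)$ well defined. For softmax attention the normalizer $Z(\mathcal C_t)$ is positive, and the same holds for any normalized aggregation readout in the sense of Appendix~\ref{app:scope-setup}.

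\textbf{Step 1 (telescoping).} Write the total deviation as
\[
o(q;\mathcal C_0)-o(q;\mathcal C_T)=\sum_{t=1}^{T}\bigl(o(q;\mathcal C_{t-1})-o(q;\mathcal C_t)\bigr).
\]
All intermediate terms cancel, so this is an exact identity.

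\textbf{Step 2 (triangle inequality).} Apply the triangle inequality for $\|\cdot\|_2$ to the sum. Then bound each summand by its hypothesised per-step bound $c_t(q)$. This gives the claimed inequality directly.

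\textbf{Optional refinement.} It is also worth recording how the per-step bounds arise. Each step evicts $E_t=\mathcal C_{t-1}\setminus\mathcal C_t$ from the current cache, with $M=\mathcal C_t$. The per-query identity of Appendix~\ref{app:proof-eviction-identity}, applied with $\mathcal C_{t-1}$ as the full set, gives
\[
o(q;\mathcal C_{t-1})-o(q;\mathcal C_t)=\pi_{E_t}^{(t)}(q)\bigl(\bar v_{E_t}(q)-o(q;\mathcal C_t)\bigr),
\]
where $\pi_{E_t}^{(t)}$ is the mass of $E_t$ relative to $\mathcal C_{t-1}$. This yields the admissible choice $c_t(q)=2B_v\pi_{E_t}^{(t)}(q)$. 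The step is trivial when $E_t=\varnothing$.

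The only point needing care is nonemptiness of every $\mathcal C_t$, so that all normalizers are positive; it is assumed in the statement. There is no real obstacle beyond this. I would also remark that the bound is generally loose, because the directional gaps of successive steps need not align.
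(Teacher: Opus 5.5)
Your proof is correct and takes the same approach as the paper: the paper states this corollary without a separate proof, and it follows exactly from telescoping over the nested readouts and applying the triangle inequality. Your optional choice $c_t(q)=2B_v\pi^{(t)}_{E_t}(q)$ is also valid, since it is the per-query bound of Appendix~\ref{app:proof-eviction-identity} applied at each step with $\mathcal C_{t-1}$ taken as the full cache.
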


\begin{remark}[Multi-head extension]
Apply Eq.~\ref{eq:intro-identity} independently to each head. After output projection,
$\|\Delta o_{\mathrm{out}}\|
\leq
\sum\nolimits_h
\|W_O^h\|\,
2B_v^h\pi_E^h.$
Because attention mass is head-specific, retaining per-head terms avoids collapsing distinct certificates into a single scalar. Summing the head-wise contributions therefore extends the bound directly through the output projection.
\end{remark}

% ======================================================================
\subsection{Allocation, Retention Fidelity, and Two-Channel Coupling}
\label{app:allocation-retention-coupling}

% ----------------------------------------------------------------------
\subsubsection{Allocation Decomposition}
\label{app:proof-allocation-decomposition}

Let $\pi\in\Delta_N$ and restrict it to a retained set $S\subseteq\mathcal C$ via
$
\pi_S(i)
=
\pi_i\mathbb I[i\in S],
$
with $E=\mathcal C\setminus S$. Nonnegativity gives
$
\|\pi-\pi_S\|_1
=
\sum_{i\notin S}\pi_i
=
1-\sum_{i\in S}\pi_i.
$
For $|S|=B$, minimizing residual allocation mass is therefore equivalent to maximizing retained $\pi$-mass:
\begin{equation}
\operatorname{TopB}(\pi)
\in\arg\min\nolimits_{|S|=B}\|\pi-\pi_S\|_1
=\arg\max\nolimits_{|S|=B}\sum\nolimits_{i\in S}\pi_i.
\label{eq:app-topb-projection}
\end{equation}
Top-$B$ is consequently the exact $B$-support projection of the allocation under residual $\ell_1$ loss.

Let
$
S_B=\operatorname{TopB}(\pi),
E_B=\mathcal C\setminus S_B,
\mu_B=\pi(E_B).
$
For $\mu_B>0$, define
$
\omega_i
=
\frac{\pi_i}{\mu_B},
i\in E_B.
$
Then
$
\omega_i\geq0,
\sum_{i\in E_B}\omega_i=1,
\pi_{E_B}=\mu_B\omega.
$
For any normalized representation
$\pi_{E_B}=\mu\widetilde\omega$, summing over $E_B$ gives
$\mu=\mu_B$, hence $\widetilde\omega=\omega$. For the
linear write $\mathcal A(r)=\sum_i r_iA_i$,
$
\mathcal A(\pi_{E_B})
=
\mu_B\sum\nolimits_{i\in E_B}\omega_iA_i.
$
Thus mass and conditional composition uniquely specify this residual allocation.

Assume $\|A_i\|\le C$. For any retained set
$S$ with $E=\mathcal C\setminus S$,
$
\left\|\sum\nolimits_{i\in E}\pi_iA_i\right\|
\le
\sum\nolimits_{i\in E}\pi_i\|A_i\|
\le
C\pi(E).
$
The bound is tight by choosing $A_i=Ca$ for a unit vector $a$, yielding
$
\sup\nolimits_{\|A_i\|\le C}
\left\|\sum\nolimits_{i\in E}\pi_iA_i\right\|
=
C\pi(E),
$
and Eq.~\ref{eq:app-topb-projection} implies
\begin{equation}
\operatorname{TopB}(\pi)
\in
\arg\min\nolimits_{|S|=B}
\sup\nolimits_{\|A_i\|\leq C}
\left\|
\sum\nolimits_{i\notin S}\pi_iA_i
\right\|.
\end{equation}

For CORE, choosing
$A_i=\phi_\varphi(k_i)v_i^\top$ or
$A_i=\phi_\varphi(k_i)\odot\phi_\varphi(k_i)$
gives the writes $\mu_E\bar M_E$ and $\mu_E\bar b_E$,
respectively. Thus retention and both memory writes use one
allocation; the memory map and gate learn the correction through
Eq.~\ref{eq:memory-loss}.
The minimax result concerns bounded linear writes, not realized
post-compensation error.
With protected entries fixed in $S$, the argument applies to
eligible candidates under the remaining budget.

% ----------------------------------------------------------------------
\subsubsection{Joint Retention and Excluded-Mass Fidelity}
\label{app:joint-fidelity}

Let $p=\pi_T^{\mathcal C}$ and $q=\pi_\theta^{\mathcal C}$
denote the teacher and student allocations.
Under a fixed tie-breaking rule, let
$S_p=\operatorname{TopB}(p)$ and
$S_q=\operatorname{TopB}(q)$, with complements
$E_p=\mathcal C\setminus S_p$ and
$E_q=\mathcal C\setminus S_q$.
Define the retention regret and excluded-mass discrepancy as
\begin{equation}
\operatorname{Regret}_B=p(S_p)-p(S_q),
\qquad
\operatorname{MassErr}_B=|p(E_p)-q(E_q)|.
\label{eq:student-fidelity}
\end{equation}

\paragraph{Retention fidelity.}
Since $S_p$ maximizes the retained $p$-mass and $S_q$ maximizes the
retained $q$-mass,
\[
\begin{aligned}
p(S_p)-p(S_q)
&=
\bigl[p(S_p)-q(S_p)\bigr]
+
\bigl[q(S_p)-q(S_q)\bigr]
+
\bigl[q(S_q)-p(S_q)\bigr]
\\
&\leq
\left|p(S_p)-q(S_p)\right|
+
\left|p(S_q)-q(S_q)\right|
\\
&\leq
\|p-q\|_1.
\end{aligned}
\]
Pinsker's inequality therefore gives
$0\leq p(S_p)-p(S_q)\leq
\sqrt{2D_{\mathrm{KL}}(p\|q)}$.

Let $p_{(1)}\geq\cdots\geq p_{(N)}$ denote the coordinates of $p$
in nonincreasing order, and define the cutoff margin
$\gamma_B=p_{(B)}-p_{(B+1)}$.
If $k_B=|S_p\setminus S_q|=|S_q\setminus S_p|$, then each exchanged
retained position loses at least $\gamma_B$ units of teacher mass.
Consequently,
\begin{equation}
k_B\gamma_B
\leq p(S_p)-p(S_q)
\leq \sqrt{2D_{\mathrm{KL}}(p\|q)}.
\label{eq:th-retention-exchange}
\end{equation}
It follows that
\begin{equation}
\sqrt{2D_{\mathrm{KL}}(p\|q)}
<
\gamma_B
\quad\Longrightarrow\quad
S_p=S_q.
\label{eq:th-exact-topb}
\end{equation}

The student recovers the teacher's Top-$B$ set when
the cutoff margin exceeds the distributional error bound. With four protected sinks and $4<B<N$, retention selects
the remaining $B-4$ positions among non-sink candidates.
The same bounds apply, with $\gamma_B$ taken at this cutoff
in the original teacher probabilities. Let $\mathcal U$ denote the non-sink candidates.
The KL term in Eq.~\ref{eq:indexer-loss} compares
their conditional allocations:
$D_{\mathcal U}
=
D_{\mathrm{KL}}\!\left(
\frac{p|_{\mathcal U}}{p(\mathcal U)}
\,\middle\|\,
\frac{q|_{\mathcal U}}{q(\mathcal U)}
\right).$
For the two protected retained sets, sink contributions
cancel in the regret. Applying the same argument
on $\mathcal U$ yields
\[
\operatorname{Regret}_B
\leq p(\mathcal U)\sqrt{2D_{\mathcal U}}.
\]

\paragraph{Boundary-focused supervision.}
Let $\mathcal I_{\mathrm{bd}}$ contain pairs
$
(i,j)
\in
P_{\mathrm{bd}}\times N_{\mathrm{bd}}
$
with
$
\widetilde s_i\leq\widetilde s_j.
$
Every such inversion contributes at least $m_{\mathrm{bd}}$ to the
boundary loss, so
\[
\frac{
|\mathcal I_{\mathrm{bd}}|
}{
|P_{\mathrm{bd}}||N_{\mathrm{bd}}|
}
\leq
\frac{
\mathcal L_{\mathrm{bd}}
}{
m_{\mathrm{bd}}
}.
\]
Thus $\mathcal L_{\mathrm{bd}}$ directly controls local inversion density around the deployed cutoff, although this local condition alone does not guarantee complete Top-$B$ agreement.

\paragraph{Excluded-mass fidelity.}
Let $\mathcal F_B$ be the common family of admissible retained sets,
and define
$F_B(r)=\max_{S\in\mathcal F_B}r(S)$.
Since $\mu_r=r(E_r)=1-F_B(r)$ for $r\in\{p,q\}$,
\begin{equation}
|\mu_p-\mu_q|
=|F_B(p)-F_B(q)|
\leq \max_{S\in\mathcal F_B}|p(S)-q(S)|
\leq \operatorname{TV}(p,q).
\label{eq:th-mass-fidelity-tv}
\end{equation}
Pinsker's inequality yields
\begin{equation}
|\mu_p-\mu_q|
\leq \sqrt{\tfrac12D_{\mathrm{KL}}(p\|q)},
\label{eq:th-mass-fidelity-kl}
\end{equation}
even when $S_p\neq S_q$.
Full-allocation KL bounds retention regret and excluded-mass
discrepancy; conditional training KL yields the protected-retention
bound, while full-mass fidelity is evaluated in Appendix~\ref{app:retention-fidelity}.
For vector-valued writes, changes in the evicted set contribute
an additional term that vanishes when $E_p=E_q$
(Appendix~\ref{app:proof-memory-stability}).

% ----------------------------------------------------------------------

\subsection{Coverage Interaction and Set Awareness}
\label{app:proof-cache-conditioned-coverage}

Set
$
x_i=z_i^c,
\Sigma(w)
=
I+\alpha\sum_jw_jx_jx_j^\top,
F(w)=\log\det\Sigma(w).
$
For this analysis, we extend $F$ to $w\in\mathbb R_+^N$
with fixed $x_i$ and nonnegative coordinate increments. Jacobi's identity together with the inverse derivative rule yields
\begin{equation}
d_i(w)
=
\frac{\partial F(w)}{\partial w_i}
=
\alpha x_i^\top\Sigma(w)^{-1}x_i,
\qquad
\frac{\partial d_i}{\partial w_j}
=
-\alpha^2
\big(
x_i^\top\Sigma(w)^{-1}x_j
\big)^2
\leq0.
\label{eq:th-cross-curvature}
\end{equation}
The nonpositive cross derivative establishes coordinate-wise diminishing returns and identifies squared inverse-covariance overlap as the exact local interaction between candidate directions.

\paragraph{Pairwise finite gap.}
With
$
\Delta_i(a;w)
=
F(w+ae_i)-F(w)
$
and
$
\Delta_{ij}(a,b;w)
=
F(w+ae_i+be_j)-F(w),
$
the fundamental theorem of calculus gives
\begin{align}
G_{ij}(a,b;w)
&=
\Delta_i(a;w)
+
\Delta_j(b;w)
-
\Delta_{ij}(a,b;w)
\nonumber\\
&=
-\int_0^a\!\!\int_0^b
\frac{
\partial^2F(w+se_i+te_j)
}{
\partial w_i\partial w_j
}
\,dt\,ds
\nonumber\\
&=
\alpha^2
\int_0^a\!\!\int_0^b
\left[
x_i^\top
\Sigma(w+se_i+te_j)^{-1}
x_j
\right]^2
dt\,ds
\geq0.
\label{eq:th-finite-nonadditivity}
\end{align}

If
$
\left|
x_i^\top
\Sigma(w+se_i+te_j)^{-1}
x_j
\right|
\leq
\varepsilon
$
throughout the integration region, then
$
0
\leq
G_{ij}(a,b;w)
\leq
\alpha^2ab\varepsilon^2.
$
By continuity,
\[
\lim_{a,b\to0}
\frac{G_{ij}(a,b;w)}{ab}
=
\alpha^2
\big(
x_i^\top\Sigma(w)^{-1}x_j
\big)^2.
\]
Cache-whitened orthogonality therefore removes the local interaction, whereas larger inverse-covariance overlap produces a correspondingly larger departure from additive scalarized gains.

\paragraph{Finite-set extension.}
For a finite set $S$ and increments $\{a_i\}_{i\in S}$, let
\[
G_S
=
\sum\nolimits_{i\in S}\Delta_i(a_i;w)
-
\left[
F\!\left(
w+\sum\nolimits_{i\in S}a_ie_i
\right)
-
F(w)
\right].
\]
Order the elements of $S$ arbitrarily and telescope the joint gain. Since every cross derivative in Eq.~\ref{eq:th-cross-curvature} is nonpositive, each newly added coordinate has marginal gain no larger than at the common base state. Therefore
$
G_S\geq0.
$
If
$
\left|
x_i^\top\Sigma(u)^{-1}x_j
\right|
\leq
\varepsilon_{ij}
$
throughout the coordinate box between
$w$ and $w+\sum_{i\in S}a_ie_i$, repeated integration gives
$
G_S
\leq
\alpha^2
\sum\nolimits_{i<j}
a_ia_j\varepsilon_{ij}^2.
$
Uniformly weak inverse-covariance interactions therefore imply
near-additive gains from a shared cache state. This conclusion concerns factorization from a shared cache state; it does not make the student ranking an exact optimizer of the original fixed-budget log-determinant objective.

\textbf{Relation to the prototype features.}
For a reliable candidate,
$
\widehat z_i^c
=
\frac{z_i^c}{\|z_i^c\|_2}.
$
Since $\widehat z_i^c$ and each prototype have unit norm,
$
d_i^{\mathrm{proto}}
=
1-(\widehat z_i^c)^\top p_{m_1(i)}
=
\frac12
\left\|
\widehat z_i^c-p_{m_1(i)}
\right\|_2^2.
$
If every prototype similarity changes by at most $\delta$, then
$
\Delta_i^{\mathrm{proto}}>2\delta
$
implies that $m_1(i)$ is unchanged.

Prototype distance represents directional residual, prototype margin represents assignment robustness, and occupancy represents local directional crowding. These finite-dimensional cues provide a compact interface through which the indexer learns cache-level geometry distilled from the teacher.

% ======================================================================
\subsection{Mass-Preserving and Recurrent Memory}
\label{app:mass-preserving-recurrent-memory}

% ----------------------------------------------------------------------
\subsubsection{Proof of Proposition~\ref{prop:memory-stability}}
\label{app:proof-memory-stability}

For the set-aware stability argument, let
$
E_p
=
\mathcal C\setminus\operatorname{TopB}(p),
E_q
=
\mathcal C\setminus\operatorname{TopB}(q),
$
with symmetric difference
$
\Delta_E=E_p\triangle E_q.
$
Write
$r(E)=\sum\nolimits_{i\in E}r_i,
R(r,E)=\sum\nolimits_{i\in E}r_iA_i,$
and define
\begin{equation}
\eta(p,q)
=
\|p-q\|_1
+
\min\{p(\Delta_E),q(\Delta_E)\}.
\label{eq:app-set-aware-eta}
\end{equation}

\textbf{Mass-scaled write.}
We separate probability perturbation from set-membership perturbation:
\begin{align*}
R(p,E_p)-R(q,E_q)
&=
[R(p,E_p)-R(q,E_p)]
+
[R(q,E_p)-R(q,E_q)].
\end{align*}
When $\|A_i\|\leq C$, the first term is at most $C\|p-q\|_1$, while the second differs only on $\Delta_E$ and is at most $Cq(\Delta_E)$. Decomposing symmetrically through $R(p,E_q)$ gives the corresponding $p(\Delta_E)$ bound; taking the tighter one yields
\begin{equation}
\|R(p,E_p)-R(q,E_q)\|
\leq
C\eta(p,q).
\label{eq:app-mass-scaled-stability}
\end{equation}

\textbf{Conditionally normalized write.}
Let
$
\mu_p=p(E_p),
\mu_q=q(E_q),
\mu_{\min}=\min\{\mu_p,\mu_q\}>0,
$
and define
$\bar A(r,E)
=
\frac{R(r,E)}{r(E)}.$
Eq.~\ref{eq:th-mass-fidelity-tv} gives
$|\mu_p-\mu_q|
\leq
\eta(p,q).$
Decompose
\begin{align*}
\bar A(p,E_p)-\bar A(q,E_q)
&=
\frac{
R(p,E_p)-R(q,E_q)
}{
\mu_p
}+
\left(
\frac{1}{\mu_p}
-
\frac{1}{\mu_q}
\right)
R(q,E_q).
\end{align*}
Using
$
\|R(q,E_q)\|
\leq
C\mu_q
$
yields
\begin{equation}
\|\bar A(p,E_p)-\bar A(q,E_q)\|
\leq
\frac{2C}{\mu_{\min}}
\eta(p,q).
\label{eq:app-conditional-sensitivity}
\end{equation}
Conditional renormalization introduces exactly one new factor, $1/\mu_{\min}$; the allocation-preserving write avoids this inverse-mass sensitivity.

\textbf{KL specialization.}
Pinsker's inequality gives
$
\|p-q\|_1
\leq
\sqrt{2D_{\mathrm{KL}}(p\|q)},
$
hence
$
\|R(p,E_p)-R(q,E_q)\|
\leq
C
\left[
\sqrt{2D_{\mathrm{KL}}(p\|q)}
+
\min\{p(\Delta_E),q(\Delta_E)\}
\right].
$
If $E_p=E_q$, including any case satisfying Eq.~\ref{eq:th-exact-topb}, the membership term vanishes and write discrepancy is controlled only by KL.

\textbf{Inverse-mass dependence is attainable.}
Take three items with retained budget $B=1$, and let
$0<\mu<1/2$ and $0<\delta<\mu/2$.
Set
$q=(\mu/2,\mu/2,1-\mu)$ and
$p=(\mu/2+\delta,\mu/2-\delta,1-\mu)$.
Both allocations retain item $3$, so $E_p=E_q=E=\{1,2\}$ and
$\mu_{\min}=\mu$.
Choose $A_1=a$ and $A_2=-a$ with $\|a\|=C$.
Since $\|p-q\|_1=2\delta$ and both excluded masses equal $\mu$,
\[
\frac{\|\bar A(p,E)-\bar A(q,E)\|}{\|p-q\|_1}
=\frac{C}{\mu},
\qquad
\frac{\|R(p,E)-R(q,E)\|}{\|p-q\|_1}
=C.
\]
Thus inverse-mass dependence is unavoidable in general for
conditional normalization.

\subsubsection{Hierarchical Stability of Mass-Preserving Memory}
\label{app:recurrent-memory-characterization}

We characterize how the write-level stability of
Proposition~\ref{prop:memory-stability} propagates through CORE's recurrent
memory and query-adaptive readout. The resulting bounds separate three
sources of sensitivity: allocation perturbation, accumulated-mass
normalization, and readout normalization.

At compression step $s$, let
$
\pi_{s,i}
=
\pi_\theta^{\mathcal C_s}(i),
\mu_s
=
\sum_{i\in E_s}\pi_{s,i},
$
and define
$
A_{s,i}
=
\phi_\varphi(k_{s,i})v_{s,i}^{\top},
B_{s,i}
=
\phi_\varphi(k_{s,i})
\odot
\phi_\varphi(k_{s,i}).
$
Appendix~\ref{app:proof-allocation-decomposition} and
Eq.~\ref{eq:evicted-mass} give
\begin{equation}
\mu_s\bar M_{E_s}
=
\sum\nolimits_{i\in E_s}\pi_{s,i}A_{s,i},
\qquad
\mu_s\bar b_{E_s}
=
\sum\nolimits_{i\in E_s}\pi_{s,i}B_{s,i}.
\label{eq:app-step-write}
\end{equation}
Each event writes a linear image of the residual allocation,
not a unit-mass conditional summary.

\begin{corollary}[Hierarchical perturbation stability]
\label{cor:recurrent-stability}
Consider two allocation trajectories
$\{p_s\}_{s=1}^{t}$ and $\{q_s\}_{s=1}^{t}$ over the same
candidate memory atoms, with common initialization
$(Z_0^r,M_0^r,b_0^r)=(0,0,0)$ for $r\in\{p,q\}$ and
$\|A_{s,i}\|_F\le C_M$, $\|B_{s,i}\|_2\le C_b$.
Let
$
\eta_s=\eta(p_s,q_s),
\mathcal E_t
=
\sum_{s=1}^{t}\gamma_M^{\,t-s}\eta_s .
$
For $r\in\{p,q\}$, define
$
D_t^r=Z_t^r+\epsilon,
a_t^r=\frac{\gamma_MZ_{t-1}^r}{D_t^r},
$
and
$
D_{t,\min}=\min_r D_t^r,
\bar a_t=\max_r a_t^r,
L_0=0,
L_t=
\bar a_tL_{t-1}
+
\frac{4\mathcal E_t}{D_{t,\min}}.
$
Then
$
\|M_t^p-M_t^q\|_F
\le C_ML_t,
\|b_t^p-b_t^q\|_2
\le C_bL_t.
$
For a fixed query, let
$h=\phi_\varphi(q)$,
$d_r=(h\odot h)^\top b_t^r+\epsilon$,
and
$d_{\min}=\min_r d_r$.
Then
\begin{equation}
\|m_t^p(q)-m_t^q(q)\|_2
\le
C_ML_t\|h\|_2
\left(
\frac{1}{\sqrt{d_{\min}}}
+
\frac{
C_b\|h\odot h\|_2
}{
2d_{\min}^{3/2}
}
\right).
\label{eq:hierarchical-memory-stability}
\end{equation}
Since $g_\chi(q)\in[0,1]$ depends only on the fixed query,
the same upper bound applies to
$\|g_\chi(q)m_t^p(q)-g_\chi(q)m_t^q(q)\|_2$. With fixed retained support, the same bound therefore holds for the
compensated output difference.
\end{corollary}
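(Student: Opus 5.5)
The plan is to prove the matrix bound by induction on $t$, comparing the two trajectories one step at a time, and then push it through the readout with a mean-value estimate. Write $W_t^r=\mu_t^r\bar M_{E_t^r}=R(r_t,E_t^r)$, as in Eq.~\ref{eq:app-step-write}. Then the update in Eq.~\ref{eq:memory-update} reads $M_t^r=a_t^rM_{t-1}^r+W_t^r/D_t^r$.

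\textbf{A priori bounds.} First I would record three bounds.
\begin{itemize}
\item By the mass-scaled bound in Proposition~\ref{prop:memory-stability} (with a single trajectory), $\|W_t^r\|_F\le C_M\mu_t^r$. Since $a_t^r+\mu_t^r/D_t^r=(Z_t^r)/(Z_t^r+\epsilon)\le1$, induction from $M_0^r=0$ gives $\|M_t^r\|_F\le C_M$. The same argument gives $\|b_t^r\|_2\le C_b$.
\item Unrolling $Z_t^r=\gamma_MZ_{t-1}^r+\mu_t^r$ and using $|\mu_s^p-\mu_s^q|\le\eta_s$ (Eq.~\ref{eq:th-mass-fidelity-tv}, as in the proof of Proposition~\ref{prop:memory-stability}) gives $|Z_t^p-Z_t^q|=|D_t^p-D_t^q|\le\mathcal E_t$.
\item The same unrolling gives $\gamma_M|Z_{t-1}^p-Z_{t-1}^q|\le\mathcal E_t-\eta_t\le\mathcal E_t$.
\end{itemize}

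\textbf{One-step recursion.} Next I would decompose
\[
M_t^p-M_t^q=a_t^p(M_{t-1}^p-M_{t-1}^q)+(a_t^p-a_t^q)M_{t-1}^q+\Bigl(\tfrac{W_t^p}{D_t^p}-\tfrac{W_t^q}{D_t^q}\Bigr).
\]
The first term is at most $\bar a_tC_ML_{t-1}$ by induction.

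For the second term, split $a_t^p-a_t^q$ as $\gamma_M(Z_{t-1}^p-Z_{t-1}^q)/D_t^p+\gamma_MZ_{t-1}^q(D_t^q-D_t^p)/(D_t^pD_t^q)$. Using $\gamma_MZ_{t-1}^q\le D_t^q$, this gives $|a_t^p-a_t^q|\le2\mathcal E_t/D_{t,\min}$, so the second term is at most $2C_M\mathcal E_t/D_{t,\min}$.

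For the third term, split it as $(W_t^p-W_t^q)/D_t^p+W_t^q(D_t^q-D_t^p)/(D_t^pD_t^q)$. Proposition~\ref{prop:memory-stability} bounds the first piece by $C_M\eta_t/D_{t,\min}$. Together with $\|W_t^q\|\le C_M\mu_t^q\le C_MD_t^q$, the second piece is at most $C_M\mathcal E_t/D_{t,\min}$.

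Since $\eta_t\le\mathcal E_t$, the three non-inductive pieces sum to at most $4C_M\mathcal E_t/D_{t,\min}$. This is exactly the recursion defining $L_t$. The $b_t$ bound follows verbatim with $C_b$ in place of $C_M$.

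\textbf{Readout.} Writing $m^r=h^\top M_t^r/\sqrt{d_r}$, I would split
\[
m^p-m^q=\frac{h^\top(M_t^p-M_t^q)}{\sqrt{d_p}}+h^\top M_t^q\Bigl(\tfrac{1}{\sqrt{d_p}}-\tfrac{1}{\sqrt{d_q}}\Bigr).
\]
The first piece is at most $\|h\|_2C_ML_t/\sqrt{d_{\min}}$. For the second piece, the mean value theorem for $x\mapsto x^{-1/2}$ on $[d_{\min},\infty)$ gives a factor $|d_p-d_q|/(2d_{\min}^{3/2})$. Cauchy--Schwarz gives $|d_p-d_q|\le\|h\odot h\|_2C_bL_t$, and $\|h^\top M_t^q\|_2\le\|h\|_2C_M$ by the a priori bound. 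Combining yields Eq.~\ref{eq:hierarchical-memory-stability}. The gate $g_\chi(q)\in[0,1]$ only contracts this difference, and with fixed retained support the attention term cancels, so the bound transfers to the compensated output.

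\textbf{Main obstacle.} I expect the delicate step to be the bookkeeping forced by $\epsilon$. Because $\epsilon>0$, $Z_{t-1}M_{t-1}$ is not the unnormalized accumulated sum, so one cannot simply unroll everything and divide once. This is why I would work with the one-step recursion and the contraction $a_t^r+\mu_t^r/D_t^r\le1$, and it is the step where the constant $4$ must be checked carefully.
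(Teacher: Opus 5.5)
Your proposal is correct and follows essentially the same argument as the paper. Both use the one-step form $M_t^r=a_t^rM_{t-1}^r+R_{M,t}^r/D_t^r$, the a priori bound $\|M_t^r\|_F\le C_M$, the estimates $|a_t^p-a_t^q|\le 2\mathcal E_t/D_{t,\min}$ and $2C_M\mathcal E_t/D_{t,\min}$ for the write term (which together give the constant $4$), and the add-and-subtract readout step with the mean-value bound for $x^{-1/2}$; you simply spell out the intermediate splittings (e.g., $\gamma_M Z_{t-1}^q\le D_t^q$ and $\mu_t^q\le D_t^q$) that the paper leaves implicit.
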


\begin{proof}
Proposition~\ref{prop:memory-stability}, applied separately to
$A_{s,i}$ and $B_{s,i}$, gives
$
\|R_{M,s}^p-R_{M,s}^q\|_F\le C_M\eta_s,
\|R_{b,s}^p-R_{b,s}^q\|_2\le C_b\eta_s,
$
while
$|\mu_s^p-\mu_s^q|\le\eta_s$.
Hence
$
|Z_t^p-Z_t^q|
\le
\mathcal E_t .
$
Writing the recurrence as
\[
M_t^r
=
a_t^rM_{t-1}^r
+
\frac{R_{M,t}^r}{D_t^r}
\]
gives
$\|M_t^r\|_F\le C_M$,
\[
|a_t^p-a_t^q|
\le
\frac{2\mathcal E_t}{D_{t,\min}},
\]
and
\[
\left\|
\frac{R_{M,t}^p}{D_t^p}
-
\frac{R_{M,t}^q}{D_t^q}
\right\|_F
\le
\frac{2C_M\mathcal E_t}{D_{t,\min}} .
\]
Therefore
\[
\|M_t^p-M_t^q\|_F
\le
\bar a_t\|M_{t-1}^p-M_{t-1}^q\|_F
+
\frac{4C_M\mathcal E_t}{D_{t,\min}},
\]
and the same argument applies to $b_t$.
For the readout, add and subtract
$h^\top M_t^q/\sqrt{d_p}$ and use
\[
|x^{-1/2}-y^{-1/2}|
\le
\frac{|x-y|}{2d_{\min}^{3/2}},
\qquad
|d_p-d_q|
\le
\|h\odot h\|_2
\|b_t^p-b_t^q\|_2 .
\]
Substitution gives Eq.~\ref{eq:hierarchical-memory-stability}.
The gate inequality follows from $0\le g_\chi(q)\le1$.
\end{proof}

Together, Proposition~\ref{prop:memory-stability} and
Corollary~\ref{cor:recurrent-stability} give a hierarchical stability
decomposition: linear writes avoid inverse-mass amplification, recurrent
normalization is controlled by $D_{t,\min}$, and query-adaptive
normalization by $d_{\min}$; the scalar gate is non-expansive.

\section{Theory-Aligned Empirical Diagnostics for Sec. \ref{sec:mechanistic-analysis}}
\label{app:diagnostic-protocol}

We use 104 RULER-16K sequences (eight per task) from
Llama-3.1-8B and four layers spanning model depth.
All diagnostic evaluations freeze learned parameters and share
reference QKV states, query visibility, protected positions,
and tie-breaking.
The budget is $B=\lfloor(1-r)N_0\rfloor$, including four sink tokens.
Retention and allocation diagnostics use
$r\in\{0.25,0.50,0.75,0.90\}$ and up to 32 subsequent queries.
Attention metrics are computed per head and then averaged over
heads and layers within sequences, then equally over tasks; allocation metrics are
computed per sequence--layer state.
Comparisons use common valid observations and exclude
undefined ratios.

\subsection{Retention and Allocation Fidelity}
\label{app:retention-fidelity}

\textbf{Coverage and Retention Error.}
With memory correction disabled, we compare CORE with the
$\beta=0$ selector on shared reference states.
For each sequence--layer--head state $g$ with evaluation query
set $\mathcal Q_g$, compute the empirical counterparts
$\bar p_g(E)$ and $\widehat{\mathcal D}_{E,g}$ of
Eq.~\ref{eq:app-directional-certificate} using uniform averages
over $\mathcal Q_g$.
Normalize the directional factor as
\begin{equation}
\widetilde{\mathcal D}_{E,g}
=
\frac{\widehat{\mathcal D}_{E,g}}{2B_{v,g}},
\qquad
B_{v,g}=\max_i\|v_i\|_2.
\label{eq:retention-error}
\end{equation}
Fig.~\ref{fig:mechanistic-analysis}(a) reports both factors and
relative reductions over $\beta=0$ in their mean product,
$\langle\bar p_g(E)\widetilde{\mathcal D}_{E,g}\rangle$.
Normalized errors are computed per state before averaging.

\begin{wraptable}{r}{0.5\textwidth}
\centering
\footnotesize
\setlength{\tabcolsep}{3pt}
\renewcommand{\arraystretch}{1.0}
\caption{Empirical normalized eviction-certificate audit.
All columns use common state weights.
$M$ and Bound agree at the displayed precision.
Tightness $=\mathrm{RetErr}/\mathrm{Bound}$.}
\label{tab:certificate-audit}
\resizebox{\linewidth}{!}{%
\begin{tabular}{c|ccccc}
\toprule
Comp.
& $\mu_E$
& $\bar p(E)$
& $M$, Bound
& RetErr
& Tight. \\
\midrule
25\%
& 0.120 & 0.3251 & 0.370 & 0.2471 & 0.668 \\
50\%
& 0.300 & 0.5427 & 0.550 & 0.4354 & 0.792 \\
75\%
& 0.520 & 0.7436 & 0.770 & 0.5289 & 0.687 \\
90\%
& 0.680 & 0.8718 & 0.930 & 0.5392 & 0.580 \\
\bottomrule
\end{tabular}%
}
\vspace{-0.2in}
\end{wraptable}

\textbf{Certificate audit.}
For state $g$, let $\kappa_g$ be half the discrepancy sum in
Eq.~\ref{eq:eviction-certificate}, and let
$\mu_{g,r}=\pi_{\theta,g}(E_{g,r})$.
Using the same states and weights as
Fig.~\ref{fig:mechanistic-analysis}(a),
Table~\ref{tab:certificate-audit} reports
$M=\langle\mu_{g,r}+\kappa_g\rangle$,
$\mathrm{Bound}
=\langle\min\{1,\mu_{g,r}+\kappa_g\}\rangle$,
and
$\mathrm{RetErr}
=\langle\bar p_g(E)\widetilde{\mathcal D}_{E,g}\rangle$.
The mean distillation, coverage-shift, and utility-mismatch
contributions, including the factor $1/2$, are
0.014, 0.078, and 0.158, summing to 0.250.
The gap between attention mass and RetErr reflects the conservative
$2B_v$ directional envelope.

\begin{wraptable}{r}{0.5\textwidth}
\centering
\footnotesize
\vspace{-0.3in}
\setlength{\tabcolsep}{3pt}
\renewcommand{\arraystretch}{0.72}
\caption{Fixed-budget utility--coverage set quality.}
\label{tab:teacher-greedy}
\begin{tabular}{llrrr}
\toprule
Compression & Selector & $U_B$ & $C_B$ & $J_B$ \\
\midrule
75\%
& Utility-only
& 1.3689 & 0.9883 & 2.3572 \\
& Student Top-$B$
& 1.1548 & 1.3262 & 2.4810 \\
& Teacher Top-$B$
& 1.1503 & 1.3334 & 2.4837 \\
& Greedy
& 1.1140 & 1.3911 & \textbf{2.5051} \\
\midrule
90\%
& Utility-only
& 1.4352 & 0.9351 & 2.3703 \\
& Student Top-$B$
& 1.1522 & 1.3616 & 2.5138 \\
& Teacher Top-$B$
& 1.1468 & 1.3764 & 2.5232 \\
& Greedy
& 1.1229 & 1.4264 & \textbf{2.5493} \\
\bottomrule
\end{tabular}
\vspace{-0.2in}
\end{wraptable}

\textbf{Fixed-budget set quality.} We compare utility-only, Student Top-$B$, Teacher Top-$B$, and
a discrete greedy reference under shared selection constraints
and the objective in Eq.~\ref{eq:app-fixed-budget-objective}.
Here $U_B$ and $C_B$ are its utility and $\beta$-weighted
log-determinant terms. Define
$\operatorname{Gap}_{\mathrm{greedy}}(S)
=[J_B(S_{\mathrm{greedy}})-J_B(S)]/|J_B(S_{\mathrm{greedy}})|$.
At 75\%/90\% compression, gaps are 0.96\%/1.39\% for Student,
0.85\%/1.02\% for Teacher, and 5.90\%/7.02\% for utility-only
(Table~\ref{tab:teacher-greedy}).

\textbf{Allocation fidelity.}
We compare CORE, the no-coverage-cues student, and the KL-only
student against the same teacher under shared selection constraints.
We report $\operatorname{Regret}_B$ and $\operatorname{MassErr}_B$
from Eq.~\ref{eq:student-fidelity}, measuring lost teacher-assigned
retained mass and excluded-mass discrepancy, respectively.
Both errors are lower for CORE at every tested budget
(Fig.~\ref{fig:mechanistic-analysis}(b)).

\subsection{Memory Reconstruction and Stability}
\label{app:compensation-diagnostics}

\textbf{Write stability.}
We test the sensitivity characterized by
Proposition~\ref{prop:memory-stability}. For each deployed allocation $p=\pi_\theta^{\mathcal C}$, fix keys,
values, and the memory map and draw four perturbations per level:
\begin{equation}
q_\delta=\operatorname{softmax}(\log p+\delta\xi),
\qquad
\xi\sim\mathcal N(0,I),
\qquad
\delta\in\{0.01,0.02,0.05\}.
\label{eq:diagnostic-allocation-perturbation}
\end{equation}
Eviction sets are recomputed under shared selection constraints.
We instantiate the notation of
Appendix~\ref{app:proof-memory-stability} with
$q=q_\delta$, $A_i=\phi_\varphi(k_i)v_i^\top$, and
$C=\max_i\|A_i\|_F$.
The normalized sensitivities:
\begin{equation}
S_{\mathrm{mass}}
=
\frac{\|R(p,E_p)-R(q,E_q)\|_F}{C\eta(p,q)},
\qquad
S_{\mathrm{cond}}
=
\frac{\|\bar A(p,E_p)-\bar A(q,E_q)\|_F}{C\eta(p,q)}.
\label{eq:diagnostic-write-sensitivity}
\end{equation}
Valid records with $C\eta(p,q)>0$ and $\mu_{\min}>0$
are grouped using bin edges $\{0.05,0.10,0.20,0.40,0.80\}$;
the plot reports medians and 90th percentiles at geometric
bin midpoints.
Fig.~\ref{fig:mechanistic-analysis}(c) shows that
mass-preserving sensitivity stays approximately constant,
whereas conditional-only sensitivity increases as excluded
mass decreases, consistent with
Proposition~\ref{prop:memory-stability}.

\textbf{Memory reconstruction.}
Following the post-compensation geometry in
Appendix~\ref{app:proof-certificate}, we assess reconstruction
via compensation gain, alignment, and relative magnitude.
We compare mass-preserving accumulation with equal-event writing
(unit event mass, unchanged conditional weights) and uniform-token
writing (unchanged event mass, $\omega_i=1/|E|$)
using Eq.~\ref{eq:evicted-mass}.
Memory maps and gates are trained separately with matched
initialization, data, and  optimization budgets.

Replay variants share eviction events, memory capacity, decay,
and stabilizers under the common frozen-module protocol above.
Memory resets per sequence. The final 512 prompt tokens form four
128-token blocks, each preceded by recompression to
$B=\lfloor0.25N_0\rfloor$ ($N_0$: initial prefix length)
and evaluated on its first 16 queries.
At each write event, for each sequence--layer group $g$,
let $\Delta o(q)=o_{\mathrm{full}}(q)
-o_{\mathrm{attn}}(q;\mathcal S_{\mathrm{keep}})$ and
$c(q)=g_\chi(q)m(q)$ denote the full pre-$W_O$ residual
and correction in $\mathbb R^D$.
We concatenate these vectors over the evaluated queries into
$\Delta_g$ and $c_g$, respectively.
Applying Eqs.~\ref{eq:diagnostic-correction-components}
and~\ref{eq:compensation-gain} to the concatenated vectors gives
groupwise alignment $\rho_g$, relative magnitude $r_g$,
and compensation gain $G_{\mathrm{comp},g}$.
The gain therefore equals the relative reduction in summed
squared error across the evaluated queries and pre-$W_O$
coordinates within each sequence--layer group.
Alignment is evaluated on common groups with nonzero
residual and correction norms.
All three metrics are computed per group before averaging.

\begin{figure}[t]
\centering
\includegraphics[width=\linewidth]
{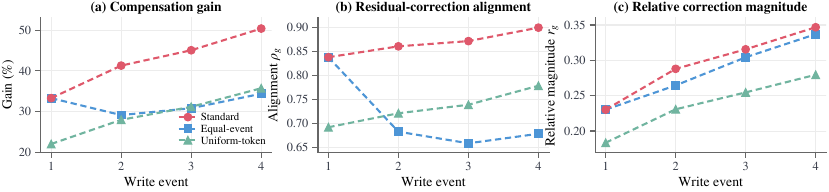}
\vspace{-0.3in}
\caption{
Repeated-write compensation at 75\% initial compression:
(a) compensation gain,
(b) residual--correction alignment $\rho_g$, and
(c) relative correction magnitude $r_g$.
}
\vspace{-0.2in}
\label{fig:memory-correction-geometry}
\end{figure}

\begin{wraptable}{r}{0.45\columnwidth}
\vspace{-0.1in}
\centering
\footnotesize
\setlength{\tabcolsep}{4pt}
\renewcommand{\arraystretch}{0.7}
\caption{Fourth-event compensation gain at $\delta=0.05$. Gains are percentages; $\Delta G=G^{(0)}-G^{(\delta)}$ is the percentage-point decrease.}
\label{tab:synthetic-perturbed-readout}
\begin{tabular}{lccc}
\toprule
Writing rule & $G^{(0)}$ (\%) & $G^{(\delta)}$ (\%) & $\Delta G$ \\
\midrule
Mass-weighted & 50.38 & 49.72 & 0.66 \\
Equal-event & 34.38 & 33.94 & 0.44 \\
Uniform-token & 35.73 & 35.16 & 0.57 \\
\bottomrule
\end{tabular}
\vspace{-0.1in}
\end{wraptable}

Mass-weighted gain rises from 33.30\% to 50.38\% over four events,
ending 16.00 and 14.65 percentage points above equal-event and
uniform-token writing, respectively
(Fig.~\ref{fig:memory-correction-geometry}).
Compared with equal-event writing, it achieves higher later alignment
at a similar correction magnitude.
Separately trained mass-weighted and equal-event variants have similar
first-write gains; with identical maps and gates and $\epsilon=0$,
their single-event readouts coincide.
Event mass controls relative weighting across subsequent writes.

To isolate write-weight sensitivity, we apply
Eq.~\ref{eq:diagnostic-allocation-perturbation} at every write event,
fixing eviction sets and sharing noise draws across variants.
At $\delta=0.05$, fourth-event gains decrease by 0.44--0.66
percentage points; mass-weighted writing remains highest after
perturbation (Table~\ref{tab:synthetic-perturbed-readout}).

\providecommand{\coretodo}[1]{\textbf{[To confirm: #1]}}

\providecommand{\coretodo}[1]{\textbf{[To confirm: #1]}}

\section{Implementation and Evaluation Protocol}
\label{app:reporting-protocol}

\subsection{Implementation Details}
\label{app:implementation-details}

We evaluate CORE on Mistral-7B-v0.2,
Llama-3.1-8B-Instruct, and Qwen3-14B.
The backbone remains frozen throughout training; only the indexer
and compensation modules are optimized.

All evaluated backbones use GQA,
with $H_q>H_{kv}$.
For memory computation in
Eqs.~\ref{eq:memory-update}--\ref{eq:memory-readout},
let $g(h)$ map query head $h$ to its shared KV head.
We use query-head-aligned representations
\[
q=\operatorname{Concat}_{h=1}^{H_q} q^{(h)},\qquad
k_i=\operatorname{Concat}_{h=1}^{H_q} k_i^{(g(h))},\qquad
v_i=\operatorname{Concat}_{h=1}^{H_q} v_i^{(g(h))},
\]

\begin{wraptable}{r}{0.5\textwidth}
\vspace{-0.3in}
\centering
\scriptsize
\setlength{\tabcolsep}{4pt}
\renewcommand{\arraystretch}{0.8}
\caption{Implementation details.}
\label{tab:implementation-details}
\begin{tabular}{ll}
\toprule
\textbf{Component} & \textbf{Setting} \\
\midrule
Indexer
& $14\!\rightarrow\!128\!\rightarrow\!128\!\rightarrow\!1$, GELU \\
Projected key space
& $d_z=64$ \\
Coverage prototypes
& $M_p=64$, fixed \\
Memory map
& $\phi_\varphi:\mathbb R^D\!\rightarrow\!\mathbb R^{512}$, no bias \\
Memory state
& $M_t\!\in\!\mathbb R^{512\times D}$,
  $b_t\!\in\!\mathbb R^{512}$, $Z_t\!\in\!\mathbb R$ \\
Memory readout
& $m(q)\!\in\!\mathbb R^D$ (pre-$W_O$ space) \\
Gate
& $D\!\rightarrow\!64\!\rightarrow\!1$, GELU + sigmoid \\
Mistral/Llama memory map
& Dense \\
Qwen3-14B memory map
& Block diagonal, $4\times[128\times(D/4)]$ \\
\midrule
Coverage teacher
& $\alpha=1$, $\beta=128$, $\tau_{\rm ent}=8$, $R=8$ \\
TopMean
& $K=4$ \\
$f_6$
& Top 25\% of causally visible queries \\
Recent queries
& $|Q_{\rm recent}|=16$ \\
Coverage descriptors
& $\rho_{\rm sh}=0.1$, $\epsilon_{\rm geo}=10^{-6}$ \\
Boundary supervision
& $\lambda_{\rm bd}=4$, $\Delta_B=32$, $m_{\rm bd}=0.5$ \\
Allocation temperature
& $\tau_M=1.0$ \\
Memory decay
& $\gamma_M=0.9$ \\
Numerical stabilizer
& $\epsilon=10^{-8}$ \\
Protected tokens
& First four sink tokens \\
\bottomrule
\end{tabular}
\vspace{-0.3in}
\end{wraptable}

so that $q,k_i,v_i\in\mathbb R^D$ with
$D=H_qd_{\mathrm{head}}$.
Queries and keys use post-RoPE coordinates.
At inference, this GQA expansion is not materialized:
repeated KV blocks are stored once, and the fixed head repetition
is folded into the corresponding projections.

Each Transformer layer uses independent module parameters shared
across token positions.
For Qwen3-14B, the block-diagonal memory map preserves the
512-dimensional latent space while partitioning the input and latent
coordinates into four groups.

We use FlashAttention-2 for backbone attention and compute CORE's
relation features through tiled QK reductions, with custom Triton
kernels for supported GQA configurations.
The optimized implementation avoids materializing the full attention
matrix and reuses FlashAttention softmax log-normalizers when available;
otherwise, they are computed through online log-sum-exp accumulation.
Protected sink tokens count toward the retained-KV budget and are
excluded from ranking supervision and memory writes.
No additional recency bonus or question-token protection is used.

\subsection{Training Details}
\label{app:d.2}

\textbf{Data and optimization.}
We train on the full LongAlpaca-12k dataset.
With an effective batch size of four, training comprises
1,000 Stage~I indexer-distillation steps and 4,400 Stage~II
joint training steps, totaling 5,400 updates.
Optimization uses AdamW with
$(\beta_1,\beta_2)=(0.9,0.999)$, $\epsilon=10^{-8}$,
weight decay 0.01, and gradient-norm clipping at 1.0.
A continuous base learning-rate schedule spans both stages,
with 100 warmup steps, 3,100 constant-rate steps, and
2,200 cosine-decay steps.
The peak and final base learning rates are
$5\times10^{-4}$ and $7.5\times10^{-6}$, respectively.
Stage~II applies learning-rate multipliers of 0.1 to the
indexer and 1.0 to the memory modules.
The training seed is 42.

\textbf{Sequence construction and supervision.}
The maximum training length is 2,048 tokens, increased to 4,096 every eighth optimizer step.
Overlength contexts are cropped deterministically by step, preserving the initial sink tokens and question/answer suffix.
Training uses compression ratios $\{0.25,0.50,0.75,0.90\}$.
Prefill boundary supervision averages over all four cutoffs, while memory writing and single-budget decode supervision use one uniformly sampled ratio per step.
Additional decode supervision is applied every fourth step using up to 128 recent queries with unit loss weight.

Memory reconstruction uses up to 16 trailing response queries, or a held-out prompt suffix when responses are unavailable.
Only evicted tokens from the preceding prefix are written into memory.
Memory states are rebuilt independently for each example.
The reconstruction loss updates the memory map and gate without propagating gradients through hard selection or into indexer scores.

\subsection{Evaluation Protocol}
\label{app:d.3}
CORE and all baselines follow the same KVPress evaluation
pipeline, including benchmark inputs, prompt templates,
demonstrations, generation settings, and scoring procedures.
Each model uses its associated tokenizer and chat template.
Qwen3 evaluation sets \texttt{enable\_thinking=False}.
Unless otherwise specified, we use seed 42 and one greedy
completion per question, retaining KVPress's benchmark-specific
output limits and context-truncation rules.

\textbf{RULER and LongBench.}
RULER includes 13 subtasks with 500 examples per subtask
at each reported context length.
LongBench evaluation covers 14 English tasks and two code-completion tasks.
For both benchmarks, we report the unweighted mean of
the task scores.
The six-task LongBench radar plot is used only for visualization
and does not define the aggregate score.

\textbf{AIME25 and Math500.}
We evaluate all 30 and 500 examples from
\texttt{alessiodevoto/aime25} and
\texttt{alessiodevoto/math500}~\citep{ICLR2024_aca97732}, respectively.
We retain KVPress's prompts and maximum output
lengths of 32,000 and 4,096 tokens.
Accuracy for all examples uses the adopted KVPress
implementation's boxed-answer extraction and exact
string matching.

\subsection{Development Set and Hyperparameter Selection}
\label{app:validation-selection}

We used 320 held-out development examples covering LongBench's 16 English and code tasks, sampling 20 examples per task by input length.
Data were drawn from available source-development splits and independent samples for the remaining tasks, and screened for training/test overlap using source IDs and normalized text.
Hyperparameter selection used only Llama-3.1-8B-Instruct on LongBench at 75\% compression, maximizing the unweighted task average under greedy decoding.
The selected hyperparameters were held fixed across all backbones, benchmarks, and budgets.
Each method evaluated at most ten candidate configurations, including the initial setting.
Baseline candidates included their reported best settings and neighboring values, and actual tuning costs were recorded for each method.
The sensitivity analyses in Appendix~\ref{app:hyperparameter-sensitivity} assessed robustness after the hyperparameters were fixed.
Their results did not guide hyperparameter selection or revision.

\section{Additional Experimental Results}
\label{app:additional-experiments}

\subsection{Per-task Results on LongBench and RULER at 75\% Compression}
\label{app:detailed-results}
This section resolves that summary into task-level results at a fixed 75\% compression ratio, reporting every LongBench task and every RULER-16K subtask for Llama-3.1-8B under the retained-KV budget used by the corresponding aggregate comparison.

\begin{table*}[t]
\centering
\caption{Per-task LongBench results for Llama-3.1-8B at 75\% compression. Tasks follow the official LongBench taxonomy; Avg. is the unweighted mean over all 16 tasks.}
\label{tab:longbench-per-task}
\setlength{\tabcolsep}{2.8pt}
\resizebox{\textwidth}{!}{
\begin{tabular}{lccccccccccccccccc}
\toprule
& \multicolumn{3}{c}{\textbf{SD-QA}}
& \multicolumn{3}{c}{\textbf{MD-QA}}
& \multicolumn{3}{c}{\textbf{Summ.}}
& \multicolumn{3}{c}{\textbf{FSL}}
& \multicolumn{2}{c}{\textbf{Synth.}}
& \multicolumn{2}{c}{\textbf{Code}}
& \\
\cmidrule(lr){2-4}
\cmidrule(lr){5-7}
\cmidrule(lr){8-10}
\cmidrule(lr){11-13}
\cmidrule(lr){14-15}
\cmidrule(lr){16-17}
\textbf{Method}
& \textbf{NQA}
& \textbf{Qasper}
& \textbf{MF}
& \textbf{HPQA}
& \textbf{2WQA}
& \textbf{MQue}
& \textbf{GR}
& \textbf{QMS}
& \textbf{MN}
& \textbf{TREC}
& \textbf{TQA}
& \textbf{SSum}
& \textbf{PC}
& \textbf{PR}
& \textbf{LCC}
& \textbf{RBP}
& \textbf{Avg.} \\
\midrule

Full KV
& 30.99 & 47.39 & 56.31 & 59.80 & 51.33 & 33.51
& 35.19 & 25.34 & 26.99
& 29.00 & 91.71 & 40.77
& 10.70 & 100.00
& 53.31 & 47.23
& 46.22 \\

SnapKV
& 29.00 & 30.97 & 38.87 & 53.67 & 41.53 & 29.11
& 29.23 & 22.69 & 23.25
& 34.50 & 91.16 & 41.64
& 9.05 & 95.00
& \textbf{54.76} & 47.70
& 42.01 \\

PyramidKV
& 27.66 & 30.01 & 36.23 & 55.69 & 42.48 & 25.76
& 28.29 & 22.42 & 23.49
& 32.50 & 91.73 & \textbf{42.10}
& 9.05 & 97.50
& 54.73 & 47.40
& 41.69 \\

ExpectedAttention
& 29.77 & 38.94 & 44.37 & 55.93 & 43.02 & \textbf{34.13}
& 28.85 & 23.01 & 24.94
& 54.75 & 91.18 & 37.09
& 8.50 & 96.00
& 40.40 & \textbf{51.12}
& 43.88 \\

KeyDiff
& 29.75 & 32.87 & 42.45 & 53.05 & 39.11 & 25.40
& 29.49 & 23.66 & 22.78
& 51.50 & 91.61 & 40.85
& 11.06 & \textbf{99.50}
& 37.43 & 47.43
& 42.37 \\

CAPKV
& 29.04 & 39.18 & 43.66 & 52.76 & 45.45 & 24.96
& 29.28 & 23.67 & 25.21
& 30.25 & 91.33 & 37.67
& 10.00 & 67.50
& 44.04 & 50.91
& 40.31 \\

IndexMem
& 29.82 & 39.27 & 43.68 & 50.31 & 44.26 & 24.92
& 27.45 & 24.31 & 22.47
& 48.13 & 90.42 & 38.53
& 11.32 & 93.45
& 43.29 & 48.84
& 42.53 \\

MOMENTKV
& 28.00 & 26.97 & 37.42 & 51.77 & 39.90 & 26.89
& 27.86 & 22.14 & 24.01
& 31.50 & \textbf{92.23} & 40.88
& \textbf{12.05} & 95.50
& 52.64 & 46.05
& 40.99 \\

\textbf{CORE}
& \textbf{30.52}
& \textbf{44.29}
& \textbf{51.43}
& \textbf{58.17}
& \textbf{47.52}
& 30.17
& \textbf{32.59}
& \textbf{24.62}
& \textbf{25.62}
& \textbf{55.00}
& 91.46
& 40.62
& 10.00
& 98.00
& 49.33
& 49.50
& \textbf{46.18} \\

\bottomrule
\end{tabular}}
\end{table*}

\begin{table*}[t]
\centering
\vspace{-0.15in}
\caption{
Per-subtask RULER-16K results for Llama-3.1-8B at 75\% compression; Avg. is the unweighted mean over all 13 official subtasks.
}
\label{tab:ruler16k-per-subtask}
\scriptsize
\setlength{\tabcolsep}{2.6pt}
\resizebox{\textwidth}{!}{
\begin{tabular}{lcccccccccccccc}
\toprule
\textbf{Method}
& \textbf{S1}
& \textbf{S2}
& \textbf{S3}
& \textbf{MK1}
& \textbf{MK2}
& \textbf{MK3}
& \textbf{MV}
& \textbf{MQ}
& \textbf{VT}
& \textbf{CWE}
& \textbf{FWE}
& \textbf{QA1}
& \textbf{QA2}
& \textbf{Avg.} \\
\midrule

Full KV
& 100.00 & 100.00 & 100.00
& 99.60 & 100.00 & 98.80
& 99.10 & 99.10 & 99.72
& 89.32 & 90.27
& 81.00 & 57.00
& 93.38 \\

SnapKV
& 79.20 & 92.20 & 2.60
& 71.20 & 15.60 & 2.60
& 50.90 & 57.05 & 69.44
& 35.58 & 84.93
& 32.80 & 37.20
& 48.56 \\

PyramidKV
& 85.80 & 95.20 & 2.40
& 80.00 & 18.60 & 2.80
& 56.20 & 59.30 & 76.84
& 11.26 & 81.33
& 33.40 & 33.80
& 48.99 \\

ExpectedAttention
& \textbf{100.00} & 99.60 & 10.20
& 96.20 & 64.40 & 10.80
& 80.25 & 97.50 & 96.04
& 28.14 & 83.67
& 51.80 & 43.40
& 66.31 \\

KeyDiff
& \textbf{100.00} & \textbf{100.00} & \textbf{100.00}
& \textbf{99.60} & 26.20 & 0.00
& \textbf{99.10} & \textbf{99.45} & 98.64
& 71.52 & 87.80
& 44.80 & 44.60
& 74.75 \\

CAPKV
& \textbf{100.00} & \textbf{100.00} & 78.20
& 97.80 & 14.20 & 0.40
& 94.95 & 96.45 & 99.48
& 37.50 & 54.67
& 62.20 & 44.00
& 67.68 \\

IndexMem
& 99.20 & 87.20 & 33.80
& 70.80 & 97.60 & 3.40
& 78.40 & 70.10 & 95.00
& 88.64 & 83.13
& 75.60 & 55.60
& 72.19 \\

MOMENTKV
& 99.20 & 92.80 & 5.60
& 83.20 & 30.40 & 5.60
& 62.20 & 70.20 & 86.08
& 31.68 & 84.80
& 37.60 & 34.40
& 55.67 \\

CORE
& 95.40 & 91.60 & 34.00
& 86.60 & \textbf{98.20} & \textbf{12.40}
& 75.45 & 69.40 & \textbf{99.52}
& \textbf{89.16} & \textbf{89.07}
& \textbf{77.40} & \textbf{56.60}
& \textbf{74.98} \\

\bottomrule
\end{tabular}}
\vspace{-0.2in}
\end{table*}

\textbf{LongBench.}
Table~\ref{tab:longbench-per-task} reports all 16 tasks across
the six official categories.
CORE achieves the highest overall average among compressed methods,
outperforming the strongest compressed baseline, ExpectedAttention,
by 2.30 points, with gains spanning multiple QA and summarization tasks.

\textbf{RULER-16K.}
Table~\ref{tab:ruler16k-per-subtask} decomposes the same
75\% compression setting across all 13 RULER-16K subtasks:
single-needle retrieval (S1--S3), multi-key retrieval (MK1--MK3),
multi-value retrieval (MV), multi-query retrieval (MQ),
variable tracking (VT), common-word extraction (CWE),
frequent-word extraction (FWE), and question answering (QA1--QA2).
CORE achieves the highest overall average among compressed methods, narrowly ahead
of KeyDiff, while the per-task results show complementary
strengths across different retrieval configurations.

% ------------------------------------------------------------

% ------------------------------------------------------------

\subsection{Ultra-Long-Context Evaluation}
\label{app:ultra-long}

\begin{wraptable}{r}{0.43\textwidth}
\vspace{-0.6in}
\centering
\caption{Ultra-long-context accuracy (\%).}
\label{tab:ultra-long}
\footnotesize
\setlength{\tabcolsep}{2pt}
\renewcommand{\arraystretch}{0.9}
\begin{tabular*}{\linewidth}
{@{\extracolsep{\fill}}lccc@{}}
\toprule
\multirow{2}{*}{\textbf{Method}}
& \multicolumn{3}{c}{\textbf{KV budget}} \\
\cmidrule(l){2-4}
& \textbf{2K} & \textbf{4K} & \textbf{8K} \\
\midrule
Full KV & \multicolumn{3}{c}{32.02} \\
\midrule
SnapKV   & 25.12 & 26.60 & 27.09 \\
EA       & 24.63 & 25.12 & 26.11 \\
KeyDiff  & 25.62 & 26.11 & 25.62 \\
CAPKV    & 25.62 & 26.60 & 27.09 \\
IndexMem & 26.60 & 27.09 & 27.59 \\
\addlinespace[2pt]
\textbf{CORE}
& \textbf{31.03}
& \textbf{32.02}
& \textbf{32.02} \\
\bottomrule
\end{tabular*}
\vspace{-0.3in}
\end{wraptable}

On Llama-3.1-8B, we select 203 of the 503
LongBench-v2 \citep{bai-etal-2025-longbench} examples with complete prompt lengths of
32,768--131,056 tokens, using 4K-token chunked prefill,
greedy decoding, and a 16-token generation limit.
All methods share the same examples and settings;
KV budgets include question tokens and generation reserves.
CORE exceeds IndexMem by 4.43, 4.93, and 4.43 percentage
points across the three budgets.

\subsection{Hyperparameter Sensitivity}
\label{app:hyperparameter-sensitivity}

We examine four hyperparameters: the coverage weight $\beta$,
teacher entropy temperature $\tau_{\mathrm{ent}}$, boundary-loss weight
$\lambda_{\mathrm{bd}}$, and student allocation temperature $\tau_M$.
Experiments use Llama-3.1-8B on RULER-16K at 75\% compression,
with defaults $\beta_0=128$, $\tau_{\mathrm{ent},0}=8$,
$\lambda_{\mathrm{bd},0}=4$, and $\tau_{M,0}=1.0$.
For training-time sensitivity, we retrain CORE with
$\beta\in\{0,64,128,256,512\}$,
$\tau_{\mathrm{ent}}\in\{4,8,16\}$, and
$\lambda_{\mathrm{bd}}\in\{0,2,4,8\}$,
recomputing teacher targets when applicable while holding
training data and optimization budget fixed.
For inference-time sensitivity, we freeze the checkpoint and vary
$\tau_M\in\{0.5,1.0,2.0\}$.
For fixed logits and positive $\tau_M$, temperature scaling
preserves the Top-$B$ ordering and changes only the allocation
probabilities used by the compensation channel.

\begin{figure}[t]
	\centering
\includegraphics[width=\linewidth]{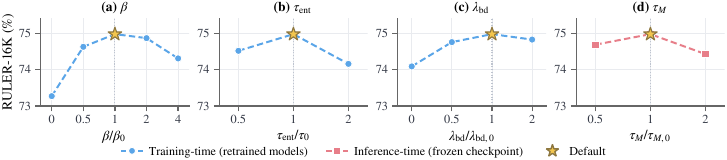}    \vspace{-0.3in}
\caption{
One-at-a-time hyperparameter sensitivity at 75\% compression.
}
	\label{fig:sensitivity}
    \vspace{-0.2in}
\end{figure}

Fig.~\ref{fig:sensitivity} shows that performance is more sensitive
to $\beta$ and $\lambda_{\mathrm{bd}}$: removing coverage or boundary
supervision causes clear degradation, while moderate nonzero values
around the defaults remain competitive. The $\beta$ sweep also suggests
that overly strong coverage weighting can hurt performance, reflecting
a balance between utility and coverage. In contrast, variations in
$\tau_{\mathrm{ent}}$ and $\tau_M$ lead to smaller changes over the
tested ranges. Overall, the default configuration lies within the
high-performing region of the evaluated settings.

\subsection{Shared and Alternative Allocation Strategies}
\label{app:shared-alternative-allocation}

\textbf{Shared versus independently learned write allocations.}
We compare CORE with $\mathrm{CORE}_{\mathrm{ind}}$, which retains
the same retention indexer but initializes a separate
$14\!\rightarrow\!128\!\rightarrow\!128\!\rightarrow\!1$ write scorer
at the start of Stage~II.
Its distribution $r_\psi$ replaces $\pi_\theta^{\mathcal C}$ in
Eq.~\ref{eq:evicted-mass} for write mass and conditional weights,
while Top-$B$ retention continues to use $\pi_\theta^{\mathcal C}$.
The write scorer is trained through $\mathcal L_{\mathrm{mem}}$
under the same Stage~II schedule, with matched data, initialization
of common components, memory capacity, and readout.
CORE outperforms its independent variant across all evaluated
settings (Table~\ref{tab:shared-independent-allocation}).

\begin{wraptable}{r}{0.42\textwidth}
\vspace{-0.3in}
\centering
\scriptsize
\setlength{\tabcolsep}{3pt}
\renewcommand{\arraystretch}{0.80}
\caption{CORE allocation replacement.}
\label{tab:allocation-ablation}
\resizebox{\linewidth}{!}{%
\begin{tabular}{@{}lcccc@{}}
\toprule
\multirow{2}{*}{\textbf{Variant}}
& \multicolumn{2}{c}{\textbf{RULER-4K}}
& \multicolumn{2}{c}{\textbf{RULER-16K}} \\
\cmidrule(lr){2-3}
\cmidrule(l){4-5}
& 75\% & 90\% & 75\% & 90\% \\
\midrule
\textbf{CORE}
& \textbf{80.60} & \textbf{62.44}
& \textbf{74.98} & \textbf{68.48} \\
$\text{CORE}_{\text{EA}}$
& 78.22 & 50.39 & 70.65 & 56.87 \\
$\text{CORE}_{\text{KeyDiff}}$
& 76.78 & 61.53 & 74.87 & 67.16 \\
\bottomrule
\end{tabular}%
}
\vspace{-0.2in}
\end{wraptable}

\textbf{Compatibility with alternative scoring rules.}
We replace $\pi_\theta^{\mathcal C}$ with distributions derived from
EA or KeyDiff scores for both Top-$B$ retention and evicted-mass
redistribution.
For each KV head, we normalize
$a_{h,i}^{1/T_{\mathrm{EA}}}$ or
$\exp(s_{h,i}/T_{\mathrm{KeyDiff}})$ over $\mathcal C$ and
average across KV heads.
Sink protection is applied separately, all-zero EA scores use a
uniform distribution, and temperatures are selected following
Appendix~\ref{app:validation-selection}.
Memory maps and gates are retrained under matched data,
initialization, optimization, and KV budgets.
CORE leads in all four settings
(Table~\ref{tab:allocation-ablation}).
The gap over $\mathrm{CORE}_{\mathrm{EA}}$ becomes particularly
pronounced at 90\% compression, while
$\mathrm{CORE}_{\mathrm{KeyDiff}}$ remains closer in several settings.
Together with the shared-versus-independent comparison, these results
help distinguish the roles of the shared retention--writing interface
and CORE's learned allocation.

\begin{table}[!htbp]
\centering
\vspace{-0.2in}
\caption{Training-source sensitivity on
Llama-3.1-8B.
Code Avg.\ averages LCC and RepoBench-P.
Higher scores are better; column-wise best results are bold.}
\label{tab:training_source_sensitivity}
\small
\setlength{\tabcolsep}{10pt}
\renewcommand{\arraystretch}{0.77}
\begin{tabular*}{\linewidth}
{@{\extracolsep{\fill}}lcccccc@{}}
\toprule
\multirow{2}{*}{\textbf{Training Source}}
& \multicolumn{2}{c}{\textbf{RULER-4K}}
& \multicolumn{2}{c}{\textbf{RULER-16K}}
& \textbf{LongBench}
& \textbf{Code Avg.} \\
\cmidrule(lr){2-3}
\cmidrule(lr){4-5}
\cmidrule(lr){6-6}
\cmidrule(l){7-7}
& 75\% & 90\% & 75\% & 90\% & 75\% & 75\% \\
\midrule
LongAlpaca-12k
& \textbf{80.60} & 62.44
& \textbf{74.98} & \textbf{68.48}
& \textbf{46.18} & 49.42 \\
OpenMathReasoning
& 80.18 & \textbf{62.69}
& 74.63 & 68.16
& 45.61 & 47.38 \\
SWE-smith-trajectories
& 79.72 & 61.83
& 74.41 & 67.54
& 45.83 & \textbf{50.12} \\
\bottomrule
\end{tabular*}
\vspace{-0.2in}
\end{table}

\subsection{Sensitivity to Training Data and Randomness}
\label{app:training-sensitivity}
\label{app:training-seed}

\textbf{Training-source sensitivity and downstream transfer.}
Using Llama-3.1-8B, we train CORE separately on LongAlpaca-12k, OpenMathReasoning~\citep{moshkov2025aimo2}, and SWE-smith-trajectories~\citep{yang2025swesmith}. OpenMathReasoning's CoT split provides mathematical problem--solution sequences. SWE-smith's tool split provides software-engineering trajectories with chronologically serialized messages, tool calls, and observations. Following Appendix~\ref{app:d.2}, we freeze the backbone and train the indexer and compensation modules with matched initializations and input-length distributions, regenerating coverage-teacher targets per source. Final checkpoints are evaluated following Appendix~\ref{app:d.3} on RULER-4K/16K at 75\%/90\% compression and all 16 LongBench tasks at 75\% compression, reporting the LCC/RepoBench-P average for code completion.
Table~\ref{tab:training_source_sensitivity} shows that
CORE maintains comparable aggregate performance across
general, mathematical, and software-engineering training sources.
Across the four RULER settings and the 16-task LongBench average,
deviations from LongAlpaca-12k are at most 0.94 points.
LongAlpaca-12k achieves the highest LongBench average and leads
three of four RULER settings, while SWE-smith improves the
code-subset average (50.12 vs.\ 49.42), indicating potential
for targeted domain adaptation.

\textbf{Training-seed sensitivity.}
To assess the trainable-baseline comparison's robustness
to training randomness, we independently retrained
Locret, TRIM-KV, KVP, IndexMem, and CORE on LongAlpaca-12k
with three seeds $\{42,43,44\}$, using
Mistral-7B-v0.2 and Llama-3.1-8B.
Each run used the same frozen pretrained backbone
and reinitialized method-specific trainable modules.
CORE and IndexMem repeated both training stages
independently per seed.
Across seeds, the training examples, held-out development set,
and each method's architecture, hyperparameters, and optimization
budget remained fixed; each run was evaluated at its final checkpoint.
Seeds varied initialization and training stochasticity,
including KVP policy sampling.
CORE's random projections, prototype bank, and Hadamard
rotation remained fixed.
Evaluation followed Appendix~\ref{app:d.3}, using identical
RULER-4K/16K instances, 75\%/90\% compression, and greedy decoding.
Table~\ref{tab:training-seed} illustrates three-run means and
sample standard deviations, with CORE leading eight
settings by 2.54--11.54 points over the strongest baseline.
CORE's standard deviations range from 0.04 to 0.27 points.

\begin{table}[!htbp]
\centering
\vspace{-0.2in}
\caption{Training-seed sensitivity on RULER.}
\label{tab:training-seed}
\small
\setlength{\tabcolsep}{4pt}
\renewcommand{\arraystretch}{0.85}

\begin{tabular*}{\linewidth}
{@{\extracolsep{\fill}}llcccc@{}}
\toprule
\multirow{2}{*}{\textbf{Model}}
& \multirow{2}{*}{\textbf{Method}}
& \multicolumn{2}{c}{\textbf{RULER-4K}}
& \multicolumn{2}{c}{\textbf{RULER-16K}} \\
\cmidrule(lr){3-4}
\cmidrule(l){5-6}
& & 75\% & 90\% & 75\% & 90\% \\
\midrule

\multirow{5}{*}{\shortstack[l]{Mistral-7B}}
& Locret
& $63.72_{\pm{0.26}}$ & $37.03_{\pm{0.09}}$
& $49.85_{\pm{0.35}}$ & $35.53_{\pm{0.03}}$ \\
& TRIM-KV
& $68.24_{\pm{0.35}}$ & $42.89_{\pm{0.07}}$
& $65.95_{\pm{0.05}}$ & $53.39_{\pm{0.07}}$ \\
& KVP
& $66.54_{\pm{0.32}}$ & $42.80_{\pm{0.29}}$
& $61.80_{\pm{0.06}}$ & $48.68_{\pm{0.35}}$ \\
& IndexMem
& $68.95_{\pm{0.16}}$ & $43.19_{\pm{0.12}}$
& $66.41_{\pm{0.09}}$ & $53.89_{\pm{0.16}}$ \\
& \textbf{CORE}
& $\mathbf{73.52}_{\pm{0.27}}$
& $\mathbf{54.73}_{\pm{0.13}}$
& $\mathbf{69.02}_{\pm{0.21}}$
& $\mathbf{56.43}_{\pm{0.19}}$ \\
\midrule

\multirow{5}{*}{\shortstack[l]{Llama-3.1-8B}}
& Locret
& $70.55_{\pm{0.22}}$ & $42.82_{\pm{0.09}}$
& $65.72_{\pm{0.28}}$ & $39.42_{\pm{0.26}}$ \\
& TRIM-KV
& $76.82_{\pm{0.07}}$ & $53.66_{\pm{0.10}}$
& $71.73_{\pm{0.10}}$ & $57.67_{\pm{0.29}}$ \\
& KVP
& $75.57_{\pm{0.22}}$ & $52.16_{\pm{0.34}}$
& $71.85_{\pm{0.16}}$ & $55.11_{\pm{0.18}}$ \\
& IndexMem
& $77.22_{\pm{0.07}}$ & $54.02_{\pm{0.23}}$
& $72.31_{\pm{0.20}}$ & $57.98_{\pm{0.04}}$ \\
& \textbf{CORE}
& $\mathbf{80.48}_{\pm{0.14}}$
& $\mathbf{62.51}_{\pm{0.12}}$
& $\mathbf{74.96}_{\pm{0.04}}$
& $\mathbf{68.57}_{\pm{0.14}}$ \\
\bottomrule
\end{tabular*}
\vspace{-0.1in}
\end{table}

\subsection{Full Efficiency Scaling and Parameter Cost}
\label{app:efficiency-details}

\begin{table}[htbp]
\centering

% ==================== Left: Latency ====================
\begin{minipage}[t]{0.49\linewidth}
\vspace{0pt}
\vspace{-0.2in}
\centering
\scriptsize
\setlength{\tabcolsep}{4pt}
\renewcommand{\arraystretch}{0.8}
\caption{Decode-only latency (ms/token).}
\label{tab:runtime-scaling}

\begin{tabular*}{\linewidth}
{@{\extracolsep{\fill}}llrrrr@{}}
\toprule
\textbf{Compression}
& \textbf{Method}
& \textbf{8K}
& \textbf{16K}
& \textbf{32K}
& \textbf{64K} \\
\midrule

\multirow{3}{*}{50\%}
& CAPKV    & 36.35 & 37.28 & 39.29 & 44.10 \\
& IndexMem & 39.92 & 40.93 & 43.39 & 49.15 \\
& CORE     & 34.69 & 35.59 & 37.27 & 41.41 \\
\midrule

\multirow{3}{*}{75\%}
& CAPKV    & 35.82 & 36.27 & 37.24 & 39.26 \\
& IndexMem & 39.33 & 39.91 & 40.95 & 43.32 \\
& CORE     & 34.36 & 34.73 & 35.55 & 37.29 \\
\midrule

\multirow{3}{*}{90\%}
& CAPKV    & 35.61 & 35.84 & 36.08 & 36.82 \\
& IndexMem & 39.02 & 39.28 & 39.62 & 40.49 \\
& CORE     & 34.10 & 34.21 & 34.55 & 35.22 \\
\bottomrule
\end{tabular*}
\end{minipage}\hfill
% ==================== Right: Memory ====================
\begin{minipage}[t]{0.49\linewidth}
\vspace{0pt}
\centering
\scriptsize
\vspace{-0.2in}
\setlength{\tabcolsep}{4pt}
\renewcommand{\arraystretch}{0.8}
\caption{Peak dynamic memory (GiB).}
\label{tab:memory-scaling}

\begin{tabular*}{\linewidth}
{@{\extracolsep{\fill}}llrrrr@{}}
\toprule
\textbf{Compression}
& \textbf{Method}
& \textbf{8K}
& \textbf{16K}
& \textbf{32K}
& \textbf{64K} \\
\midrule

\multirow{3}{*}{50\%}
& CAPKV    & 2.21 & 3.52 & 6.14 & 11.39 \\
& IndexMem & 2.62 & 3.96 & 6.65 & 12.02 \\
& CORE     & 2.58 & 3.91 & 6.57 & 11.88 \\
\midrule

\multirow{3}{*}{75\%}
& CAPKV    & 1.55 & 2.21 & 3.52 & 6.14 \\
& IndexMem & 1.95 & 2.62 & 3.96 & 6.65 \\
& CORE     & 1.92 & 2.58 & 3.91 & 6.57 \\
\midrule

\multirow{3}{*}{90\%}
& CAPKV    & 1.18 & 1.43 & 1.96 & 3.02 \\
& IndexMem & 1.57 & 1.82 & 2.37 & 3.46 \\
& CORE     & 1.54 & 1.79 & 2.33 & 3.41 \\
\bottomrule
\end{tabular*}
\end{minipage}

\end{table}

We evaluate Qwen3-14B using bfloat16 KV states and matched
retained-KV budgets. For prefill cache length $N$ and compression ratio $r$, the fixed budget
$B=\lfloor(1-r)N\rfloor$ includes protected tokens, and
$1\mathrm{K}=1024$.
Table~\ref{tab:runtime-scaling} reports mean per-token decoding
latency across 128 generated tokens, including boundary recompression
while excluding prefill and initial cache compression.
CORE is fastest throughout, especially at longer contexts.
At 64K, CORE requires 41.41, 37.29, and 35.22 ms/token at
50\%, 75\%, and 90\% compression, versus 44.10, 39.26, and 36.82
for CAPKV and 49.15, 43.32, and 40.49 for IndexMem.

Table~\ref{tab:memory-scaling} reports peak dynamic memory for
parameters, retained KV buffers, auxiliary states, and runtime
workspace, excluding frozen-backbone weights.
Measurements cover 1,024 post-compression decoding tokens,
with recompression every 128 tokens and prefill excluded.
Bfloat16 KV costs 160 KiB per position~\citep{yang2025qwen3},
while compaction temporarily retains both KV buffers.
Memory scales roughly linearly with retained context, falling
under stronger compression.
CAPKV has the smallest footprint, adding no learned parameters
or substantial auxiliary state.
CORE uses slightly less memory than IndexMem:
6.57 versus 6.65 GiB at 64K and 75\% compression.

\begin{wraptable}{r}{0.5\textwidth}
\centering
\vspace{-0.3in}
\scriptsize
\setlength{\tabcolsep}{4pt}
\renewcommand{\arraystretch}{0.8}
\caption{Additional parameter and runtime costs.}
\label{tab:parameter-cost}
\resizebox{\linewidth}{!}{%
\begin{tabular}{lccc}
\toprule
\textbf{Method}
& \textbf{Add. Params.}
& \textbf{Aux. State}
& \textbf{Online Overhead} \\
\midrule
CAPKV
& 0.00M
& $\approx 0$ GiB
& $+12.8\%$ \\
IndexMem
& $\approx 40$M
& 0.27 GiB
& $+24.1\%$ \\
CORE
& $\approx 40$M
& 0.22 GiB
& $+7.7\%$ \\
\bottomrule
\end{tabular}%
}
\end{wraptable}

Table~\ref{tab:parameter-cost} summarizes the parameter and online costs at 32K context and 75\% compression.  Auxiliary state excludes parameter storage, and overhead is measured against a 33.00 ms/token backbone-plus-compressed-attention reference. CORE's 0.22 GiB auxiliary state is smaller than IndexMem's 0.27 GiB, and its 7.7\% online overhead is the lowest among the three methods. CORE combines low latency with memory usage close to IndexMem while supporting coverage-aware retention and evicted-information compensation.

\begin{wraptable}{r}{0.60\textwidth}
\vspace{-0.3in}
\centering
\caption{Prefill latency and GPU-resident TTFT (s)
for Qwen3-14B at 75\% compression.}
\label{tab:prefill-ttft}
\scriptsize
\setlength{\tabcolsep}{2pt}
\renewcommand{\arraystretch}{0.95}
\begin{tabular*}{\linewidth}
{@{\extracolsep{\fill}}llrrrr@{}}
\toprule
Metric & Method & 8K & 16K & 32K & 64K \\
\midrule
\multirow{3}{*}{\shortstack[l]{Prefill\\latency}}
 & CAPKV    & 1.257 & 2.914 & 7.233 & 21.127 \\
 & IndexMem & 1.274 & 3.139 & 8.062 & 23.031 \\
 & CORE     & 1.388 & 3.597 & 9.402 & 27.562 \\
\midrule
\multirow{3}{*}{TTFT}
 & CAPKV    & 1.260 & 2.917 & 7.236 & 21.130 \\
 & IndexMem & 1.277 & 3.144 & 8.068 & 23.036 \\
 & CORE     & 1.393 & 3.602 & 9.406 & 27.567 \\
\bottomrule
\end{tabular*}
\vspace{-8pt}
\end{wraptable}
We compare CAPKV, IndexMem, and CORE on Qwen3-14B using
a single NVIDIA A100 80GB GPU, bfloat16 precision,
batch size 1, 75\% compression, identical 8K--64K-token
inputs, and the same backbone attention backend.
Table~\ref{tab:prefill-ttft} reports prefill latency,
covering prompt processing, scoring, and cache preparation,
including CORE's relation and coverage feature extraction,
indexer evaluation, Top-$B$ selection, cache compaction,
and initial memory writing.
CORE's additional overhead is concentrated in initial
prompt processing and cache preparation; final-position
vocabulary projection and greedy token selection add
only 4--5 ms once the cache is ready.
This upfront computation supports more effective cache
retention and memory utilization, representing a reasonable
trade-off when considered alongside task-performance gains.

\section{Limitations}
\label{app:statements}

Although CORE performs well across three backbones and multiple
KV budgets, several questions remain. The stability analysis
focuses on a fixed query; extending it to full autoregressive
trajectories remains future work. The largest tested backbone
has 14B parameters, so performance at larger scales remains
untested. Efficiency measurements cover specific context lengths,
hardware, and recompression schedules, and the resulting
trade-offs may vary in other configurations. Future work will
extend the analysis and evaluate larger models under broader
inference settings.

\end{document}

%% file: math_commands.tex
\usepackage{amsmath,amsfonts,bm}

\def\eqref#1{equation~\ref{#1}}
\def\1{\bm{1}}

\DeclareMathAlphabet{\mathsfit}{\encodingdefault}{\sfdefault}{m}{sl}
\SetMathAlphabet{\mathsfit}{bold}{\encodingdefault}{\sfdefault}{bx}{n}